\renewcommand{\theenumi}{\roman{enumi}}
\setlist[enumerate]{label={\rm(\theenumi)}}
\numberwithin{equation}{section}
\theoremstyle{plain}
\newtheorem{theorem}{Theorem}[section]
\newtheorem{proposition}[theorem]{Proposition}
\newtheorem{corollary}[theorem]{Corollary}
\newtheorem{lemma}[theorem]{Lemma}
\theoremstyle{definition}
\theoremstyle{remark}
\newtheorem{remark}[theorem]{Remark}
\newtheoremstyle{key}{3pt}{3pt}{}{}{\itshape}{:}{.5em}{}
\theoremstyle{key}
\DeclareMathOperator*{\esssup}{ess\,sup}
\DeclareMathOperator*{\essinf}{ess\,inf}
\renewcommand{\mid}{\,\vert\,}
\newcommand{\bigv}{\!\bigm\vert\!}
\newcommand{\Bigv}{\!\Bigm\vert\!}
\newcommand{\biggv}{\!\biggm\vert\!}
\newcommand{\nbd}[1]{$#1$\nobreakdash-\hspace{0pt}}
\newcommand{\indi}[1]{\mathbf{1}_{#1}}
\newcommand{\N}{\mathbb{N}}
\newcommand{\R}{\mathbb{R}}
\newcommand{\F}{{\mathscr F}}
\newcommand{\T}{{\mathscr T}}
\newcommand{\cP}{{\cal P}}
\newcommand{\ubar}[1]{\underaccent{\bar}{#1}}
\title{Preemptive Investment under Uncertainty}
\author{Jan-Henrik Steg\thanks{Center for Mathematical Economics, Bielefeld University, Germany. \texttt{jsteg@uni-bielefeld.de}
\protect\\
Financial support by the German Research Foundation (DFG) via grant Ri 1142-4-2 is gratefully acknowledged.
}}
\date{\small This version: May 28, 2016}
\begin{document}
\maketitle

\begin{abstract}
This paper provides a general characterization of subgame perfect equilibria for strategic timing problems, where two firms have the (real) option to make an irreversible investment. Profit streams are uncertain and depend on the market structure. The analysis is based directly on the inherent economic structure of the model. In particular, the determination of equilibria with preemptive investment is reduced to solving a single class of constrained optimal stopping problems. The general results are applied to typical state-space models, completing commonly insufficient equilibrium arguments, showing when uncertainty leads to qualitatively different behavior, and establishing additional equilibria that are Pareto improvements.

\bigskip
\noindent
\emph{Keywords}:
Preemption, real options, irreversible investment, equilibrium, optimal stopping.

\bigskip
\noindent
\emph{JEL subject classification}:
C61, C73, D21, D43, L12, L13
\end{abstract}

\section{Introduction}\label{sec:intro}

Preemption is a well-known phenomenon in the context of irreversible investment. In their seminal paper, \cite{FudenbergTirole85} argue that the commitment power of irreversibility and subgame perfectness together imply that any firm which is the first to adopt a new technology in some industry can deter adoption by another firm; the second adopter's benefits will be reduced by competition and thus not worth the immediate adoption cost. In consequence, the firms try to preempt each other to win the (temporary) monopoly profit.\footnote{%
This effect does not appear in simple Nash equilibria as studied by \cite{Reinganum81}, where firms precommit to adoption times.
}

Such preemption is particularly remarkable when it is costly. In their deterministic model, \cite{FudenbergTirole85} assume that the adoption cost decreases over time, which generates an incentive to delay adoption and thus a conflict with the preemption impulse. Another possibility is to introduce uncertainty, so that the option value of investing only in sufficiently good states would make the firms want to wait. There is already a sizable literature on such real-option games that argues for the drastic impact of competition on the valuation of real options, typically using the principles of \cite{FudenbergTirole85}.

However, transferring arguments from the deterministically evolving ``state'' time to a state with stochastic dynamics is only possible to some extent and in fact the arguments often remain incomplete. Analogies are typically drawn for certain reduced-form value functions on the state space, which standard Markovian state dynamics conveniently allow to derive, but current values at different states convey less information than discounted values. Discounting is simple in a deterministic model but difficult to visualize if it is random which critical state is reached next and when.

Figure \ref{fig:FT} shows the discounted (to time $t=0$) values of the firms in the model of \cite{FudenbergTirole85} if the first adoption happens at $t\geq 0$. If a single firm is the first to adopt, its value is $L(t)$ and that of the other firm $F(t)$; the value of simultaneous adopters is $M(t)$. The strategic structure is quite clear: Initially it is optimal to wait, to benefit from the increase in $L$ if the opponent does not adopt and from $F>L>M$ else, then there is a phase with first-mover advantage $L>F$ that may induce preemption, and all payoffs are eventually identical and decreasing, so that adoption becomes dominant. 

\begin{figure}[ht!]%
  \centering
  \subfloat{\includegraphics{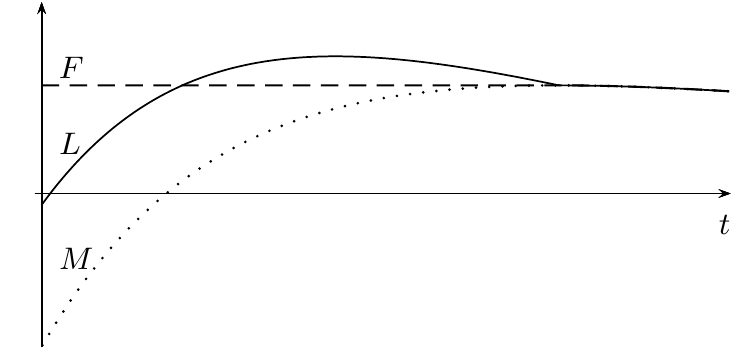}}
  \subfloat{\includegraphics{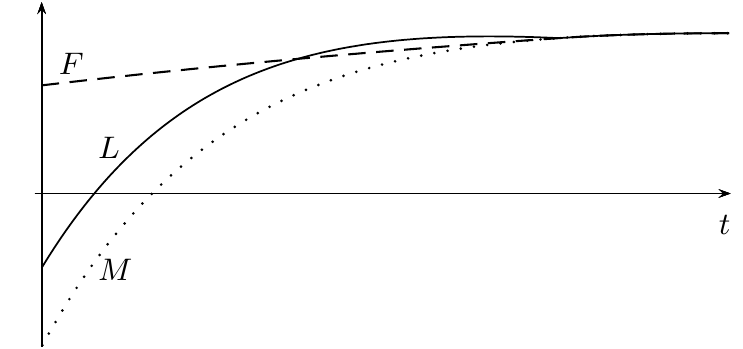}}
  \caption{Two parameterizations of the model of \cite{FudenbergTirole85}.\protect\footnotemark}
  \label{fig:FT}
\end{figure}
\footnotetext{Left: $\pi_0(0)=\pi_0(1)=0$, $\pi_1(1)=0.03$, $\pi_1(2)=0.012$, $r=0.02$, $c(t)=e^{-(r+a)t}$, $a=0.08$. Right: same, except $\pi_0(0)=0.006$, $\pi_1(1)=0.022$.}

\begin{figure}[ht!]%
  \centering
  \subfloat{\includegraphics{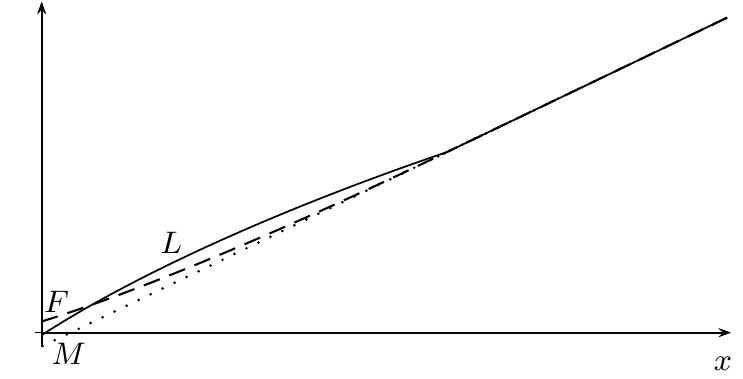}}
  \subfloat{\includegraphics{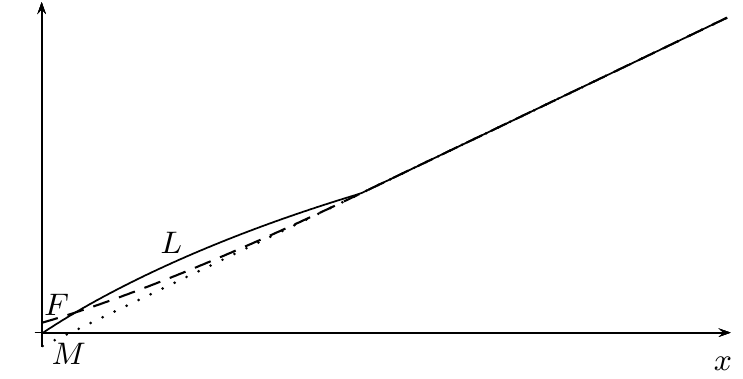}}
  \caption{A typical stochastic model and its deterministic limit.\protect\footnotemark}
  \label{fig:GBM}
\end{figure}
\footnotetext{Model from Section \ref{subsec:PK} with $D_{00}=D_{01}=0$, $D_{10}=2.5$, $D_{11}=I^1=I^2=1$, $r=0.1$, $\mu=0.08$, $\sigma=0.2$ (left) and $\sigma=0$ (right); cf.\ also fn.\ \ref{fn:nest} on the deterministic limit.}

Only the same local orders can be seen in Figure \ref{fig:GBM}, showing \emph{current} values as functions of the stochastic state $x$ for a typical model, but not the dynamics of discounted values. In the right panel the volatility is set to zero, so one can add discounting while $x$ grows according to its law over the shown range, which yields again the left panel of Figure \ref{fig:FT}. After discounting, e.g.\ $F$ is concave, although it is convex in the state. Figure \ref{fig:GBM} reveals the structure of the state space concerning first- or second-mover advantages, but for a complete equilibrium analysis the dynamics of discounted values need to be studied in greater detail than is often the case.

Here we formulate a strategic investment model based on revenue streams that keeps the stochastic dimension completely general, to establish the structure of subgame perfect equilibria by elementary arguments with immediate economic meaning. By directly comparing revenue streams and implied opportunity costs, the verification of equilibria with preemption is reduced to solving a single class of non-strategic optimal stopping problems for one firm. Thereby, not only incomplete arguments for typical equilibria are amended, but the unified view also provides more detailed economic insights into their structure; many economically quite diverse models from the literature can be nested. As mutual preemption destroys value, we also establish some principles in the general framework for when it can be avoided, and we identify times when it is impossible to delay investment in equilibrium.

Alongside, important general questions for equilibria of real-option games are addressed, such as:
\begin{itemize}
\item
At what times is there a first-mover advantage for both firms that they may fight for by trying to preempt each other?
\item
When and how is the first investment affected by a threat of preemption?
\item
Will a firm ever want to invest when it has a second-mover advantage?
\end{itemize}
Answers to these questions will be found by studying appropriate optimal stopping problems.

By applying the general results to two typical state-space models from the literature, those of \cite{Grenadier96} and \cite{PawlinaKort06}, we show how they complete insufficient equilibrium arguments for these and similar other models, and we identify some neglected equilibrium behavior that can qualitatively distinguish such stochastic models from deterministic ones. The common behavior in equilibrium is that the first investment occurs once the state reaches a critical value, such as a certain level of demand. That threshold depends on strategic considerations, but there is no risk to wait for it as no firm can find it profitable to invest before. More complex behavior can result for higher state levels that are too profitable to deter a follower but where profitability is still growing in the state. Then waiting for the most profitable state bares a risk in stochastic models, that of investment due to a first-mover advantage at lower levels, implying feedback effects. We further identify some neglected equilibria for each model that may be Pareto improvements and thus more plausible.

More generally, some models that can be nested here are the deterministic ones of \cite{Reinganum81} and \cite{FudenbergTirole85}, the stochastic model of \cite{MasonWeeds10}, where revenue is linear in a geometric Brownian motion, as in the model of \cite{PawlinaKort06}, who add asymmetry in investment costs, which is further extended to an exponential L\'evy process by \cite{BoyarchenkoLevendorskii14}; the model of \cite{Weeds02} includes Poisson arrivals of R{\&}D success and the model of \cite{Grenadier96} includes a construction delay, but they are both formally equivalent to a symmetric setting with geometric Brownian motion again. 

The paper is organized as follows. The general model is presented in Section \ref{sec:model}. Section \ref{sec:eql} characterizes equilibria with and without preemption, providing sufficient and some necessary equilibrium conditions. The implications for typical state-space models are illustrated in Section \ref{sec:example}. Further specific arguments for such models help to analyze their equilibria in detail, in particular so far neglected aspects. Section \ref{sec:conc} concludes. Some technical results are collected in Appendix \ref{app:add} and all other proofs in Appendix \ref{app:proof}.

\section{The model}\label{sec:model}

Consider two firms $i\in\{1,2\}$ that each can choose when to make one irreversible investment. For instance, firm $i$ may wish to enter some new market, or to improve present operations by updating technology or expanding production capacity. If the firms' markets are related or even the same, then each firm's investment has a potential effect on both firms' revenues. Therefore assume that as long as no firm has invested, the revenues that any firm $i$ incurs per period are given by some stochastic process $(\pi^{0i}_t)$. If firm $i$ invests before its opponent, its revenues switch to the process $(\pi^{Li}_t)$, whereas if the opponent invests first, firm $i$'s revenues switch to the process $(\pi^{Fi}_t)$. Once both firms have invested, firm $i$'s revenues follow the process $(\pi^{Bi}_t)$. The revenues $\pi^{Li}_\cdot$ and $\pi^{Bi}_\cdot$ that apply after firm $i$'s investment are understood net of any capitalized investment cost.\footnote{%
Any investment cost that is strictly decreasing in discounted terms, like $c(t)$ in \cite{FudenbergTirole85}, can be capitalized by a change of variable defined by $c(t)=e^{-ry}=\int_y^\infty e^{-rz}r\,dz$. If the discounted investment cost is stochastic and strictly decreasing in expectation (a strict supermartingale), then one can use the monotone part of the Doob-Meyer decomposition in place of $c(t)$. 
}
All revenues are already discounted to time $0$ units.

Time is continuous, $t\in\R_+$, so only accrued revenues in intervals of time matter. Therefore assume the revenues to be product-measurable w.r.t.\ a given probability space $\bigl(\Omega,\F,P\bigr)$ and the time domain. Assume them in fact to be $P\otimes dt$-integrable, i.e.\ $E\bigl[\int_0^\infty\bigl\lvert\pi^{0i}_t\bigr\rvert\,dt\bigr]<\infty$ etc., to ensure finite expectations throughout. Any (in-)equalities between revenue processes are correspondingly understood to hold \nbd{P\otimes dt}a.e.~-- and any between random variables \nbd{P}a.s. 

Dynamically revealed information about the state of the world is represented by a filtration $\mathbb{F}=(\F_t)$ satisfying the \emph{usual conditions} of right-continuity and completeness. Assume that the past revenues (potentially) accrued up to any $t\in\R_+$ are $\F_t$-measurable, i.e.\ the processes $(\int_0^t\pi^{0i}_s\,ds)$ etc.\ are \emph{adapted} to $\mathbb{F}$.\footnote{\label{fn:pm}%
This property holds e.g.\ if the processes $(\pi^{0i}_t)$ etc.\ are \emph{progressively measurable}, i.e.\ if the restricted mappings $\pi^{0i}_\cdot\colon\Omega\times[0,T]\to\R$ etc.\ are $\F_T\otimes\mathcal{B}([0,T])$-measurable for all $T\in\R_+$.
}

As a further economic assumption, the following orders are imposed on the revenues. First, any investment by a single firm rather harms the revenue of the opponent. For both $i=1,2$ let thus $\pi^{Li}_\cdot\geq\pi^{Bi}_\cdot$ (e.g.\ as the first investor loses a monopoly premium when the laggard invests) and also $\pi^{0i}_\cdot\geq\pi^{Fi}_\cdot$ (e.g.\ as the first investor steals some business from the laggard). The special case $\pi^{0i}_\cdot\equiv\pi^{Fi}_\cdot$ is typical for market entry models and has additional implications that will be pointed out frequently.

Second, firm $2$ rather has a disadvantage in the sense of smaller investment gains relative to being laggard, formally $\pi^{B2}_\cdot-\pi^{F2}_\cdot\leq\pi^{B1}_\cdot-\pi^{F1}_\cdot$ and $\pi^{L2}_\cdot-\pi^{F2}_\cdot\leq\pi^{L1}_\cdot-\pi^{F1}_\cdot$. That disadvantage arises e.g.\ from a higher capitalized investment cost. Given the first part of the disadvantage, that firm $2$'s investment gain as laggard is at most that of firm $1$, the second part would also obtain if $\pi^{Li}_\cdot-\pi^{Bi}_\cdot$ was not greater for firm $2$ than for firm $1$, which is the first investor's revenue loss due to the laggard's investment.

\subsection{The investment timing game}\label{subsec:game}

The firms' investment timing decisions are strategic if some firm's investment indeed affects the other firm's payoff, i.e.\ if $\{\pi^{Li}_\cdot>\pi^{Bi}_\cdot\}$ or $\{\pi^{0i}_\cdot>\pi^{Fi}_\cdot\}$ have positive measure for some $i\in\{1,2\}$. We model the problem as a dynamic game in continuous time. 

As continuous time is not well ordered, it is not possible to define consistent outcomes if one lets the firms choose between the actions ``invest'' and ``wait'' at all times as in a game in extensive form, unless one adds restrictions such as reaction lags (see \citealp{SimonStinchcombe89}, or \citealp{Alos-FerrerRitzberger08}). We follow the typical approach for timing games and let the firms form ``plans of action'', which are dates when to invest \emph{if no other firm invests before}. Any firm whose plan is minimal invests at that planned date, which thus resolves the strategic conflict of who invests first (cf.\ \citealp{FudenbergTirole85}, or \citealp{Larakietal05}). The actual investment date of a firm whose plan is not minimal is determined conditionally at the first investor's date, as an  optimal reaction to the changed history. 

A plan of action is pursued as long as the trivial action history is observed, that no firm has invested. Here plans may be contingent on other dynamic information: about the uncertain state. The fundamental mathematical concept to determine a state-depend date dynamically by the filtration $\mathbb{F}$ is a \emph{stopping time}, a random variable $\tau\colon\Omega\to\R_+\cup\{\infty\}:=[0,\infty]$ satisfying $\{\tau\leq t\}\in\F_t$ for all $t\in\R_+$. The information at any time $t$ must tell if the date $\tau$ (the choice ``stop'') has been reached or not. Let $\T$ denote the set of all stopping times and thus plans. 

Plans of action also have to be formed off the path of play to support subgame perfectness, in particular past an initial plan. Therefore we also consider any stopping time $\vartheta\in\T$ as a potential date at which no firm has invested, yet, and thus as the beginning of a subgame for which plans of action are needed \citep[cf.][]{RiedelSteg14}. The latter cannot be assembled in a measurable way from ($\F_t$-measurable) plans at deterministic times $t\in\R_+$, unless $\F$ is countable. Nevertheless, plans shall be consistent.

A \emph{strategy} for firm $i$ in the timing game is thus a family of plans $\bigl\{\tau_\vartheta^i,\vartheta\in\T\bigr\}$, which is required to satisfy the feasibility condition $\tau_\vartheta^i\geq\vartheta$ for all $\vartheta\in\T$, i.e.\ a plan cannot date in the past, and the time consistency condition $\tau_\vartheta^i=\tau_{\vartheta'}^i$ on the event $\{\vartheta'\leq\tau_\vartheta^i\}$ for any two $\vartheta\leq\vartheta'\in\T$, i.e.\ a plan is not revised while it is not reached, yet. In particular, if the beginnings of two subgames agree in some states, then the plans must agree there, too.

\subsection{Payoffs and equilibrium}\label{subsec:payoff}

To determine an optimal reaction for a firm whose plan is not minimal and thus the resulting conditional payoffs at the time of the first investment, suppose that the opponent of firm $i$ is the first investor at arbitrary $\tau\in\T$. Then firm $i$ may invest at any stopping time $\tau'\geq\tau$, aiming to attain the conditional \emph{follower} payoff
\begin{align}\label{F}
F^i(\tau)&=\int_0^\tau\pi^{0i}_s\,ds+\esssup_{\tau'\geq\tau}E\biggl[\int_\tau^{\tau'}\pi^{Fi}_s\,ds+\int_{\tau'}^\infty\pi^{Bi}_s\,ds\biggv\F_\tau\biggr].\protect\footnotemark 
\end{align}
\footnotetext{%
A random variable is measurable w.r.t.\ $\F_\tau=\{A\in\F\mid\forall t\in\R_+\colon A\cap\{\tau\leq t\}\in\F_t\}$ if its value is known whenever $\tau$ has occured. The value of a stochastic process at $\tau$ is an $\F_\tau$-measurable random variable if the process is progressively measurable (cf.\ fn.\ \ref{fn:pm}), which holds for $(\int_0^t\pi^{0i}_s\,ds)$ by adaptedness and path continuity.
}%
By continuity and integrability of the process $\int_\tau^\cdot\pi^{Fi}_s\,ds+\int_\cdot^\infty\pi^{Bi}_s\,ds$ to be stopped, there exists a \emph{latest} optimal~-- thus uniquely defined~-- stopping time $\tau^i_F(\tau)\in\T$ attaining the value $F^i(\tau)$.

Now suppose on the contrary that firm $i$ is the first investor at $\tau\in\T$. Then the other firm $j\in\{1,2\}\setminus\{i\}$ is assumed to follow suit at $\tau^j_F(\tau)$ to realize $F^j(\tau)$, which yields firm $i$ the conditional \emph{leader} payoff
\begin{equation}\label{L}
L^i(\tau)=\int_0^\tau\pi^{0i}_s\,ds+E\biggl[\int_\tau^{\tau^j_F(\tau)}\pi^{Li}_s\,ds+\int_{\tau^j_F(\tau)}^\infty\pi^{Bi}_s\,ds\biggv\F_\tau\biggr].
\end{equation}
Finally, if both firms invest simultaneously at $\tau\in\T$, firm $i$'s conditional payoff is
\begin{equation}\label{M}
M^i(\tau)=\int_0^\tau\pi^{0i}_s\,ds+E\biggl[\int_\tau^\infty\pi^{Bi}_s\,ds\biggv\F_\tau\biggr]\leq\min\bigl\{F^i(\tau),L^i(\tau)\bigr\}.
\end{equation}
In particular, if no firm invests in finite time, then firm $i$ realizes the payoff
\[
M^i(\infty)=\int_0^\infty\pi^{0i}_s\,ds=F^i(\infty)=L^i(\infty).
\]

\begin{remark}[Regularity of the payoff processes]
By Lemma \ref{lem:LFreg} in Appendix \ref{app:tech} there are processes $(L^i_t)$, $(F^i_t)$ and $(M^i_t)$, that, if evaluated at any stopping time $\tau\in\T$, yield the right-hand side of \eqref{F}, \eqref{L} and \eqref{M}, respectively. We will use them in the following as it is much more convenient to work with processes than families like $\{F^i(\tau),\tau\in\T\}$. Indeed, by Lemma \ref{lem:LFreg} we may assume those processes to have right-continuous paths and thus to be well measurable. The payoffs are also sufficiently integrable to be bounded in expectation and such that pathwise limits at any stopping time induce the corresponding limit in expectation.
\end{remark}

Given two plans $\tau_\vartheta^1,\tau_\vartheta^2\in\T$ for the subgame beginning at $\vartheta\in\T$ with no firm having invested, yet, the first investment happens at $\min\{\tau_\vartheta^1,\tau_\vartheta^2\}$ and firm $i$ becomes leader where $\tau_\vartheta^i<\tau_\vartheta^j$, follower where $\tau_\vartheta^i>\tau_\vartheta^j$, and otherwise simultaneous investment occurs. Thus, given the conditional payoffs at the first investment, firm $i$'s conditional expected payoff at $\vartheta$ is
\begin{align}\label{stop_i}
E\Bigl[L^i_{\tau_\vartheta^i}\indi{\tau_\vartheta^i<\tau_\vartheta^j}+F^i_{\tau_\vartheta^j}\indi{\tau_\vartheta^i>\tau_\vartheta^j}+M^i_{\tau_\vartheta^i}\indi{\tau_\vartheta^i=\tau_\vartheta^j}\Bigv\F_\vartheta\Bigr].
\end{align}
Obviously firm $i$ can only become leader before any given $\tau_\vartheta^j$; otherwise it will at most become follower at $\tau_\vartheta^j$. 

Two strategies $\bigl\{\tau_\vartheta^1,\vartheta\in\T\bigr\},\bigl\{\tau_\vartheta^2,\vartheta\in\T\bigr\}$ are a \emph{subgame perfect equilibrium} if there is no $i\in\{1,2\}$ and $\vartheta\in\T$ such that \eqref{stop_i} can be increased with positive probability by choosing any stopping time $\tau\geq\vartheta$ instead of $\tau_\vartheta^i$, i.e.\ if any two plans $\tau_\vartheta^1,\tau_\vartheta^2$ are an \emph{equilibrium} for the subgame beginning at $\vartheta$.

\section{Equilibrium characterization}\label{sec:eql}

The assumed orders between different revenues have important consequences for equilibria of the timing game, independently of any more specific model for the uncertainty. This section illuminates the structure of possible equilibria just by comparing revenue streams, to provide more detailed economic insights than analyses based on reduced functional forms of the payoffs for specific state-space models, and to provide complete equilibrium arguments. We show that it suffices to solve one particular class of constrained optimal stopping problems to construct subgame perfect equilibria with preemption about any first-mover advantage. As mutual preemption may destroy option values unnecessarily, we then identify times past which investment cannot be delayed in any equilibrium. We finally consider alternative equilibria that avoid preemption and provide arguments simplifying their verification. 


\subsection{Sufficient equilibrium conditions}\label{subsec:sufeql}

In order to construct subgame perfect equilibria, it is first determined in which subgames immediate investment is an equilibrium, possibly due to a mutual preemption scheme.

\subsubsection{Simultaneous investment}\label{subsec:follower}

Immediate investment by both firms is an equilibrium at $\vartheta\in\T$ if both firms' follower options are worthless, i.e.\ if $F^i_{\vartheta}=M^i_{\vartheta}$ for both $i=1,2$. If a firm $i$ deviated to any plan $\tau_\vartheta^i>\vartheta$, it would become follower and actually invest at $\tau_F^i(\vartheta)$, which still attains $F^i_{\vartheta}$. In particular, if $\vartheta=\tau^i_F(\vartheta)$ for both $i=1,2$, then a unilateral deviation from simultaneous investment would not even change the physical outcome and both firms $i$ still obtain $L^i_{\vartheta}=F^i_{\vartheta}=M^i_{\vartheta}$. Note that even in this case~-- when any follower would merely forego profitable revenue by hesitating~-- either firm $i$ may nevertheless only be willing to invest \emph{proactively} by the plan $\tau_\vartheta^i=\vartheta$ because the other firm does so. If a firm $i$'s investment was only \emph{triggered} by $\tau^i_F(\vartheta)=\vartheta<\tau_\vartheta^i$, then the opponent might want to delay investment (see Section \ref{subsec:neceql} below).

Given the assumption $\pi^{B2}_\cdot-\pi^{F2}_\cdot\leq\pi^{B1}_\cdot-\pi^{F1}_\cdot$, firm $1$'s follower option is not more valuable than firm $2$'s, so simultaneous investment is an equilibrium at $\vartheta\in\T$ if $\tau'=\vartheta$ attains $F^2_\vartheta$. Similarly, firm $1$'s follower reaction time will never exceed firm $2$'s.

\begin{lemma}\label{lem:tau^1_F<tau^2_F}
$\tau^1_F(\tau)\leq\tau^2_F(\tau)$ and $F^1_{\tau}-M^1_{\tau}\leq F^2_{\tau}-M^2_{\tau}$ for any $\tau\in\T$.
\end{lemma}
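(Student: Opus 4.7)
The plan is to reduce both claims to properties of the auxiliary processes $Y^i_t := \int_\tau^t(\pi^{Fi}_s - \pi^{Bi}_s)\,ds$, $i=1,2$. Subtracting \eqref{M} from \eqref{F} rewrites the option values as
\[
F^i_\tau - M^i_\tau = \esssup_{\tau'\geq\tau} E\bigl[Y^i_{\tau'}\bigv\F_\tau\bigr],
\]
with $\tau^i_F(\tau)$ the latest stopping time attaining this supremum (the outer $\int_0^\tau\pi^{0i}$ and $\int_\tau^\infty\pi^{Bi}$ terms cancel). The standing assumption $\pi^{B2}_\cdot - \pi^{F2}_\cdot \leq \pi^{B1}_\cdot - \pi^{F1}_\cdot$ is equivalent to the difference $Z_t := Y^2_t - Y^1_t$ being nondecreasing along sample paths (with $Z_\tau = 0$).

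The inequality $F^1_\tau - M^1_\tau \leq F^2_\tau - M^2_\tau$ is then immediate: for every admissible $\tau'\geq\tau$ one has $Y^1_{\tau'}\leq Y^2_{\tau'}$ pathwise, so monotonicity of conditional expectation followed by monotonicity of the essential supremum yields the bound.

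For $\tau^1_F(\tau) \leq \tau^2_F(\tau)$ I would argue by contradiction. Write $\sigma_i := \tau^i_F(\tau)$ and suppose $A := \{\sigma_1 > \sigma_2\}$ has positive probability. Consider the candidate $\tilde\sigma := \sigma_1 \vee \sigma_2 \in \T$. Optimality of $\sigma_1$ tested against the admissible $\sigma_1 \wedge \sigma_2$, and of $\sigma_2$ tested against $\tilde\sigma$, yield respectively
\begin{align*}
E\bigl[(Y^1_{\sigma_1} - Y^1_{\sigma_2})\indi{A}\bigv\F_\tau\bigr] &\geq 0, \\
E\bigl[(Y^2_{\sigma_2} - Y^2_{\sigma_1})\indi{A}\bigv\F_\tau\bigr] &\geq 0,
\end{align*}
where on $A^c$ both integrands vanish. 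Summing collapses the $Y^i$-terms into $E[(Z_{\sigma_2} - Z_{\sigma_1})\indi{A}\mid\F_\tau]$, which is pathwise nonpositive because $Z$ is nondecreasing and $\sigma_2 < \sigma_1$ on $A$; combined with its nonnegativity this forces it to vanish, and then both individual nonnegative terms must vanish as well. In particular $\tilde\sigma$ also attains firm~2's supremum. Since $\tilde\sigma > \sigma_2$ on $A$ with $P(A) > 0$, this contradicts $\sigma_2$ being the \emph{latest} optimal stopping time.

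The main obstacle is engineering the exchange step: one has to choose the test stopping times $\sigma_1\wedge\sigma_2$ and $\sigma_1\vee\sigma_2$ so that the sum of the two optimality inequalities telescopes into a monotone increment of $Z$, which is what forces both inequalities to be tight and produces $\tilde\sigma$ as the strictly later alternative maximiser needed to contradict the \emph{latest}-optimal property of $\sigma_2$. Everything else is straightforward bookkeeping with the representation of $F^i - M^i$ derived above.
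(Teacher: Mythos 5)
Your proof is correct and uses essentially the same ingredients as the paper's: optimality of $\tau^1_F(\tau)$ tested against $\sigma_1\wedge\sigma_2$, optimality of $\tau^2_F(\tau)$ tested against $\sigma_1\vee\sigma_2$, and the revenue ordering $\pi^{B2}_\cdot-\pi^{F2}_\cdot\leq\pi^{B1}_\cdot-\pi^{F1}_\cdot$ to close the loop, plus a direct comparison for the option-value inequality. The paper chains the same two optimality inequalities through the revenue order conditioning at $\F_{\tau'}$ and invokes strictness directly, whereas you sum them conditionally at $\F_\tau$, force equality, and only then appeal to the latest-optimality of $\tau^2_F(\tau)$ -- a stylistic reorganization, not a different route.
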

%
%
Lemma \ref{lem:tau^1_F<tau^2_F} is based on the fact that a follower's opportunity cost of waiting is $\pi^{Bi}_\cdot-\pi^{Fi}_\cdot$, which is not less for firm $1$ than for firm $2$. Thus, if firm $1$ is follower, it cannot wait longer than firm $2$ could. More generally, firm $1$ cannot gain more from waiting until any time than firm $2$ could, so firm $1$'s option value $F^1_\tau-M^1_\tau$ as follower is at most what firm $2$'s would be.

\subsubsection{Preemption}\label{subsec:preem}

Critical phases of a timing game are when both players have a first-mover advantage, i.e.\ the set $\cP:=\{L^1_\cdot>F^1_\cdot\}\cap\{L^2_\cdot>F^2_\cdot\}\subset\Omega\times\R_+$. If any player plans to become leader in such a phase, e.g.\ because there is no subsequent continuation equilibrium promising at least the same payoff in expectation, then a preemption scheme is triggered with both players trying to stop waiting before each other to become leader. Therefore $\cP$ will be called \emph{preemption region}. If simultaneous stopping is not an equilibrium, and if each player would prefer to wait without preemptive pressure, then there may be no equilibrium at all, not even for deterministic, very regular models and considering mixed strategies \citep[see e.g.][]{FudenbergTirole85}.

As the players cannot plan to stop ``immediately after'' each other in continuous time, additional outcomes have to be facilitated for plans to stop at the same date. The aim is to sustain the preemption effect that if anyone hesitates, the respective other player becomes leader. A player prefers to ``exert pressure'' by planning to stop at the same date as the other rather than later if the former yields at least the follower payoff in expectation.

For modeling simplicity we here assume that if both firms plan to invest at any first hitting time of the preemption region $\tau_\cP(\vartheta):=\inf\{t\geq\vartheta\mid (L^1_t>F^1_t)\wedge(L^2_t>F^2_t)\}\in\T$, then this is due to mutual preemption, and thus each firm obtains its follower payoff. These expected payoffs arise from a suitable distribution over who invests first at $\tau_\cP(\vartheta)$, i.e.\ firm $1$, firm $2$ or both. Then any firm $i$ becomes leader (the best outcome, by right-continuity of $L^i_\cdot-F^i_\cdot$), follower or simultaneous investor (the worst outcome) with respective probabilities. Such a distribution can be endogenized by extending the strategy spaces to capture outcome limits from mixed strategies in discrete time; see \cite{FudenbergTirole85} for deterministic models and \citet[Proposition 3.1]{RiedelSteg14} for a generalization to stochastic models. An exception is made if one firm is indifferent to become leader or follower at $\tau_\cP(\vartheta)$; then the other firm becomes leader. Now the plans $\tau_\vartheta^1=\tau_\vartheta^2=\vartheta$ are an equilibrium where $\vartheta=\tau_\cP(\vartheta)$.

Here, given $\tau^1_F(\cdot)\leq\tau^2_F(\cdot)$ and the assumption $\pi^{L1}_\cdot-\pi^{F1}_\cdot\geq\pi^{L2}_\cdot-\pi^{F2}_\cdot$, firm $1$'s first-mover advantage is never less than firm $2$'s, so $\cP=\{L^2_\cdot>F^2_\cdot\}$ and $\tau_\cP(\vartheta)=\inf\{t\geq\vartheta\mid L^2_t>F^2_t\}$. 

\begin{lemma}\label{lem:L-F}
$L^1_\tau-F^1_\tau\geq L^2_\tau-F^2_\tau$ for any $\tau\in\T$.
\end{lemma}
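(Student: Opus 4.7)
The plan is to expand $L^i(\tau)-F^i(\tau)$ using the optimal follower response times and to reduce the claim to pathwise revenue inequalities on two disjoint intervals.

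Let $\sigma_1:=\tau^1_F(\tau)$ and $\sigma_2:=\tau^2_F(\tau)$, so by Lemma \ref{lem:tau^1_F<tau^2_F} we have $\tau\leq\sigma_1\leq\sigma_2$. Substituting into \eqref{F}--\eqref{L} and cancelling $\int_0^\tau \pi^{0i}_s\,ds$, a direct computation (splitting the $[0,\infty)$--integrals at $\sigma_1$ and $\sigma_2$) yields
\[
L^1_\tau-F^1_\tau=E\!\left[\int_\tau^{\sigma_1}\!(\pi^{L1}_s-\pi^{F1}_s)\,ds+\int_{\sigma_1}^{\sigma_2}\!(\pi^{L1}_s-\pi^{B1}_s)\,ds\,\bigv\,\F_\tau\right],
\]
\[
L^2_\tau-F^2_\tau=E\!\left[\int_\tau^{\sigma_1}\!(\pi^{L2}_s-\pi^{F2}_s)\,ds+\int_{\sigma_1}^{\sigma_2}\!(\pi^{B2}_s-\pi^{F2}_s)\,ds\,\bigv\,\F_\tau\right].
\]
Note that the opponent's follower time appears in each leader payoff, which is why the common split point is $\sigma_1$ (firm $2$'s leader payoff looks at firm $1$'s reaction $\sigma_1$, and firm $1$'s at firm $2$'s $\sigma_2$); the two expressions share the interval $[\tau,\sigma_1]$ exactly because $\sigma_1\leq\sigma_2$.

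The first integral is then handled directly by the standing assumption $\pi^{L1}_\cdot-\pi^{F1}_\cdot\geq\pi^{L2}_\cdot-\pi^{F2}_\cdot$, giving
\[
E\!\left[\int_\tau^{\sigma_1}(\pi^{L1}_s-\pi^{F1}_s)\,ds\,\bigv\,\F_\tau\right]\geq E\!\left[\int_\tau^{\sigma_1}(\pi^{L2}_s-\pi^{F2}_s)\,ds\,\bigv\,\F_\tau\right].
\]
For the remaining interval $[\sigma_1,\sigma_2]$ the revenue comparison alone is insufficient, so I would exploit the optimality of $\sigma_2$ for firm $2$'s follower problem at $\tau$: since $\sigma_1\in\T$ with $\tau\leq\sigma_1\leq\sigma_2$ is a feasible alternative, evaluating its value against that of $\sigma_2$ in \eqref{F} yields
\[
E\!\left[\int_{\sigma_1}^{\sigma_2}(\pi^{B2}_s-\pi^{F2}_s)\,ds\,\bigv\,\F_\tau\right]\leq 0.
\]
On the other hand, $\pi^{L1}_\cdot\geq\pi^{B1}_\cdot$ gives $E\!\left[\int_{\sigma_1}^{\sigma_2}(\pi^{L1}_s-\pi^{B1}_s)\,ds\,\bigv\,\F_\tau\right]\geq 0$, so the $[\sigma_1,\sigma_2]$--term on the left dominates the corresponding term on the right. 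Adding the two inequalities completes the proof.

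The only non-routine step is the $[\sigma_1,\sigma_2]$ piece, where no direct pointwise inequality between $\pi^{L1}-\pi^{B1}$ and $\pi^{B2}-\pi^{F2}$ is available in the hypotheses; the trick is to route the argument through the follower's optimality, using the same variational comparison that underlies Lemma \ref{lem:tau^1_F<tau^2_F}.
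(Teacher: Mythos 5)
Your proof is correct and follows essentially the same route as the paper's: the identical decomposition of $L^i_\tau-F^i_\tau$ into integrals over $[\tau,\tau^1_F(\tau)]$ and $[\tau^1_F(\tau),\tau^2_F(\tau)]$, the pointwise order $\pi^{L1}_\cdot-\pi^{F1}_\cdot\geq\pi^{L2}_\cdot-\pi^{F2}_\cdot$ on the first interval, and, on the second, $\pi^{L1}_\cdot\geq\pi^{B1}_\cdot$ together with the variational inequality from the optimality of $\tau^2_F(\tau)$ for stopping $\pi^{B2}_\cdot-\pi^{F2}_\cdot$. Nothing substantive differs from the published argument.
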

%
%
Lemma \ref{lem:L-F} uses the fact that the revenue difference between being leader or follower is $\pi^{Li}_\cdot-\pi^{Fi}_\cdot$ until any follower would invest, which is not less for firm $1$ than for firm $2$. Firm $1$ further prefers to be leader between its own follower reaction time and that of firm $2$, because it earns $\pi^{L1}_\cdot$ instead of $\pi^{B1}_\cdot$. Firm $2$, on the contrary, cannot gain from being leader between those two times, as it can only realize $\pi^{B2}_\cdot$ instead of $\pi^{F2}_\cdot$, which a follower never prefers before its own reaction time.

Firm $2$ can only have a first-mover advantage when $\pi^{B1}_\cdot-\pi^{F1}_\cdot$ is not too profitable for firm $1$: if $\vartheta=\tau^1_F(\vartheta)$, then $L^2_\vartheta=M^2_\vartheta\leq F^2_\vartheta$.\footnote{%
If $\vartheta=\tau^1_F(\vartheta)$, then it is indeed not even on the boundary of $\cP$ if $\tau'=\vartheta$ does not attain $F^2_\vartheta$, as then $L^2_\vartheta=M^2_\vartheta<F^2_\vartheta$ and hence $\vartheta<\tau_\cP(\vartheta)$ by right-continuity of the processes. 
}
However, investment must be sufficiently profitable in terms of the revenue difference $\pi^{L2}_\cdot-\pi^{F2}_\cdot$~-- firm $2$'s potential gain from being leader instead of follower~-- and firm $2$ can in fact only have a first-mover advantage if it still does at the optimal times to start $\pi^{L2}_\cdot-\pi^{F2}_\cdot$, see Appendix \ref{app:locpreem}. As $\pi^{B2}_\cdot-\pi^{F2}_\cdot\leq\pi^{B1}_\cdot-\pi^{F1}_\cdot$, $\pi^{L2}_\cdot$ must exceed $\pi^{B2}_\cdot$ enough. In particular, $\cP=\emptyset$ if $\pi^{L2}_\cdot-\pi^{F2}_\cdot\leq\pi^{B1}_\cdot-\pi^{F1}_\cdot$, because then firm $1$ would follow immediately at the latest optimal time to start $\pi^{L2}_\cdot-\pi^{F2}_\cdot$. The latter nonstrategic stopping problem is firm $2$'s monopoly problem \eqref{Mono^i} considered below if $\pi^{02}_\cdot\equiv\pi^{F2}_\cdot$, like in typical market entry models. For state-space models that admit threshold-type solutions for these problems, it may even suffice to look at one threshold to see if $\cP=\emptyset$ (instead of at a whole half-space where stopping is optimal), like for the applications in Section \ref{sec:example}.

\subsubsection{Subgame perfect equilibrium with preemption}\label{subsec:eqlstop}

The subsequent equilibrium construction is facilitated by the fact that independently of what happens in the preemption region, no firm ever wants to invest when it has a second-mover advantage. This finding is driven by the assumption that investment does not benefit the other firm. In contrast to some suggestions in the literature, a second-mover advantage alone does not suffice to delay investment in general.

\begin{lemma}\label{lem:L<Fwait}
Investment is never optimal for any firm $i\in\{1,2\}$ where $F^i_\cdot>L^i_\cdot$. Further, waiting until $\min\bigl\{\tau_\cP(\vartheta),\tau^2_F(\vartheta)\bigr\}$ does not restrict firm $2$'s payoff in the subgame at $\vartheta\in\T$ for any (even mixed) strategy of firm $1$. 
\end{lemma}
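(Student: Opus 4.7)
My plan is to prove the first claim via a delay deviation, and then to deduce the second from it.

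Suppose toward contradiction that some candidate plan $\tau^i_\vartheta$ satisfies $P(B_0)>0$ for the set $B_0:=\{\tau^i_\vartheta<\infty\}\cap\{F^i_{\tau^i_\vartheta}>L^i_{\tau^i_\vartheta}\}\in\F_{\tau^i_\vartheta}$. I consider the deviation $\hat\tau^i:=\tau^i_F(\tau^i_\vartheta)$ on $B_0$ and $\hat\tau^i:=\tau^i_\vartheta$ off $B_0$, which is again a stopping time. The structural tool is the Snell-envelope decomposition
\[
F^i_t=\int_0^t\bigl(\pi^{0i}_s-\pi^{Fi}_s\bigr)\,ds+\tilde F^i_t,
\]
where $\tilde F^i$ is the Snell envelope of $R_t:=\int_0^t\pi^{Fi}\,ds+\int_t^\infty\pi^{Bi}\,ds$. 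The optimal-stopping property of $\tau^i_F(\tau^i_\vartheta)$ yields two facts: (a) $\tilde F^i$ is a martingale on the stochastic interval $[\tau^i_\vartheta,\hat\tau^i]$; and (b) at $\hat\tau^i$ the Snell envelope attains the reward process, translating to $F^i_{\hat\tau^i}=M^i_{\hat\tau^i}\leq L^i_{\hat\tau^i}$ on $B_0$ (the last inequality via $\pi^{Li}_\cdot\geq\pi^{Bi}_\cdot$ in $L^i-M^i$).

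Against any (possibly mixed) plan $\sigma$ of firm $j$, I compare the deviation and original payoffs on $B_0$ event-by-event. On $\{\sigma<\tau^i_\vartheta\}$ both yield $F^i_\sigma$; on $\{\sigma=\tau^i_\vartheta\}$ the deviation turns firm $i$ into a pure follower with payoff $F^i_{\tau^i_\vartheta}$, strictly exceeding the simultaneous-investment payoff $M^i_{\tau^i_\vartheta}$ on $B_0$ (note $\tau^i_\vartheta$ lies strictly outside the preemption region on $B_0$, so no preemption outcome applies); on $\{\sigma>\tau^i_\vartheta\}$ the deviation pays $F^i_\sigma$ for $\sigma<\hat\tau^i$, $M^i_{\hat\tau^i}=F^i_{\hat\tau^i}$ at $\sigma=\hat\tau^i$, and $L^i_{\hat\tau^i}\geq F^i_{\hat\tau^i}$ for $\sigma>\hat\tau^i$, hence pointwise at least $F^i_{\sigma\wedge\hat\tau^i}$ by~(b). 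Optional sampling with (a), combined with $\pi^{0i}_\cdot\geq\pi^{Fi}_\cdot$, gives
\[
E\!\bigl[F^i_{\sigma\wedge\hat\tau^i}\bigv\F_{\tau^i_\vartheta}\bigr]=F^i_{\tau^i_\vartheta}+E\!\bigl[\int_{\tau^i_\vartheta}^{\sigma\wedge\hat\tau^i}(\pi^{0i}_s-\pi^{Fi}_s)\,ds\bigv\F_{\tau^i_\vartheta}\bigr]\geq F^i_{\tau^i_\vartheta}>L^i_{\tau^i_\vartheta}
\]
on $B_0\cap\{\sigma\geq\tau^i_\vartheta\}$. Thus the expected deviation payoff strictly exceeds $L^i_{\tau^i_\vartheta}$ on $B_0\cap\{\sigma>\tau^i_\vartheta\}$, contradicting optimality of $\tau^i_\vartheta$; linearity in $j$'s mixing distribution extends this to mixed strategies.

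For the second claim, specialize to $i=2$. Lemma~\ref{lem:L-F} yields $\cP=\{L^2_\cdot>F^2_\cdot\}$, so $L^2_\cdot\leq F^2_\cdot$ before $\tau_\cP(\vartheta)$, and the first claim (extended to the boundary by right-continuity of the payoff processes) rules out optimal investment there. The further bound by $\tau^2_F(\vartheta)$ follows from the same Snell-envelope martingale identity applied on $[\vartheta,\tau^2_F(\vartheta)]$: replacing any plan $\tau^2$ of firm $2$ by $\tau^2\vee\min\{\tau_\cP(\vartheta),\tau^2_F(\vartheta)\}$ weakly increases its expected payoff against any (even mixed) firm $1$ strategy, by the same case analysis. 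The main obstacle is the identity $F^i_{\hat\tau^i}=M^i_{\hat\tau^i}$ at the latest optimal follower reaction time: it is what enables the pointwise comparison on $\{\sigma>\hat\tau^i\}$, where otherwise the leader values $L^i_{\hat\tau^i}$ and $L^i_{\tau^i_\vartheta}$ are not pointwise comparable.
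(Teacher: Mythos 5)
Your proof of the first claim is correct and, modulo a change of language (you phrase it as a concrete deviation to the follower reaction time and invoke the Snell-envelope martingale property, while the paper argues directly with the submartingale property of $F^i_\cdot$ and a limit from $\tau+1/n$), it captures exactly the paper's idea. The crucial feature that makes the concrete deviation $\hat\tau^i=\tau^i_F(\tau^i_\vartheta)$ work is the one you flag yourself at the end: $F^i_{\hat\tau^i}=M^i_{\hat\tau^i}$, which renders the tie event $\{\sigma=\hat\tau^i\}$ harmless.

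For the second claim, however, the step \emph{``replacing any plan $\tau^2$ of firm $2$ by $\tau^2\vee\min\{\tau_\cP(\vartheta),\tau^2_F(\vartheta)\}$ weakly increases its expected payoff\ldots by the same case analysis''} has a genuine gap. Write $\tau^*:=\min\{\tau_\cP(\vartheta),\tau^2_F(\vartheta)\}$. The ``same case analysis'' relied on $F^i=M^i$ at the deviation time, and that identity fails at $\tau^*$ whenever $\tau^*=\tau_\cP(\vartheta)<\tau^2_F(\vartheta)$: there $\tau^2_F(\tau^*)=\tau^2_F(\vartheta)>\tau^*$, so $F^2_{\tau^*}>M^2_{\tau^*}$ strictly. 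If firm $1$ plans to invest exactly at $\tau^*$ with positive conditional probability, the deviation $\tau^2\vee\tau^*=\tau^*$ produces a tie and yields $M^2_{\tau^*}$ rather than $F^2_{\tau^*}$; the pointwise lower bound by $F^2_{\sigma\wedge\tau^*}$ breaks down on $\{\sigma=\tau^*\}$, and the negative term $-E\bigl[(F^2_{\tau^*}-M^2_{\tau^*})\indi{\{\sigma=\tau^*\}}\bigv\F_{\tau^2}\bigr]$ can outweigh the submartingale gain, so the expected deviation payoff can fall strictly below $L^2_{\tau^2}$. The claim in the lemma is nevertheless true, but it concerns the \emph{supremum} of payoffs over plans $\geq\tau^*$: the paper proves it by passing to the limit along plans $\tau^*+1/n$, which never tie with firm $1$ at $\tau^*$, deliver $F^2_{\tau^*}$ where firm $1$ invests at $\tau^*$ and $L^2_{\tau^*+1/n}\to L^2_{\tau^*}\geq F^2_{\tau^*}$ elsewhere (the last inequality holding at $\tau_\cP(\vartheta)$ by right-continuity of $L^2_\cdot-F^2_\cdot$, and at $\tau^2_F(\vartheta)$ by $\pi^{L2}_\cdot\geq\pi^{B2}_\cdot$). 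You should replace the literal deviation by this limit (or by an $\essinf$ argument over $\tau^*+\varepsilon$), rather than invoking the ``same case analysis''.
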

%
%
Where $F^i_\cdot\geq L^i_\cdot$, firm $i$ can realize at least the preferred follower payoff in expectation by planning to invest at its follower reaction time. Indeed, the follower payoff is nondecreasing in expectation (a \emph{submartingale}) until that time~-- if the opponent invests in the meantime, that does not affect firm $i$'s reaction and can only defer the laggard revenue $\pi^{Fi}_\cdot\leq\pi^{0i}_\cdot$~-- and at the own reaction time, investing regardlessly is at least as good as becoming follower.

By Lemma \ref{lem:L<Fwait} we may let firm $2$'s plan for $\vartheta$ be to invest at $\min\bigl\{\tau_\cP(\vartheta),\tau^2_F(\vartheta)\bigr\}$, where preemption or simultaneous investment is an equilibrium. In case of symmetric revenues, the same plan is then a best reply for firm $1$, but in general, firm $1$ may have a strict first-mover advantage before $\tau_\cP(\vartheta)$ and may want to exploit it. Given the preemption payoffs from Section \ref{subsec:preem} at $\tau_\cP(\vartheta)$ and $L^1_\cdot=F^1_\cdot=M^1_\cdot$ at $\tau^2_F(\vartheta)$, firm $1$ can realize $L^1_\cdot$ anywhere before or at $\min\bigl\{\tau_\cP(\vartheta),\tau^2_F(\vartheta)\bigr\}$, except where $L^2_\cdot>F^2_\cdot$ at $\tau_\cP(\vartheta)$: there firm $1$ will get $F^1_\cdot$. As $L^2_\cdot>F^2_\cdot$ in fact only at $\tau_\cP(\vartheta)$, the best reply problem for firm $1$ at any $\vartheta\in\T$ is
\begin{align}
&\esssup_{\vartheta\leq\tau\leq\tau_\cP(\vartheta)\wedge\tau^2_F(\vartheta)}E\Bigl[L^1_\tau\indi{\{L^2_\tau\leq F^2_\tau\}}+F^1_\tau\indi{\{L^2_\tau>F^2_\tau\}}\Bigv\F_\vartheta\Bigr]. \label{tildeLl}
\end{align}
If problem \eqref{tildeLl} has a solution $\tau^1_*(\vartheta)$, then its value is firm $1$'s equilibrium payoff at $\vartheta$, and that of firm $2$ is $E\bigl[F^2_{\tau^1_*(\vartheta)}\bigv\F_\vartheta\bigr]$, who gets the follower payoff also where $\tau^1_*(\vartheta)=\min\bigl\{\tau_\cP(\vartheta),\tau^2_F(\vartheta)\bigr\}$.

We can summarize as follows.

\begin{theorem}\label{thm:SPE}
If there is a family of solutions $\bigl\{\tau^1_*(\vartheta),\vartheta\in\T\bigr\}$ to \eqref{tildeLl} that satisfies the time-consistency condition for a strategy, then it is firm $1$'s strategy in a subgame perfect equilibrium in which firm $2$ uses the strategy $\bigl\{\tau^2_*(\vartheta),\vartheta\in\T\bigr\}$ given by $\tau^2_*(\vartheta)=\min\bigl\{\tau_\cP(\vartheta),\tau^2_F(\vartheta)\bigr\}$.

If all revenues are symmetric, then there is a symmetric subgame perfect equilibrium in which both firms use the given strategy of firm $2$.
\end{theorem}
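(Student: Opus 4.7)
The plan is to verify subgame perfection directly: for every subgame indexed by $\vartheta\in\T$, neither firm is able to improve on its prescribed plan given the other's strategy. Before comparing payoffs, I would check that $\tau^2_*(\vartheta)=\min\{\tau_\cP(\vartheta),\tau^2_F(\vartheta)\}$ defines a feasible, time-consistent strategy (feasibility because both components are stopping times dominating $\vartheta$; time consistency because $\tau_\cP$ is a hitting time and $\tau^2_F$ inherits consistency from the dynamic programming of firm $2$'s follower problem); firm $1$'s time consistency is assumed in the hypothesis. I would then spell out the payoffs delivered under the profile. Since $\tau^1_*(\vartheta)\leq\tau_\cP(\vartheta)\wedge\tau^2_F(\vartheta)=\tau^2_*(\vartheta)$ by the constraint in \eqref{tildeLl}, only three configurations at the first investment need to be considered: strictly before $\tau^2_*(\vartheta)$, i.e.\ outside $\cP$, where firm $1$ receives $L^1$ and firm $2$ the follower payoff $F^2$; at $\tau_\cP(\vartheta)$, where the preemption scheme delivers each firm its follower payoff $F^i$; or simultaneously at $\tau^2_F(\vartheta)\leq\tau_\cP(\vartheta)$, where $M^2=F^2$ by definition of $\tau^2_F$ while firm $1$ gets $M^1$. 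Hence firm $2$'s expected equilibrium payoff is $E\bigl[F^2_{\tau^1_*(\vartheta)}\bigm\vert\F_\vartheta\bigr]$ and firm $1$'s is the value of \eqref{tildeLl}.

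To rule out profitable deviations by firm $1$, I would distinguish alternative plans $\tau\geq\vartheta$ by whether they respect the constraint of \eqref{tildeLl}. Any $\tau\leq\tau^2_*(\vartheta)$ realizes exactly the integrand $L^1_\tau\indi{\{L^2_\tau\leq F^2_\tau\}}+F^1_\tau\indi{\{L^2_\tau>F^2_\tau\}}$ (leader outside $\cP$; preemption payoff $F^1$ at $\tau_\cP$), so by optimality it cannot improve upon $\tau^1_*(\vartheta)$. A deviation with $\tau>\tau^2_*(\vartheta)$ turns firm $1$ into a follower at $\tau^2_*(\vartheta)$, capped at $F^1_{\tau^2_*(\vartheta)}$. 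On $\{\tau^2_*(\vartheta)=\tau_\cP(\vartheta)\}$ that is exactly the equilibrium payoff; on $\{\tau^2_*(\vartheta)=\tau^2_F(\vartheta)<\tau_\cP(\vartheta)\}$ combining $L^2\leq F^2$ (since we are outside $\cP$) with $L^2\geq M^2=F^2$ at $\tau^2_F$ forces $L^2=F^2$, so Lemma \ref{lem:L-F} yields $L^1\geq F^1$, hence the equilibrium payoff $L^1_{\tau^2_*(\vartheta)}$ dominates. For firm $2$, Lemma \ref{lem:L<Fwait} shows that no strategy can strictly outperform one that waits until $\tau^2_*(\vartheta)$, independently of firm $1$'s plan. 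Since firm $1$'s equilibrium strategy invests by $\tau^2_*(\vartheta)$, the best attainable payoff against it is the follower payoff $F^2_{\tau^1_*(\vartheta)}$, matching the equilibrium payoff and ruling out deviations.

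For the symmetric case, $L^1\equiv L^2$ and $F^1\equiv F^2$ collapse the integrand of \eqref{tildeLl} to $L^1\indi{\{L^1\leq F^1\}}+F^1\indi{\{L^1>F^1\}}$. Combining $L^1\leq F^1$ prior to $\tau_\cP(\vartheta)$ with the submartingale property of $F^1$ on the interval up to $\tau^1_F(\vartheta)$ that underlies Lemma \ref{lem:L<Fwait}, one sees that the essential supremum is attained at $\tau^1_*(\vartheta)=\tau_\cP(\vartheta)\wedge\tau^1_F(\vartheta)=\tau^2_*(\vartheta)$, so the single common strategy solves both firms' problem simultaneously. The main obstacle I expect is the subcase $\tau^2_*(\vartheta)=\tau^2_F(\vartheta)<\tau_\cP(\vartheta)$ in the deviation check for firm $1$: ruling out a profitable switch to the follower role there relies on the non-obvious boundary identity $L^2=F^2$ at $\tau^2_F(\vartheta)$ together with Lemma \ref{lem:L-F}, and it is the place where the careful interaction between the preemption region and the follower reaction time must be exploited.
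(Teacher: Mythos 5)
Your proposal is correct and, because the paper does not give a displayed proof of Theorem~\ref{thm:SPE} but derives it in the running text of Section~\ref{subsec:eqlstop}, your write-up is essentially a careful explicit rendering of that same argument. You verify the same three ingredients the paper uses: (i)~the payoff under the proposed profile is exactly the integrand of~\eqref{tildeLl}, using the outcome conventions at $\tau_\cP(\vartheta)$ and the identities $L^1_\cdot=F^1_\cdot=M^1_\cdot$ at $\tau^2_F(\vartheta)$; (ii)~firm~1's deviations to $\tau>\tau^2_*(\vartheta)$ are capped at $F^1_{\tau^2_*(\vartheta)}$, which you correctly bound using Lemma~\ref{lem:L-F} and the boundary identity $L^2=F^2$ at $\tau^2_F(\vartheta)$; and (iii)~firm~2's best reply follows from Lemma~\ref{lem:L<Fwait} plus the observation that against $\tau^1_*(\vartheta)\le\tau^2_*(\vartheta)$ any plan yields at most the follower payoff $F^2_{\tau^1_*(\vartheta)}$. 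The symmetric-case reduction to stopping $\min\{L^1,F^1\}$ at $\tau_\cP(\vartheta)\wedge\tau^1_F(\vartheta)$ via the submartingale property of $F^1$ is also the intended route. Two minor imprecisions do not affect correctness: the phrase ``that is exactly the equilibrium payoff'' on $\{\tau^2_*(\vartheta)=\tau_\cP(\vartheta)\}$ should read ``a lower bound on the equilibrium payoff'' (the essential supremum in~\eqref{tildeLl} may exceed it), and your derivation of $L^1\geq F^1$ at $\tau^2_F(\vartheta)<\tau_\cP(\vartheta)$ via Lemma~\ref{lem:L-F} is a slight detour, since the immediate follower reactions of both firms there directly give $L^1_\cdot=F^1_\cdot=M^1_\cdot$.
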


Time consistency can be ensured whenever there are solutions to \eqref{tildeLl}, because then there are respectively earliest ones due to right-continuity.\footnote{%
The families $\bigl\{\tau_\cP(\vartheta),\vartheta\in\T\bigr\}$ and $\bigl\{\tau^2_F(\vartheta),\vartheta\in\T\bigr\}$ satisfy the time-consistency condition by construction and thus also $\bigl\{\tau^2_*(\vartheta),\vartheta\in\T\bigr\}$. As the latter are the constraints in \eqref{tildeLl}, any family of earliest solutions $\bigl\{\tau^1_*(\vartheta),\vartheta\in\T\bigr\}$ will then be time consistent, too.
}
It holds similarly if respectively latest solutions can be chosen and for state-space models if the $\tau^1_*(\vartheta)$ are of threshold-type.

The existence of a solution to \eqref{tildeLl} is generally not clear, however, because the process to be stopped has a discontinuity at $\tau_\cP(\vartheta)$ if $\vartheta<\tau_\cP(\vartheta)<\tau^2_F(\vartheta)$ and $L^2_{\tau_\cP(\vartheta)}>F^2_{\tau_\cP(\vartheta)}$; then also $L^1_{\tau_\cP(\vartheta)}>F^1_{\tau_\cP(\vartheta)}$ by Lemma \ref{lem:L-F} and preemption causes a drop. A solution will exist if the process $L^2_\cdot-F^2_\cdot$ is lower semi-continuous, because then $L^2_{\tau_\cP(\vartheta)}=F^2_{\tau_\cP(\vartheta)}$ on $\{\vartheta<\tau_\cP(\vartheta)\}$, such that \eqref{tildeLl} reduces to 
\begin{align}
&\esssup_{\vartheta\leq\tau\leq\tau_\cP(\vartheta)\wedge\tau^2_F(\vartheta)}E\Bigl[L^1_\tau\Bigv\F_\vartheta\Bigr].\label{Lopt}
\end{align}

\begin{proposition}\label{prop:SPE}
Assume that $L^2_\cdot-F^2_\cdot$ is lower semi-continuous from the left. Then there exists a subgame perfect equilibrium as described in Theorem \ref{thm:SPE}, with each $\tau^1_*(\vartheta)$ the respectively earliest solution of
\begin{equation}\label{MonoPreem}
\esssup_{\vartheta\leq\tau\leq\tau_\cP(\vartheta)\wedge\tau^2_F(\vartheta)}E\biggl[\int_0^{\tau}\pi^{01}_s\,ds+\int_{\tau}^{\infty}\pi^{L1}_s\,ds\biggv\F_\vartheta\biggr].
\end{equation}
\end{proposition}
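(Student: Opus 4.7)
The plan is to reduce firm $1$'s best-reply problem \eqref{tildeLl} to the non-strategic stopping problem \eqref{MonoPreem} under the stated semi-continuity hypothesis and then invoke Theorem \ref{thm:SPE} directly. The reduction proceeds in two substeps: (a) the payoff process inside \eqref{tildeLl} coincides with $L^1_\tau$ throughout the feasibility interval, and (b) maximizing $E\bigl[L^1_\tau\bigv\F_\vartheta\bigr]$ is the same problem as \eqref{MonoPreem}.

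For step (a) I would distinguish two events. On $\{\vartheta=\tau_\cP(\vartheta)\}$ the constraints force $\tau=\vartheta$ and there is nothing to show. On $\{\vartheta<\tau_\cP(\vartheta)\}$, the infimum definition of $\tau_\cP(\vartheta)$ already gives $L^2_s-F^2_s\leq 0$ for every $s\in[\vartheta,\tau_\cP(\vartheta))$, and left lower semi-continuity extends this inequality to the right endpoint,
\[
L^2_{\tau_\cP(\vartheta)}-F^2_{\tau_\cP(\vartheta)}\leq\liminf_{s\uparrow\tau_\cP(\vartheta)}\bigl(L^2_s-F^2_s\bigr)\leq 0.
\]
Hence $\indi{\{L^2_\tau\leq F^2_\tau\}}=1$ on the entire feasible range $[\vartheta,\tau_\cP(\vartheta)\wedge\tau^2_F(\vartheta)]$, so \eqref{tildeLl} reduces to \eqref{Lopt}.

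For step (b), I would substitute definition \eqref{L} of $L^1_\tau$ and use the elementary time-consistency identity $\tau^2_F(\tau)=\tau^2_F(\vartheta)$ on $\{\tau\leq\tau^2_F(\vartheta)\}$ (a Bellman-principle consequence of the uniqueness of the latest optimal follower reaction time) to obtain, after adding and subtracting $\int_{\tau^2_F(\vartheta)}^\infty\pi^{L1}_s\,ds$,
\begin{align*}
E\bigl[L^1_\tau\bigv\F_\vartheta\bigr]
&=E\biggl[\int_0^\tau\pi^{01}_s\,ds+\int_\tau^\infty\pi^{L1}_s\,ds\biggv\F_\vartheta\biggr]\\
&\qquad-E\biggl[\int_{\tau^2_F(\vartheta)}^\infty\bigl(\pi^{L1}_s-\pi^{B1}_s\bigr)\,ds\biggv\F_\vartheta\biggr].
\end{align*}
Since the second term is $\F_\vartheta$-measurable and independent of $\tau$, problems \eqref{Lopt} and \eqref{MonoPreem} share the same argmax. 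Existence of an earliest optimizer $\tau^1_*(\vartheta)\in[\vartheta,\tau_\cP(\vartheta)\wedge\tau^2_F(\vartheta)]$ then follows from right-continuity of $L^1$ (via the regularity Remark) together with the integrability of the revenues and standard Snell-envelope theory under the usual conditions on $\mathbb{F}$. Time consistency of $\bigl\{\tau^1_*(\vartheta),\vartheta\in\T\bigr\}$ follows exactly as argued in the footnote to Theorem \ref{thm:SPE}, using time consistency of the constraints $\tau_\cP(\cdot)\wedge\tau^2_F(\cdot)$. Theorem \ref{thm:SPE} then delivers the claimed SPE.

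The principal obstacle is step (a); the semi-continuity hypothesis is used only there, to preclude an upward jump of $L^2_\cdot-F^2_\cdot$ at its debut $\tau_\cP(\vartheta)$. Such a jump would induce a discontinuous drop from $L^1$ to $F^1$ at the right endpoint of the feasibility interval in \eqref{tildeLl} and could destroy attainment of the essential supremum. Left lower semi-continuity is precisely the minimal regularity property ruling this out; everything downstream is routine optimal-stopping machinery.
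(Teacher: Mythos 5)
Your proposal matches the paper's own reasoning step for step: left lower semi\nobreakdash-continuity gives $L^2_{\tau_\cP(\vartheta)}\leq F^2_{\tau_\cP(\vartheta)}$ on $\{\vartheta<\tau_\cP(\vartheta)\}$ so that \eqref{tildeLl} collapses to \eqref{Lopt}; then the constancy $\tau^2_F(\tau)=\tau^2_F(\vartheta)$ on $\{\tau\leq\tau^2_F(\vartheta)\}$ turns $E\bigl[L^1_\tau\bigv\F_\vartheta\bigr]$ into the monopoly payoff in \eqref{MonoPreem} up to an $\F_\vartheta$\nobreakdash-measurable constant, whose continuity yields existence, and time consistency follows from the footnote to Theorem~\ref{thm:SPE}. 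This is the same decomposition and the same use of the hypothesis as the paper, just spelled out more explicitly.
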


The solutions of problem \eqref{Lopt} are the~-- by continuity existing~-- solutions of the conceptually much simpler constrained monopoly problem \eqref{MonoPreem} because the follower reaction time $\tau_F^2(\tau)$ in $L^1_\tau$ remains constant for $\tau\in[\vartheta,\tau_F^2(\vartheta)]$, cf.\ also Lemma \ref{lem:stoptau^1_F} below.

In this equilibrium the firms either plan to invest because the respective other does (as soon as both have a first-mover advantage or both would invest as follower), or firm $1$ exploits that waiting is dominant for firm $2$ and it thus acts like a constrained monopolist. 

Only the constraint $\tau\leq\tau_\cP(\vartheta)$ matters in \eqref{MonoPreem} if $\pi^{L1}_\cdot-\pi^{01}_\cdot\geq\pi^{B1}_\cdot-\pi^{F1}_\cdot$, like for market entry with $\pi^{01}_\cdot\equiv\pi^{F1}_\cdot$, because then the solution is to invest no later than at $\tau^1_F(\vartheta)\leq\tau^2_F(\vartheta)$ (cf.\ the discussion after Lemma \ref{lem:stoptau^1_F}). More generally, it is of course optimal to invest in \eqref{MonoPreem} whenever it is so for $i=1$ in the unconstrained monopoly problem
\begin{equation}\label{Mono^i}
\esssup_{\tau\geq\vartheta}E\biggl[\int_0^{\tau}\pi^{0i}_s\,ds+\int_{\tau}^{\infty}\pi^{Li}_s\,ds\biggv\F_\vartheta\biggr].
\end{equation}



\subsection{Necessary equilibrium conditions}\label{subsec:neceql}

In the equilibria derived above, it may often be that investment is only optimal because the other firm plans to invest at the same date. Possibly other equilibria exist with both firms investing later, which then both prefer, but on which they have to coordinate. Now some times are derived when investment is indeed unavoidable in equilibrium.

Equilibria are obviously related to optimal stopping of the leader payoff processes, typically subject to certain constraints, cf.\ \eqref{stop_i}. The next lemma shows that given the assumptions $\pi^{Li}_\cdot\geq\pi^{Bi}_\cdot$ and $\pi^{0i}_\cdot\geq\pi^{Fi}_\cdot$, equilibrium investment must not happen later than when firm $i$ would invest if it had the exclusive right to invest first, i.e.\ if it considered the \emph{unconstrained} problem of when to become leader. 

Due to the dynamic follower reaction in $L^i_\tau$, this is a complex problem. It may for instance not be optimal to invest when the general circumstances are so favorable that any monopolist or follower would invest immediately: When only $\pi^{Bi}_\cdot$ can be realized, it may be better to invest when the follower will react with a lag.\footnote{%
See Remark \ref{rem:optleader} in Appendix \ref{app:tech} on the monopolists' and leaders' problems for standard diffusion models.
}
To become leader optimally, it is however necessary that a monopolist would invest, too.

\begin{lemma}\label{lem:Lmaxstop}
Wherever $\tau=\vartheta$ is the latest stopping time attaining
\begin{equation}\label{maxL}
\esssup_{\tau\geq\vartheta}E\Bigl[L^i_\tau\Bigv\F_\vartheta\Bigr]
\end{equation}
for some $i\in\{1,2\}$, some firm must invest immediately in any equilibrium for the subgame beginning at $\vartheta\in\T$. Further, where $\tau=\vartheta$ attains \eqref{maxL}, it also attains \eqref{Mono^i}.
\end{lemma}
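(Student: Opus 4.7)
\emph{Part 2.} Directly from \eqref{L} and \eqref{Mono^i},
\[
\mu^i_\tau - L^i_\tau = E\biggl[\int_{\tau^j_F(\tau)}^\infty (\pi^{Li}_s - \pi^{Bi}_s)\,ds\biggv\F_\tau\biggr] \geq 0,
\]
where I write $\mu^i_\tau := \int_0^\tau\pi^{0i}_s\,ds + E[\int_\tau^\infty\pi^{Li}_s\,ds\mid\F_\tau]$ for the conditional monopoly reward at $\tau$. My plan is to use that $\tau^j_F(\cdot)$ is non-decreasing in its starting argument: firm $j$'s follower problem has the time-shift-invariant reward $\int_0^{\tau'}(\pi^{Fj}_s - \pi^{Bj}_s)\,ds$, so on $\{\tau\leq\tau^j_F(\vartheta)\}$ the latest optimum $\tau^j_F(\tau)$ coincides with $\tau^j_F(\vartheta)$, while on $\{\tau>\tau^j_F(\vartheta)\}$ trivially $\tau^j_F(\tau)\geq\tau>\tau^j_F(\vartheta)$. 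Taking $\F_\vartheta$-conditional expectations for $\tau\geq\vartheta$ and rearranging then gives
\[
\mu^i_\vartheta - E[\mu^i_\tau\mid\F_\vartheta] = \bigl(L^i_\vartheta - E[L^i_\tau\mid\F_\vartheta]\bigr) + E\biggl[\int_{\tau^j_F(\vartheta)}^{\tau^j_F(\tau)}(\pi^{Li}_s - \pi^{Bi}_s)\,ds\biggv\F_\vartheta\biggr],
\]
and both summands are non-negative --- the first by optimality of $\vartheta$ for $L^i$, the second by the monotonicity just established together with $\pi^{Li}\geq\pi^{Bi}$.

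\emph{Part 1.} I argue by contradiction. Let $B\in\F_\vartheta$ be a positive-measure event contained in the set on which $\vartheta$ is latest-optimal for \eqref{maxL} but, in some subgame perfect equilibrium, $\min(\tau^i_\vartheta,\tau^j_\vartheta)>\vartheta$. The planned deviation for firm $i$ is the stopping time equal to $\vartheta$ on $B$ and to $\tau^i_\vartheta$ elsewhere; because $\tau^j_\vartheta>\vartheta$ on $B$, this makes firm $i$ the strict leader there and yields the payoff $L^i_\vartheta$ on $B$. Writing $\sigma:=\tau^i_\vartheta\wedge\tau^j_\vartheta$, my plan is to bound firm $i$'s equilibrium payoff
\[
U^i = E\bigl[L^i_{\tau^i_\vartheta}\indi{\tau^i_\vartheta<\tau^j_\vartheta} + F^i_{\tau^j_\vartheta}\indi{\tau^i_\vartheta>\tau^j_\vartheta} + M^i_{\tau^i_\vartheta}\indi{\tau^i_\vartheta=\tau^j_\vartheta}\bigm|\F_\vartheta\bigr]
\]
termwise by $L^i_\sigma$: the $M^i$-term via \eqref{M}; the $F^i$-term using that firm $j$ actually invests at $\sigma=\tau^j_\vartheta$, so Lemma \ref{lem:L<Fwait} gives $L^j_\sigma\geq F^j_\sigma$, which via Lemma \ref{lem:L-F} transfers to $L^i_\sigma\geq F^i_\sigma$. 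Hence $U^i\leq E[L^i_\sigma\mid\F_\vartheta]\leq L^i_\vartheta$, with strict inequality on $B$ by the latest-optimality assumption (since $\sigma>\vartheta$ there), contradicting the required $U^i\geq L^i_\vartheta$.

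\emph{Main obstacle.} The transfer from $L^j\geq F^j$ to $L^i\geq F^i$ in the second step of Part 1 is immediate for $i=1$ via $L^1-F^1\geq L^2-F^2\geq 0$, but Lemma \ref{lem:L-F} points the wrong way for $i=2$. I expect the fix to either combine firm $1$'s parallel deviation $\tilde\tau^1=\vartheta$ with the summed constraint $U^1+U^2\geq L^1_\vartheta+L^2_\vartheta$ and derive a contradiction at the industry level by directly comparing revenue streams, or to exploit Part 2 to show that latest-optimality of $L^2$ at $\vartheta$ places $\vartheta$ in $\cP=\{L^2>F^2\}$, where the preemption construction of Section \ref{subsec:preem} already forces some firm to invest at $\vartheta$.
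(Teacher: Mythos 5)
\textbf{Part 2} is correct, and after unwinding the contrapositive it rests on the same algebraic identity as the paper: the difference $\mu^i_\vartheta - E[\mu^i_\tau\mid\F_\vartheta]$ decomposes into the analogous difference for $L^i$ plus $E\bigl[\int_{\tau^j_F(\vartheta)}^{\tau^j_F(\tau)}(\pi^{Li}_s - \pi^{Bi}_s)\,ds\bigv\F_\vartheta\bigr]$, and the second summand is nonnegative because $\tau^j_F(\vartheta)\leq\tau^j_F(\tau)$ and $\pi^{Li}_\cdot\geq\pi^{Bi}_\cdot$.

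\textbf{Part 1} has the gap you yourself flag, and it is a genuine one. The termwise bound you want, $F^i_\sigma\leq L^i_\sigma$ at $\sigma=\tau^j_\vartheta$, cannot be extracted from Lemma \ref{lem:L<Fwait} together with Lemma \ref{lem:L-F} when $i=2$, since Lemma \ref{lem:L-F} only gives $L^1_\cdot-F^1_\cdot\geq L^2_\cdot-F^2_\cdot$ and thus pushes the sign the wrong way. Neither of your proposed repairs closes it: the summed constraint $U^1+U^2\geq L^1_\vartheta+L^2_\vartheta$ does not follow (each firm's unilateral deviation makes only that firm the leader, and one cannot add the two deviation inequalities), and Part 2 only identifies $\vartheta$ as an optimal monopoly time, not as a point with $L^2_\vartheta>F^2_\vartheta$. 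The paper sidesteps the whole issue by never comparing $L^i$ and $F^i$ at the same random date. Instead, for any $\tau\geq\vartheta$ it first pushes forward to the follower's reaction time via the submartingale inequality $F^i_\tau\leq E\bigl[F^i_{\tau^i_F(\tau)}\bigv\F_\tau\bigr]$ (a consequence of $\pi^{Fi}_\cdot\leq\pi^{0i}_\cdot$ and $\tau^i_F(\tau^i_F(\tau))=\tau^i_F(\tau)$), and only there uses $F^i_{\tau^i_F(\tau)}\leq L^i_{\tau^i_F(\tau)}$, which holds from $\pi^{Li}_\cdot\geq\pi^{Bi}_\cdot$. Combining with the latest-optimality hypothesis yields
\[
E\bigl[F^i_\tau\bigv\F_\vartheta\bigr]\leq E\bigl[L^i_{\tau^i_F(\tau)}\bigv\F_\vartheta\bigr]< L^i_\vartheta
\]
strictly on $\{\tau>\vartheta\}$, because $\tau^i_F(\tau)\geq\tau>\vartheta$ there. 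This chain is completely symmetric in $i$ and uses none of the asymmetric orderings, so it holds for $i=2$ as well; since $L^i_\vartheta$ then strictly dominates all three continuation payoffs $L^i_\tau$, $F^i_\tau$, $M^i_\tau\leq F^i_\tau$ past $\vartheta$, no equilibrium can delay investment on that event. That forward shift to $\tau^i_F(\tau)$ is the missing idea in your argument.
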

%
%
Lemma \ref{lem:Lmaxstop} rests on the observation that if it is optimal to become leader immediately in \eqref{maxL}, then there is no superior future follower payoff, either: If firm $i$ had the choice when to become follower, it would generally prefer times $\tau^i_F(\tau)$ to avoid the low revenue $\pi^{Fi}_\cdot\leq\pi^{0i}_\cdot$. At any $\tau^i_F(\tau)$, however, becoming follower is not better than becoming leader due to $\pi^{Bi}_\cdot\leq\pi^{Li}_\cdot$. 

The problem \eqref{maxL} becomes much easier by fixing continuation equilibria like simultaneous investment at $\tau^2_F(\vartheta)$ that prevent becoming leader later. By such a constraint, firm $2$'s follower reaction will always be the same and firm $1$ will not cannibalize any revenue $\pi^{L1}_\cdot$ past $\tau^2_F(\vartheta)$ if it invests before. Thus, firm $1$'s leader problem becomes equivalent to a constrained monopolist's problem. For the following constrained version of Lemma \ref{lem:Lmaxstop}, it is also important that firm $1$ will not regret to receive $\pi^{B1}_\cdot$ from $\tau^2_F(\vartheta)$ on by investing before.\footnote{%
Firm $2$, on the contrary, may prefer to become follower at $\tau^1_F(\vartheta)$ and effectively invest later. If firm $2$ can become leader up to $\tau^2_F(\vartheta)$, it may expect a delayed follower reaction and high revenue $\pi^{L2}_\cdot$ in $(\tau^1_F(\vartheta),\tau^2_F(\vartheta)]$ and the problem cannot be simplified.
}

\begin{lemma}\label{lem:stoptau^1_F}
Suppose that firm $2$'s strategy in an equilibrium for the subgame at $\vartheta\in\T$ induces investment no later than at $\tau^2_F(\vartheta)$. Then investment must happen immediately where $\tau=\vartheta$ is the latest stopping time attaining
\begin{equation}\label{maxL_S}
\esssup_{\tau\in[\vartheta,\tau_F^2(\vartheta)]}E\Bigl[L^1_\tau\Bigv\F_\vartheta\Bigr],
\end{equation}
which has the same solutions as
\begin{equation}\label{mono1}
\esssup_{\tau\in[\vartheta,\tau^2_F(\vartheta)]}E\biggl[\int_0^\tau\pi^{01}_s\,ds+\int_{\tau}^{\infty}\pi^{L1}_s\,ds\biggv\F_{\vartheta}\biggr].
\end{equation}
\end{lemma}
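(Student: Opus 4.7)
My plan is to first establish the equivalence of \eqref{maxL_S} and \eqref{mono1} and then derive the necessity of immediate investment by a deviation argument that exploits the inequality $F^1_{\tau^2_\vartheta}\le E[L^1_{\tau^1_F(\tau^2_\vartheta)}\mid\F_{\tau^2_\vartheta}]$. The key to both steps is a dynamic-programming property for firm~$2$'s follower reaction: because $\tau^2_F(\vartheta)$ is by definition the \emph{latest} optimal stopping time for firm~$2$'s follower problem starting at $\vartheta$, a standard Snell-envelope argument shows that the same $\tau^2_F(\vartheta)$ remains the latest optimizer when the problem is restarted at any $\tau\in[\vartheta,\tau^2_F(\vartheta)]$, so $\tau^2_F(\tau)=\tau^2_F(\vartheta)$ there. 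Substituting into \eqref{L} and writing $\int_\tau^{\tau^2_F(\vartheta)}\pi^{L1}_s\,ds=\int_\tau^\infty\pi^{L1}_s\,ds-\int_{\tau^2_F(\vartheta)}^\infty\pi^{L1}_s\,ds$, one sees that $E[L^1_\tau\mid\F_\vartheta]$ differs from the objective of \eqref{mono1} only by an $\F_\vartheta$-conditional expectation that is independent of $\tau$. Hence the two problems share their maximizers.

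For the main claim, fix the $\F_\vartheta$-measurable event $A$ on which $\vartheta$ is the latest stopping time attaining \eqref{maxL_S}, and suppose towards a contradiction that there is an $\F_\vartheta$-measurable subset $B\subset A$ of positive probability on which neither firm invests at $\vartheta$, i.e.\ $\tau^1_\vartheta\wedge\tau^2_\vartheta>\vartheta$. Consider the deviation in which firm~$1$ plays $\vartheta$ on $B$ and keeps its equilibrium plan on $B^c$: since $\tau^2_\vartheta>\vartheta$ on $B$, firm~$1$ becomes sole leader there and earns the deviation payoff $L^1_\vartheta$. To derive the contradiction, I will dominate firm~$1$'s equilibrium conditional payoff on $B$ by $E[L^1_{\tilde\tau}\mid\F_\vartheta]$ for a stopping time $\tilde\tau$ feasible in \eqref{maxL_S} that strictly exceeds $\vartheta$ on $B$.

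On the leader and simultaneous events inside $B$ the natural choice is $\tilde\tau=\tau^1_\vartheta\in(\vartheta,\tau^2_F(\vartheta)]$, using $M^1\le L^1$ in the latter case. On the follower event I set $\tilde\tau=\tau^1_F(\tau^2_\vartheta)$, which by Lemma~\ref{lem:tau^1_F<tau^2_F} combined with the dynamic-programming identity $\tau^2_F(\tau^2_\vartheta)=\tau^2_F(\vartheta)$ satisfies $\tilde\tau\in(\tau^2_\vartheta,\tau^2_F(\vartheta)]$. Expanding $L^1_{\tilde\tau}$ using once more $\tau^2_F(\tilde\tau)=\tau^2_F(\vartheta)$ and comparing with $F^1_{\tau^2_\vartheta}$ term by term yields
\begin{equation*}
E\biggl[L^1_{\tilde\tau}\biggv\F_{\tau^2_\vartheta}\biggr]-F^1_{\tau^2_\vartheta}=E\biggl[\int_{\tau^2_\vartheta}^{\tilde\tau}(\pi^{01}_s-\pi^{F1}_s)\,ds+\int_{\tilde\tau}^{\tau^2_F(\vartheta)}(\pi^{L1}_s-\pi^{B1}_s)\,ds\biggv\F_{\tau^2_\vartheta}\biggr]\ge0,
\end{equation*}
by the standing orderings $\pi^{01}_\cdot\ge\pi^{F1}_\cdot$ and $\pi^{L1}_\cdot\ge\pi^{B1}_\cdot$. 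The three cases glue into a single stopping time $\tilde\tau$ with $\tilde\tau\in(\vartheta,\tau^2_F(\vartheta)]$ on $B$ that dominates firm~$1$'s equilibrium payoff.

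Finally, splicing produces $\tilde\tau^{\mathrm{sp}}:=\tilde\tau\indi{B}+\vartheta\indi{B^c}$, a feasible stopping time for \eqref{maxL_S} that strictly exceeds $\vartheta$ on $B$. If one had $E[L^1_{\tilde\tau}\mid\F_\vartheta]=L^1_\vartheta$ a.s.\ on $B$, then $\tilde\tau^{\mathrm{sp}}$ would also attain the essential supremum on $A$, contradicting that $\vartheta$ is the \emph{latest} maximizer; hence strict inequality holds on a positive-probability subset of $B$, the deviation strictly improves firm~$1$'s conditional payoff there, and the equilibrium hypothesis is contradicted. The main obstacle is the follower-case bound: identifying $\tau^1_F(\tau^2_\vartheta)$ as the dominating leader time, establishing its feasibility through Lemma~\ref{lem:tau^1_F<tau^2_F} and the dynamic-programming invariance of $\tau^2_F$ on $[\vartheta,\tau^2_F(\vartheta)]$, and extracting the nonnegative sign from the structural revenue orderings alone.
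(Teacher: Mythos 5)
Your proof is correct and takes essentially the same route as the paper's. The two pillars are identical: (i) using $\tau^2_F(\tau)=\tau^2_F(\vartheta)$ for $\tau\in[\vartheta,\tau^2_F(\vartheta)]$ to show the objectives in \eqref{maxL_S} and \eqref{mono1} differ by a $\tau$-independent conditional expectation, and (ii) the follower-side bound $F^1_{\tau^2_\vartheta}\le E\bigl[L^1_{\tau^1_F(\tau^2_\vartheta)}\bigv\F_{\tau^2_\vartheta}\bigr]$, which is precisely the chain $M^1_\tau\le F^1_\tau\le E\bigl[F^1_{\tau_F^1(\tau)}\bigv\F_\tau\bigr]\le E\bigl[L^1_{\tau_F^1(\tau)}\bigv\F_\tau\bigr]$ appearing in the paper's proof, just derived by a single integral identity rather than two separate steps. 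The only stylistic difference is in how the final contradiction with equilibrium is delivered: the paper simply invokes ``the same arguments as in the proof of Lemma~\ref{lem:Lmaxstop}'' (where a limiting deviation $\vartheta+1/n$ and right-continuity of $L^1_\cdot$ are used), whereas you spell the deviation out with the explicitly spliced stopping time $\tilde\tau^{\mathrm{sp}}$ and the latest-maximizer uniqueness. Both are valid; yours is more self-contained but implicitly relies on the glued $\tilde\tau$ being a genuine stopping time (it is, since the leader/follower/simultaneous partition is $\F_{\tau^1_\vartheta\wedge\tau^2_\vartheta}$-measurable), a point worth stating if you keep the construction.
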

%
%
If a monopolist's investment gain $\pi^{L1}_\cdot-\pi^{01}_\cdot$ is not less than a follower's, $\pi^{B1}_\cdot-\pi^{F1}_\cdot$ (like in typical market entry with $\pi^{01}_\cdot\equiv\pi^{F1}_\cdot$), then the latest solution of \eqref{mono1} does not exceed $\tau^1_F(\vartheta)$, where any delay only means foregone revenue for a follower in \eqref{F}, and firm $1$ would now lose no less as prospective leader. In this case \eqref{mono1} has the same solutions as \eqref{Mono^i}.


Another continuation equilibrium that potentially induces earlier investment is preemption at $\tau_\cP(\vartheta)$ as in Section \ref{subsec:preem}. Given preemption in $\cP$ (or $\cP=\emptyset$), firm $2$ can never realize payoffs exceeding $F^2_\cdot$, and the game has to end immediately at all respectively latest optimal times to \emph{become} follower. Indeed, such times have to satisfy $\tau=\tau^2_F(\tau)$ (as it is otherwise no loss to become follower at $\tau^2_F(\tau)$ and receive $\pi^{02}_\cdot\geq\pi^{F2}_\cdot$ longer), and then firm $2$ can enforce the payoff $F^2_\tau=L^2_\tau=M^2_\tau$ by investing regardlessly. 

A stopping time satisfying $\vartheta=\tau^i_F(\vartheta)$ can only maximize firm $i$'s follower payoff if it also maximizes the simultaneous investment payoff. Conversely, an optimal time for simultaneous investment must also be optimal to become follower, as the opportunity cost of waiting for the former, $\pi^{Bi}_\cdot-\pi^{0i}_\cdot$, is at most that for the latter by $\pi^{0i}_\cdot\geq\pi^{Fi}_\cdot$.

\begin{lemma}\label{lem:Fmaxstop}
Every stopping time $\tau_M^i\geq\vartheta$ that attains
\begin{equation}\label{sim_i}
\esssup_{\tau\geq\vartheta}E\Bigl[M^i_\tau\Bigv\F_\vartheta\Bigr]=\esssup_{\tau\geq\vartheta}E\biggl[\int_0^\tau\pi^{0i}_s\,ds+\int_{\tau}^{\infty}\pi^{Bi}_s\,ds\biggv\F_\vartheta\biggr]
\end{equation}
for some given $\vartheta\in\T$ and $i\in\{1,2\}$ also attains
\begin{equation}\label{maxF}
\esssup_{\tau\geq\vartheta}E\Bigl[F^i_\tau\Bigv\F_\vartheta\Bigr].
\end{equation}
If $\tau_M^i\geq\vartheta$ attains \eqref{maxF}, then $\tau_F^i(\tau_M^i)$ also attains \eqref{sim_i}. In particular, the respectively latest solutions of \eqref{sim_i} and \eqref{maxF} agree.
\end{lemma}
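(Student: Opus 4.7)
\medskip

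\textbf{Plan.} The key step will be to rewrite $E\bigl[F^i_\tau\bigm|\F_\vartheta\bigr]$ in terms of $M^i$ evaluated at the follower reaction time, so that everything reduces to a single identity between the two objective functionals. By the tower property and the definition of $\tau_F^i(\tau)$,
\begin{equation*}
E\bigl[F^i_\tau\bigm|\F_\vartheta\bigr]=E\biggl[\int_0^\tau\pi^{0i}_s\,ds+\int_\tau^{\tau_F^i(\tau)}\pi^{Fi}_s\,ds+\int_{\tau_F^i(\tau)}^\infty\pi^{Bi}_s\,ds\biggv\F_\vartheta\biggr].
\end{equation*}
Subtracting and adding $\int_\tau^{\tau_F^i(\tau)}\pi^{0i}_s\,ds$ inside the expectation yields the central identity
\begin{equation*}
E\bigl[F^i_\tau\bigm|\F_\vartheta\bigr]=E\bigl[M^i_{\tau_F^i(\tau)}\bigm|\F_\vartheta\bigr]-E\biggl[\int_\tau^{\tau_F^i(\tau)}\bigl(\pi^{0i}_s-\pi^{Fi}_s\bigr)\,ds\biggv\F_\vartheta\biggr],
\end{equation*}
in which the last term is nonnegative by $\pi^{0i}_\cdot\geq\pi^{Fi}_\cdot$.

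From this identity, one immediately gets $E\bigl[F^i_\tau\bigm|\F_\vartheta\bigr]\leq E\bigl[M^i_{\tau_F^i(\tau)}\bigm|\F_\vartheta\bigr]\leq\esssup_{\sigma\geq\vartheta}E\bigl[M^i_\sigma\bigm|\F_\vartheta\bigr]$ for every $\tau\geq\vartheta$, while $F^i_\cdot\geq M^i_\cdot$ (see \eqref{M}) gives the reverse inequality between the two suprema. Hence problems \eqref{sim_i} and \eqref{maxF} have the same value. If $\tau_M^i$ attains \eqref{sim_i}, then $E\bigl[F^i_{\tau_M^i}\bigm|\F_\vartheta\bigr]\geq E\bigl[M^i_{\tau_M^i}\bigm|\F_\vartheta\bigr]=\esssup E\bigl[M^i_\sigma\bigm|\F_\vartheta\bigr]=\esssup E\bigl[F^i_\tau\bigm|\F_\vartheta\bigr]$, so $\tau_M^i$ also attains \eqref{maxF}. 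Conversely, if $\tau$ attains \eqref{maxF}, then from the identity $\esssup E\bigl[M^i_\sigma\bigm|\F_\vartheta\bigr]=E\bigl[F^i_\tau\bigm|\F_\vartheta\bigr]\leq E\bigl[M^i_{\tau_F^i(\tau)}\bigm|\F_\vartheta\bigr]\leq\esssup E\bigl[M^i_\sigma\bigm|\F_\vartheta\bigr]$, so $\tau_F^i(\tau)$ attains \eqref{sim_i}.

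For the equality of latest solutions, let $\tau_M^*$ and $\tau_F^*$ denote the respectively latest attainers of \eqref{sim_i} and \eqref{maxF}. Claim~1 applied to $\tau_M^*$ gives $\tau_M^*\leq\tau_F^*$; Claim~2 applied to $\tau_F^*$ gives the M-attainer $\tau_F^i(\tau_F^*)\leq\tau_M^*$, while $\tau_F^i(\tau_F^*)\geq\tau_F^*$ by definition of the follower reaction time. Chaining these inequalities forces $\tau_F^*=\tau_F^i(\tau_F^*)=\tau_M^*$, completing the proof.

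The main subtlety I expect is ensuring that the identity is legitimately applied with $\tau_F^i(\tau)$ as an $\F_\vartheta$-measurable stopping time for each $\tau\in\T$ with $\tau\geq\vartheta$, so that the tower step and the essential-supremum manipulations are valid; this is secured by the measurability/existence properties of the latest optimal reaction time $\tau_F^i(\cdot)$ already established after \eqref{F}, together with the standing integrability assumptions that make all conditional expectations well defined.
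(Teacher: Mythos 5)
Your proof is correct, and it takes a genuinely different and in fact cleaner route than the paper's. The paper verifies, step by step, the submartingale/supermartingale optimality inequalities for the process $(F^i_t)$ at a time $\tau_M^i$ that is optimal for $(M^i_t)$, and then does the converse, which requires a careful case analysis on $\{\tau\leq\tau_M^i\}$ and $\{\tau\geq\tau_M^i\}$ and repeated use of the optimality of $\tau_F^i(\cdot)$. You instead extract the single identity
\[
E\bigl[F^i_\tau\bigm|\F_\vartheta\bigr]=E\bigl[M^i_{\tau_F^i(\tau)}\bigm|\F_\vartheta\bigr]-E\biggl[\int_\tau^{\tau_F^i(\tau)}\bigl(\pi^{0i}_s-\pi^{Fi}_s\bigr)\,ds\biggv\F_\vartheta\biggr],
\]
which, together with $M^i_\cdot\leq F^i_\cdot$ from \eqref{M} and $\pi^{0i}_\cdot\geq\pi^{Fi}_\cdot$, sandwiches the value of \eqref{maxF} between values of the $M$-problem; equality of the two suprema and both attainment claims then drop out in a few lines. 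The paper's approach buys explicit optimality inequalities for $(F^i_t)$, which could be reused elsewhere; yours is shorter and makes the economic content (waiting as a follower until $\tau_F^i(\tau)$ only gains the nonnegative stream $\pi^{0i}-\pi^{Fi}$ relative to joint investment at $\tau_F^i(\tau)$) transparent. One minor point: for the latest-solution claim you introduce $\tau_F^*$, the latest attainer of \eqref{maxF}, before establishing its existence; your own Claim~2 in fact gives this for free (any attainer $\tau$ of \eqref{maxF} satisfies $\tau\leq\tau_F^i(\tau)\leq\tau_M^*$, so $\tau_M^*$ is itself the latest attainer of \eqref{maxF}), so it would be cleaner to phrase it that way rather than positing $\tau_F^*$ a priori.
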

%
%
Thus, \eqref{sim_i} and \eqref{maxF} have a latest solution $\tau_M^i\geq\tau^i_F(\vartheta)$. That inequality may be strict in general. If $\pi^{0i}_\cdot\equiv\pi^{Fi}_\cdot$, however, like in typical market entry models, then \eqref{sim_i} equals $F^i_\vartheta$ and $\tau_F^i(\vartheta)$ is the latest time attaining \eqref{maxF}. 


\subsection{Equilibria without preemption}\label{subsec:othereql}

There can be other equilibria without preemption, even if both firms have a strict first-mover advantage at some times, i.e.\ if the region $\cP$ of \emph{potential} preemption is non-empty. Preemption can be avoided by sufficiently profitable continuation equilibria, and this will then constitute a Pareto improvement. For instance, joint investment at a future stopping time $\tau_J$ can be an equilibrium if that yields at least the same expected payoff as becoming leader at any earlier time, like in the right panel of Figure \ref{fig:FT}. The firms can also plan to invest sequentially if one accepts to become follower when the other invests. Such equilibria depend on the relative magnitudes of the revenue processes, however, so existence cannot be ensured by simple regularity properties like continuity in Proposition \ref{prop:SPE}. On the contrary, if $\pi^{Fi}_\cdot\equiv\pi^{0i}_\cdot$, then $F^i_\cdot$ is nonincreasing in expectation (a supermartingale), as becoming follower later only leaves less possibilities to invest optimally. Thus, if $L^i_\vartheta>F^i_\vartheta$, then firm $i$ strictly prefers immediate investment to waiting until firm $j$ invests at some $\tau_j>\vartheta$, because waiting would yield at most $E\bigl[F^i_{\tau_j}\bigv\F_{\vartheta}\bigr]\leq F^i_\vartheta$. 

It is therefore necessary that $\pi^{Fi}_\cdot<\pi^{0i}_\cdot$ occurs (e.g.\ due to the first investment stealing business) to have firm $i$ wait until $\tau_j$ despite earlier first-mover advantages. Then it suffices to check for deviations at very specific times identified in Proposition \ref{prop:devi} in Appendix \ref{app:othereql}, which also avoids to maximize the complex leader payoff directly, but uses simpler problems like \eqref{mono1}. For state-space models, it may even suffice to consider deviations at one threshold, like for the applications in Section \ref{sec:example}.

Proposition \ref{prop:devi} can be applied to candidate times $\tau_J\geq\vartheta$ for joint investment, which for both $i=1,2$ need to satisfy $F^i_{\tau_J}=M^i_{\tau_J}$ and to maximize the expected joint investment payoff $E\bigl[M^i_{\tau_J}\bigv\F_\vartheta\bigr]$ as considered in Lemma \ref{lem:Fmaxstop}, at least up to some constraint. If delayed joint investment is not feasible, then preemption may still be avoidable in an equilibrium with sequential investment. In the equilibria of Theorem \ref{thm:SPE} for $\cP=\emptyset$, firm $1$ becomes leader at an optimal time before simultaneous investment would happen at $\tau_F^2(\vartheta)$. Simply ignoring preemption in a non-empty $\cP$, firm $1$'s problem becomes \eqref{maxL_S}. Any solution $\tau_S\in\T$ of \eqref{maxL_S} or \eqref{mono1} is a best reply for firm $1$ against $\tau_F^2(\vartheta)$. Optimality of the latter for firm $2$ against $\tau_S\leq\tau_F^2(\vartheta)$ can be verified by a further simplification, Corollary \ref{cor:seqeql} in Appendix \ref{app:othereql}. Under an additional revenue order, it suffices to check that $[\tau_S]$ is not in $\cP$.

\section{Applications}\label{sec:example}

As an illustration, the previous general results will now be applied to two typical models from the strategic real options literature, in order to provide complete proofs for basic equilibrium outcomes that are discussed extensively in the literature, to derive neglected equilibria that may constitute Pareto improvements or actually display behavior that qualitatively differs from deterministic models, and to argue that some equilibria analyzed in the literature only exist under additional restrictions, if at all. The model of \cite{PawlinaKort06} first serves as the main vehicle. Here we allow for weak orders among its parameters so that the models of \cite{Weeds02} and \cite{FudenbergTirole85} can be nested.\footnote{\label{fn:nest}%
In \cite{Weeds02}, investment starts an R{\&}D project with success arrival rate $h>0$. The expected payoffs are equivalent to those from \eqref{PKmodel} with augmented discount rate $r+h$ instead of $r$, $D_{00}=D_{01}=0$, $D_{10}=h$, $D_{11}=h(r+h-\mu)/(r+2h-\mu)$ and $I^1=I^2=K$. The model of \cite{FudenbergTirole85} with their concrete discounted cost function $c(t)=e^{-(r+a)t}$ is equivalent to \eqref{PKmodel} with $D_{00}=\pi_0(0)$, $D_{01}=\pi_0(1)$, $D_{10}=\pi_1(1)$, $D_{11}=\pi_1(2)$, $\mu=a$, augmented discount rate $r+a$ instead of $r$ and $\sigma=0$. The solutions in Section \ref{subsec:PK} converge to the solutions for the deterministic case as $\sigma\to 0$. In particular, $\beta_1$ from fn.\ \ref{fn:threshold} increases to $r/(\mu^+)$, so the investment thresholds converge to those for the deterministic case by $\beta_1/(\beta_1-1)\to r/(r-\mu^+)$, as does the expected discount factor for the first time that the state $x_t$ exceeds a threshold $x>x_0$, $(x_0/x)^{\beta_1}$.
}
Afterwards the results of \cite{Grenadier96} will be revisited using the same arguments, although his economic setting is quite different.

\subsection{Irreversible investment with asymmetric costs}\label{subsec:PK}

The model of \cite{PawlinaKort06} is quite prototypic for the real options literature, but the equilibrium analysis is not complete.\footnote{%
Their proposed preemption equilibrium investment, with the high cost firm $2$ investing at its follower threshold $x_F^2$, can only be seen as an outcome, but not as an equilibrium strategy, because firm $1$ is only willing to invest at the preemption point if there is a preemption threat. Equilibrium verification is also incomplete because~-- as in other papers~-- the argument of a current second-mover advantage is insufficient to justify waiting, and only subgames with low initial states are considered despite the aim for subgame perfectness.
}
Theorem \ref{thm:SPE} yields proper subgame perfect equilibria. We will analyze them in detail, to show some remarkable neglected behavior and to make the arguments applicable to other models. The revenue streams for firm $i\in\{1,2\}$ in \cite{PawlinaKort06} are
\begin{equation}\label{PKmodel}
\left.\begin{aligned}
\pi^{0i}_t&=e^{-rt}x_tD_{00},\qquad & \pi^{Li}_t&=e^{-rt}(x_tD_{10}-rI^i), \\ 
\pi^{Fi}_t&=e^{-rt}x_tD_{01}, & \pi^{Bi}_t&=e^{-rt}(x_tD_{11}-rI^i),
\end{aligned}\quad\right\}
\end{equation}
with discount factor $r>0$ and demand uncertainty reflected by a geometric Brownian motion $(x_t)$ satisfying
\begin{equation}\label{GBM}
dx_t=\mu x_t\,dt+\sigma x_t\,dB_t,
\end{equation}
where $(B_t)$ is Brownian noise, $\mu<r$ the expected growth rate and $\sigma>0$ the volatility. The constants $D_{10}\geq D_{11}$ and $D_{00}\geq D_{01}$ capture a negative impact of investment on the opponent's revenue, and $I^2\geq I^1>0$ are the firms' constant investment costs, here capitalized. 
The leader and follower processes are then continuous (as functions of the state $x_t$), and the present instances of the follower problems \eqref{F} and the monopoly problems \eqref{Mono^i} are solved by investing when $x_t$ exceeds some thresholds $x_F^i$ and $x_L^i$, respectively.\footnote{\label{fn:threshold}%
If $D_{11}>D_{01}$, then $x_F^i=\frac{\beta_1}{\beta_1-1}\cdot\frac{I^i(r-\mu)}{D_{11}-D_{01}}$, where $\beta_1>1$ is the positive root of $\frac12\sigma^2\beta(\beta-1)+\mu\beta-r=0$. If $D_{11}\leq D_{01}$, then $x_F^i=\infty$. Analogously, $x_L^i=\frac{\beta_1}{\beta_1-1}\cdot\frac{I^i(r-\mu)}{(D_{10}-D_{00})^{_+}}$. These are standard from option pricing.
}
Thus, simultaneous investment is an equilibrium for all states $x_\vartheta\geq x_F^2$.

If the preemption region in this model is non-empty, it is characterized by an open interval $(\ubar x,\bar x)$ of the state space $\R_+$ with $\bar x\leq x_F^1\leq x_F^2$ (where both inequalities are strict if $I^2>I^1$ and $D_{10}>D_{11}>D_{01}$), such that one can simply call $(\ubar x,\bar x)$ preemption region. The proof of the following proposition generalizes to other models driven by a continuous Markov process that affects revenues monotonically.

\begin{proposition}\label{prop:PKpreemp}
Consider the specification \eqref{PKmodel}. There are two numbers $\ubar x\leq\bar x\in(0,x_F^1]$ such that $L^2_t>F^2_t\Leftrightarrow x_t\in(\ubar x,\bar x)$ for all $t\in\R_+$, with $\bar x=x_F^2$ if $I^1=I^2$.
\end{proposition}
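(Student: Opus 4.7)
My plan is to reduce the question to the sign of a single real-valued function of the current state via the Markov property, and then to establish the interval structure by a concavity argument on an explicit closed form. Since $(x_t)$ is a time-homogeneous Markov process and every revenue in \eqref{PKmodel} factors as $e^{-rt}$ times a linear function of $x_t$, a standard scaling argument yields deterministic functions $l^2, f^2, m^2 \colon \R_+ \to \R$ with
\[
L^2_t = \int_0^t \pi^{02}_s\,ds + e^{-rt} l^2(x_t), \qquad F^2_t = \int_0^t \pi^{02}_s\,ds + e^{-rt} f^2(x_t),
\]
and analogously for $M^2_t$. Hence $\{L^2_t > F^2_t\} = \{g(x_t) > 0\}$ with $g := l^2 - f^2$, so it suffices to show that $\{g > 0\}$, viewed as a subset of the state space, is an open interval contained in $(0, x_F^1]$.

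For the upper bound $\bar x \leq x_F^1$, I invoke the observation in the footnote after Lemma \ref{lem:L-F}: whenever $x_t \geq x_F^1$, firm $1$'s follower reaction is immediate, so $l^2(x_t) = m^2(x_t) \leq f^2(x_t)$. For the interval structure I appeal to standard GBM option-pricing formulas (cf.\ fn.\ \ref{fn:threshold}) to obtain, on $[0, x_F^1]$,
\[
f^2(x) = \frac{D_{01}\,x}{r - \mu} + A\, x^{\beta_1}, \qquad l^2(x) = \frac{D_{10}\,x}{r - \mu} - I^2 + B\, x^{\beta_1},
\]
with $A \geq 0$ determined by value matching and smooth pasting of $f^2$ at $x_F^2$, and $B = (x_F^1)^{1 - \beta_1} (D_{11} - D_{10})/(r - \mu) \leq 0$ determined by the condition $l^2(x_F^1) = m^2(x_F^1)$ (firm $1$ is triggered at $x_F^1$). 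Setting $\alpha := (D_{10} - D_{01})/(r - \mu) \geq 0$ and $\gamma := B - A$ gives $g(x) = \alpha\, x + \gamma\, x^{\beta_1} - I^2$ on $[0, x_F^1]$.

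Since $g(0) = -I^2 < 0$, the analysis splits on the sign of $\gamma$. If $\gamma \geq 0$, then $g$ is nondecreasing and $g(x_F^1) \leq 0$ forces $g \leq 0$ on $[0, x_F^1]$, so the preemption region is empty (take $\ubar x = \bar x$). If $\gamma < 0$, then $g'' = \gamma \beta_1 (\beta_1 - 1) x^{\beta_1 - 2} < 0$ on $(0, \infty)$ because $\beta_1 > 1$, so $g$ is strictly concave; together with $g(0) < 0$ this forces $\{g > 0\}$ to be an open interval $(\ubar x, \bar x)$ with $\bar x \leq x_F^1$. For $x \in [x_F^1, x_F^2]$ (only relevant when $I^1 < I^2$), $l^2(x) = m^2(x) < f^2(x)$ gives $g(x) < 0$; and for $x \geq x_F^2$, $l^2 = m^2 = f^2$ gives $g = 0$. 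Thus $\{g > 0\}$ is confined to $(0, x_F^1]$ as claimed.

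For the identification $\bar x = x_F^2$ when $I^1 = I^2$: then $x_F^1 = x_F^2 =: x_F$, so $f^2(x_F) = m^2(x_F) = l^2(x_F)$, and $g(x_F) = 0$. Smooth pasting of $f^2$ at $x_F$ together with an elementary differentiation of the closed form of $l^2$ yields
\[
g'(x_F) = \frac{(1 - \beta_1)(D_{10} - D_{11})}{r - \mu},
\]
which is strictly negative in the non-degenerate case $D_{10} > D_{11}$, so $g > 0$ on a left neighborhood of $x_F$ and $\bar x = x_F = x_F^2$. The main obstacle I anticipate is keeping the interval-structure argument transparent without losing the generality flagged after the proposition: the concavity in $x^{\beta_1}$ that I exploit here is a GBM artefact, and for general continuous Markov dynamics it would need to be replaced by a direct single-crossing property of the difference of the stopped reward streams $\pi^{L2} - \pi^{B2}$ against $\pi^{F2} - \pi^{02}$.
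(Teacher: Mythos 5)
Your argument is correct, but it takes a genuinely different route from the paper's. You reduce the proposition to the sign of an explicit GBM-specific closed form $g(x)=\alpha x + \gamma x^{\beta_1} - I^2$ on $[0,x_F^1]$ and then exploit strict concavity in $x^{\beta_1}$ (when $\gamma<0$) to obtain the interval structure, plus a smooth-pasting derivative computation $g'(x_F)=(1-\beta_1)(D_{10}-D_{11})/(r-\mu)<0$ to identify $\bar x=x_F^2$ when $I^1=I^2$. The paper instead never writes down the value functions: it fixes $t=0$ by the strong Markov property, gets the lower bound directly from $L^2_0-F^2_0\le x_0(D_{10}-D_{01})/(r-\mu)-I^2$, and proves convexity of the set $\{x:L^2_0>F^2_0\}$ by a contradiction argument using the exact decomposition $L^2_0-F^2_0 = E[\int_0^{\tau'}e^{-rs}(x_s(D_{10}-D_{01})-rI^2)\,ds]+E[L^2_{\tau'}-F^2_{\tau'}]$ for $\tau'$ the hitting time of an intermediate state below $x_F^1$; for the $I^1=I^2$ claim it invokes uniqueness of the threshold $x^2_\Delta$ solving the auxiliary problem \eqref{maxL2-F2}. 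The tradeoff is exactly what you flag at the end: your closed-form/concavity route is shorter and more transparent for this specific GBM specification, whereas the paper's argument is a direct single-crossing proof on the stopped reward streams that (as the text preceding the proposition notes) generalizes to any continuous Markov state process affecting revenues monotonically.

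Two small points worth tidying. First, the branch ``if $\gamma\ge0$ then $g$ is nondecreasing'' is not quite right as stated: under the paper's standing assumption $D_{10}\ge D_{11}$ one always has $B\le 0$ and hence $\gamma=B-A\le 0$, and the boundary case $\gamma=0$ forces $D_{10}=D_{11}\le D_{01}$, whence $\alpha\le 0$ and $g$ is in fact \emph{non}increasing. The conclusion $\{g>0\}=\emptyset$ survives either way because $g(0)=-I^2<0$, but the monotonicity claim should be corrected. Second, the proposition permits $D_{10}<D_{01}$, in which case $\alpha<0$; your concavity argument still yields that $\{g>0\}$ is an open interval (and in fact empty, since then $g'\le\alpha<0$), but it would be cleaner to note this rather than implicitly assuming $\alpha\ge 0$.
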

%
%
By Lemma \ref{lem:PreemMono} in Appendix \ref{app:locpreem} and the discussion thereafter it is enough to check if $L^2_0-F^2_0>0$ for $x_0=x_\Delta^2$, the threshold solving \eqref{maxL2-F2}, which is the case if the cost-disadvantage $I^2/I^1$ is not too large; otherwise firm $2$ prefers to invest much later than firm $1$ and the preemption region is empty (in particular if $x_\Delta^2\geq x_F^1$, where firm $1$ would follow immediately).\footnote{\label{fn:Pempty}%
The precise condition $(I^2/I^1)^{\beta_1-1}<((1+c)^{\beta_1}-1)/(\beta_1c)$ if $c:=(D_{10}-D_{11})/(D_{11}-D_{01})\in(0,\infty)$ is obtained by plugging $x_\Delta^2=\frac{\beta_1}{\beta_1-1}\cdot\frac{I^2(r-\mu)}{(D_{10}-D_{01})^{_+}}$ (cf.\ fn.\ \ref{fn:threshold}) into the explicit functional expressions for $L^2(x_0)$ and $F^2(x_0)$, (8) and (9) in \cite{PawlinaKort06}, who obtain the same condition by a graphical argument. The condition implies $x_\Delta^2<x_F^1$. The constraint on the cost ratio strictly exceeds $1$ and is strictly increasing in $c$ to infinity by $\beta_1>1$. If $D_{10}>D_{01}\geq D_{11}$, then $x_F^1=\infty$ and the preemption region is non-empty for all $I^2\geq I^1$. Finally, if $D_{10}\leq\max\{D_{11},D_{01}\}$, then $x_\Delta^2\geq x_F^1$ and the preemption region is empty.
}

We can now characterize the equilibria of Theorem \ref{thm:SPE} for this model, which also have remarkable outcomes not captured in \cite{PawlinaKort06}. Existence is guaranteed by Proposition \ref{prop:SPE} thanks to continuity, and it suffices to solve the simpler constrained monopoly problems \eqref{MonoPreem}. By the strong Markov property, this amounts to finding the region in the state space $\R_+$ where immediate investment is optimal in the problem for $t=0$,
\begin{equation}\label{MonoPreemPK}
\sup_{\tau\leq\tau_\cP(0)\wedge\tau^2_F(0)}E\biggl[\int_{\tau}^{\infty}e^{-rs}(x_s(D_{10}-D_{00})-rI^1)\,ds\biggr].
\end{equation}
The constraint here takes the form $\min\{\tau_\cP(0),\tau^2_F(0)\}=\inf\{t\geq 0\mid x_t\in(\ubar x,\bar x)\cup[x_F^2,\infty)\}=\inf\{t\geq 0\mid x_t\in[\ubar x,\bar x]\cup[x_F^2,\infty)\}$ (\nbd{P}a.s.). Problem \eqref{MonoPreem} is then solved by investing once the state $x_t$ hits the investment region $\{x\in\R_+\mid \tau=0\text{ attains \eqref{MonoPreemPK} for }x_0=x\}$ from time $\vartheta$.

First consider a non-empty preemption region $(\ubar x,\bar x)$ that is connected to the unconstrained monopoly investment region $[x_L^1,\infty)$, as it holds for the market entry variant of the model with $D_{01}=D_{00}$, cf.\ Lemma \ref{lem:PreemMono}. Then immediate investment is optimal in \eqref{MonoPreemPK} for any state $x_0\geq\bar x\geq x_L^1$, as it is in the unconstrained problem. For states $x_0<\ubar x$ the preemption constraint in \eqref{MonoPreemPK} is a constant upper threshold, so it is optimal to wait there until $x_t$ exceeds either the constraint $\ubar x$ or the unconstrained threshold $x_L^1$, see Lemma \ref{lem:thresconstr} in Appendix \ref{app:tech}. The subgame perfect equilibrium is complete in this case: no investment for states strictly below $\min\{\ubar x,x_L^1\}$, preemptive investment in $[\ubar x,\bar x]$ as described in Section \ref{subsec:preem}, firm $1$ investing as the leader in $[x_L^1,x_F^2)\setminus[\ubar x,\bar x]$, and simultaneous investment for all states in $[x_F^2,\infty)$.

Next, if the preemption region is empty, then firm $1$ only faces the upper constraint $x_F^2$ in \eqref{MonoPreemPK}. Again by Lemma \ref{lem:thresconstr}, it is then optimal for firm $1$ to invest as soon as $x_t$ exceeds either the constraint $x_F^2$ or the unconstrained monopoly threshold $x_L^1$. Note that for the market entry variant with $D_{00}=D_{01}<D_{11}$, $x_L^1\leq x_F^1<x_F^2<\infty$. However, even if firm $1$ uses the unconstrained monopoly threshold, it is still constrained by firm $2$'s plan. Firm $1$ can only maximize the leader payoff subject to firm $2$ investing also \emph{proactively} in $[x_F^2,\infty)$.

If $D_{00}=D_{01}$, as for market entry, then preemption cannot be avoided and thus neither simultaneous investment in $[x_F^2,\infty)$ by Lemma \ref{lem:Fmaxstop}, and the equilibrium in each of the previous cases is unique. Indeed, if the preemption region is non-empty, it must contain the optimal stopping region for the continuous process $L^2_t-F^2_t$, which takes positive values only there. Then one also has to stop $L^2_t$ in that stopping region, the problem considered in Lemma \ref{lem:Lmaxstop}, because $L^2_t=(L^2_t-F^2_t)+F^2_t$ and $F^2_t$ is nonincreasing in expectation (a supermartingale) now.

So far, with $x_L^1\leq\bar x$ or $\cP=\emptyset$, investment occurs if and only if demand is high enough, i.e.\ if the state is at least $\min\{\ubar x,x_L^1\}$ or $\min\{x_F^2,x_L^1\}$. This behavior is the same for the stochastic model and its deterministic version, e.g.\ that in \cite{FudenbergTirole85} (cf.\ fn.\ \ref{fn:nest}).

\subsubsection{Preemption when demand falls}

Qualitatively different behavior can be observed in the remaining case, a monopoly threshold lying above a non-empty preemption region, $x_L^1>\bar x>\ubar x$, which requires a sufficiently high pre-investment revenue level $D_{00}>D_{01}$. Firm $1$ may then remain inactive even where it would invest immediately as follower (in states above $x_F^1$), because it has higher opportunity costs as prospective leader. This phenomenon is not addressed by \cite{PawlinaKort06}, who only consider states below $\ubar x$, where the same behavior as before holds: firm $1$ waits until $x_t$ hits the constraint $\ubar x<x_L^1$. Problem \eqref{MonoPreemPK} becomes more interesting for states in $(\bar x,x_F^2)$, where both constraints may be binding if that interval intersects the continuation region $[0,x_L^1)$ of the unconstrained problem, and behavior may be more complex. 

A lower constraint like presently $\bar x$ has a much stronger effect than any upper constraint as considered before. Two cases can be distinguished for the problem of delaying the revenue change $\pi^{L1}_t-\pi^{01}_t=e^{-rt}(x_t(D_{10}-D_{00})-rI^1)$ in $[\bar x,x_F^2]$. The easier one is that $x(D_{10}-D_{00})>rI^1$ on all of $(\bar x,x_F^2)$. Then it is optimal to invest immediately everywhere, as any delay is a loss of revenue. The more difficult case is that $x(D_{10}-D_{00})<rI^1$ near the preemption region. Firm $1$ must wait where this inequality holds, in order not to start with running losses, so one has to determine the investment region towards the upper constraint $x_F^2$. Nevertheless, it may now be optimal to invest far before the constraint is reached.

\begin{proposition}\label{prop:lowconstr}
Consider the specification \eqref{PKmodel} and suppose the corresponding preemption region $(\ubar{x},\bar x)\subset(0,x_F^1]$ from Proposition \ref{prop:PKpreemp} is non-empty. If $\bar x(D_{10}-D_{00})\geq rI^1$, then the solution of problem \eqref{MonoPreemPK} for all states $x_0$ in $(\bar x,x_F^2)$ is to invest immediately, whereas if $D_{10}-D_{00}\leq 0$, the solution is to wait until the state exits $(\bar x,x_F^2)$. 

If $0<\bar x(D_{10}-D_{00})<rI^1$, then there is a unique threshold $\hat x\in[rI^1/(D_{10}-D_{00}),x_L^1)$ solving
\begin{equation}\label{hatx}
(\beta_1-1)A(x)x^{\beta_1}+(\beta_2-1)B(x)x^{\beta_2}=I^1
\end{equation}
with
\begin{equation}\label{AB}
\begin{pmatrix} A(x) \\ B(x) \end{pmatrix}=\Bigl[\bar x^{\beta_1}x^{\beta_2}-x^{\beta_1}\bar x^{\beta_2}\Bigr]^{-1}\begin{pmatrix} x^{\beta_2} & -\bar x^{\beta_2} \\ -x^{\beta_1} & \bar x^{\beta_1} \end{pmatrix}\begin{pmatrix} \bar x\frac{D_{10}-D_{00}}{r-\mu}-I^1 \\ x\frac{D_{10}-D_{00}}{r-\mu}-I^1  \end{pmatrix}
\end{equation}
and $\beta_1>1$ and $\beta_2<0$ the roots of $\frac12\sigma^2\beta(\beta-1)+\mu\beta-r=0$, and the solution of problem \eqref{MonoPreemPK} for all states $x_0$ in $(\bar x,x_F^2)$ is to invest when $(x_t)$ exits $(\bar x,\hat x\wedge x_F^2)$.
\end{proposition}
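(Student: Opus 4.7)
The plan is to first recast problem \eqref{MonoPreemPK} as a standard perpetual optimal stopping problem, then dispatch the three cases: the first two by elementary (sub/super)martingale arguments and the third by a classical free-boundary computation. For $x_0\in(\bar x,x_F^2)$, the constraint in \eqref{MonoPreemPK} reduces \nbd{P}a.s.\ to the first exit time $\tau_E:=\inf\{t\geq 0\mid x_t\notin(\bar x,x_F^2)\}$, which is \nbd{P}a.s.\ finite since a geometric Brownian motion exits any bounded open interval in finite time. Using Fubini together with the identity $E[e^{-rs}x_s\mid\F_\tau]=e^{-r\tau}x_\tau e^{-(r-\mu)(s-\tau)}$ for $s\geq\tau$ (valid as $r>\mu$), the objective in \eqref{MonoPreemPK} equals $E[e^{-r\tau}g(x_\tau)]$ for any admissible $\tau$, where $g(x):=\frac{x(D_{10}-D_{00})}{r-\mu}-I^1$. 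By It\^o's formula, $e^{-rt}g(x_t)$ has drift $e^{-rt}\bigl(rI^1-x_t(D_{10}-D_{00})\bigr)\,dt$.

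Cases 1 and 2 follow at once. In Case 1, $\bar x(D_{10}-D_{00})\geq rI^1$ forces $D_{10}>D_{00}$ and makes the above drift non-positive on $\{t\leq\tau_E\}$ (where $x_t\geq\bar x$); hence $e^{-rt}g(x_t)$ is a bounded supermartingale on $[0,\tau_E]$, and optional sampling gives $E[e^{-r\tau}g(x_\tau)]\leq g(x_0)$ for all admissible $\tau$, attained at $\tau=0$. In Case 2, $D_{10}-D_{00}\leq 0$ makes the drift at least $e^{-rt}rI^1>0$, so $e^{-rt}g(x_t)$ is a bounded submartingale on $[0,\tau_E]$, and the supremum over $\tau\leq\tau_E$ is attained at $\tau=\tau_E$.

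For Case 3, $0<\bar x(D_{10}-D_{00})<rI^1$, I would guess a continuation region $(\bar x,\hat x)$ with $\hat x\leq x_F^2$ (the alternative $\hat x\geq x_F^2$ is absorbed by the $\hat x\wedge x_F^2$ notation in the statement). On this region the candidate value $V$ must satisfy $\mathcal{L}V=rV$, where $\mathcal{L}:=\tfrac12\sigma^2 x^2\partial_{xx}+\mu x\partial_x$, whose general solution is $V(x)=Ax^{\beta_1}+Bx^{\beta_2}$. The two value-matching conditions $V(\bar x)=g(\bar x)$ and $V(\hat x)=g(\hat x)$ form a $2\times 2$ linear system whose Cramer's-rule inversion is exactly \eqref{AB}. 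The smooth-pasting condition $V'(\hat x)=g'(\hat x)$, i.e.\ $\beta_1 A\hat x^{\beta_1}+\beta_2 B\hat x^{\beta_2}=\hat x(D_{10}-D_{00})/(r-\mu)$, combined with $V(\hat x)=g(\hat x)$ eliminates the terms proportional to $\hat x(D_{10}-D_{00})/(r-\mu)$ and to $I^1$, leaving precisely \eqref{hatx}.

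The main obstacle is establishing existence and uniqueness of $\hat x\in[rI^1/(D_{10}-D_{00}),x_L^1)$ and verifying optimality. The lower bound $\hat x\geq rI^1/(D_{10}-D_{00})$ is forced because on any optimal stopping region one must have $(\mathcal{L}-r)g\leq 0$, i.e.\ $x(D_{10}-D_{00})\geq rI^1$. The strict upper bound $\hat x<x_L^1$ follows from comparison with the unconstrained monopoly value $V_\infty(x)=g(x_L^1)(x/x_L^1)^{\beta_1}$ on $[0,x_L^1]$: since $V_\infty(\bar x)>g(\bar x)$ (as $\bar x<x_L^1$), adding the Dirichlet condition $V_y(\bar x)=g(\bar x)$ at $y=x_L^1$ strictly lowers $V_y$ relative to $V_\infty$ on $(\bar x,x_L^1)$, forcing $V_y'(x_L^1)>V_\infty'(x_L^1)=g'(x_L^1)$, so the smooth-pasting root must lie strictly below $x_L^1$. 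Continuity and monotonicity of the defect $\phi(y):=V_y'(y)-g'(y)$ on $[rI^1/(D_{10}-D_{00}),x_L^1]$ --- a routine computation exploiting $\beta_1>1>0>\beta_2$ --- then yields a unique root $\hat x$. Optimality is finally confirmed by the standard variational-inequality verification: $V$ extended by $g$ on $[\hat x,x_F^2]$ is $C^1$ by smooth pasting, $V\geq g$ on $[\bar x,x_F^2]$, and $(\mathcal{L}-r)V\leq 0$ on the stopping region $[\hat x,x_F^2]$ (where $x(D_{10}-D_{00})\geq\hat x(D_{10}-D_{00})\geq rI^1$); consequently $e^{-rt}V(x_t)$ is a bounded supermartingale on $[0,\tau_E]$, dominating $E[e^{-r\tau}g(x_\tau)]$ for every admissible $\tau$ and attained by stopping at $\inf\{t\geq 0\mid x_t\geq\hat x\}\wedge\tau_E$.
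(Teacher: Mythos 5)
Cases 1 and 2 are correct and essentially the same argument the paper makes (the paper compares accrued revenue directly rather than invoking It\^o, but that is cosmetic). The setup of Case 3 --- reducing to exit-time stopping of $e^{-r\tau}g(x_\tau)$, value matching at $\bar x$ and $\hat x$, smooth pasting to get \eqref{hatx} --- is also correct and matches the paper. But the proposal has three genuine gaps in the verification and in existence/uniqueness, and these are exactly the places the paper does real work.

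First, you assert ``$V\geq g$ on $[\bar x,x_F^2]$'' as though it were automatic. It is not: a function of the form $Ax^{\beta_1}+Bx^{\beta_2}$ that matches an affine $g$ at two points $\bar x,\hat x$ could in principle lie below $g$ in between. The paper devotes a full paragraph to this, first showing $B(\hat x)<0$, then analyzing the sign of $V''$ (which can change at most once) and using the smooth-pasting inequality $V'(\hat x-)\leq g'(\hat x)$ to control the convex/concave pieces of $V-g$. You cannot skip this step; without it the supermartingale $e^{-rt}V(x_t)$ does not dominate $e^{-r\tau}g(x_\tau)$ and the verification collapses.

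Second, for existence of $\hat x$ in $[rI^1/(D_{10}-D_{00}),x_L^1)$ you establish the sign of the defect only at the upper endpoint ($\phi(x_L^1)>0$ via a comparison/Hopf argument); you use $(\mathcal L-r)g\leq 0$ only as a necessary restriction on where $\hat x$ could live, not to show $\phi\leq 0$ at the lower endpoint $rI^1/(D_{10}-D_{00})$. Without that sign, the intermediate-value argument has nothing to bite on. The paper shows explicitly, via the candidate $\hat x=rI^1/(D_{10}-D_{00})$ (on $(\bar x,\hat x)$ the integrand in the representation of $V$ is strictly negative, so $V>g$ there and hence $V'(\hat x-)\leq g'(\hat x)$), that ``$\leq$'' holds in \eqref{hatx} at that endpoint.

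Third, you claim uniqueness from ``continuity and monotonicity of the defect $\phi(y)$ --- a routine computation exploiting $\beta_1>1>0>\beta_2$.'' This is exactly the step you must not wave away. The coefficients $A(y),B(y)$ both depend on $y$ through the Cramer inversion against the fixed lower boundary $\bar x$, and after collecting terms the monotonicity of $y\mapsto(\beta_1-1)A(y)y^{\beta_1}+(\beta_2-1)B(y)y^{\beta_2}$ is not a one-line sign check. Notably, the paper does \emph{not} prove monotonicity; it proves uniqueness by a different route: if $\hat x_1<\hat x_2$ both solved \eqref{hatx}, then the verified value function is the same for both, which forces $E\bigl[\int_0^{\hat\tau_2}e^{-rs}(x_s(D_{10}-D_{00})-rI^1)\,ds\bigr]=0$ for $x_0\in[\hat x_1,\hat x_2]$; splitting the integral at the first hitting time of $\hat x_1$ isolates a strictly positive contribution from the region $(\hat x_1,\hat x_2)$ (where the integrand is strictly positive), a contradiction. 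If you want to keep the monotonicity route, you must actually carry out and display the computation; as written the uniqueness claim is a gap.

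A smaller omission: the paper explicitly allows $x_F^2=\infty$ (see the opening line of its proof), in which case your ``GBM exits any bounded open interval in finite time'' and ``bounded supermartingale'' assertions need to be replaced by a class-(D)/uniform-integrability argument. The paper also spells out the boundary case $rI^1/(D_{10}-D_{00})<x_F^2<\hat x$ (where the smooth-pasting relation becomes a strict inequality at $x_F^2$), which your remark that it is ``absorbed by the $\hat x\wedge x_F^2$ notation'' glosses over.
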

%
%
The ``smooth-pasting'' condition that is frequently used to guess value functions only holds in the last case and only if $\hat x\leq x_F^2$. If $x_F^2(D_{10}-D_{00})\leq rI^1$, then $\hat x\geq x_F^2$ and the solution is to wait until the state exits $(\bar x,x_F^2)$. It is easy to calculate the solutions $\hat x$ of \eqref{hatx}, which are typically much lower than the upper constraint $x_F^2$ or the unconstrained threshold $x_L^1$. Thus, the risk of getting trapped at $\bar x$ by preemption induces much earlier investment, as illustrated in Section \ref{subsec:complead} below. This effect cannot be observed in the deterministic version of the model with a growing market (or falling cost).

\subsubsection{Joint investment equilibria}\label{subsec:simeql}

If $D_{00}>D_{01}$, then there are potentially many more equilibria than those from Theorem \ref{thm:SPE}, as one can now drop the premise that preemption occurs in the preemption region, and/or that simultaneous investment occurs everywhere above $x_F^2$. 

First, Proposition \ref{prop:devi} is now applied to verify equilibria of delayed joint investment, which cannot happen below $x_F^2$ for firm $2$ to invest simultaneously. The highest expected value of joint investment can be achieved by solving \eqref{sim_i}, which yields a maximal threshold, say $x_M^1$ for firm $1$. But one can also consider constrained versions of that problem, with some investment threshold ${x_J}\in[x_F^2,x_M^1]$. Joint investment triggered by ${x_J}$ is an equilibrium if firm $1$ does not want to become leader at the threshold solving problem \eqref{monoDi}, which is $\min\{x_J,x_L^1\}$ again by Lemma \ref{lem:thresconstr}. Specifically, the cost difference cannot be too large, such that firm $1$ cannot enjoy a leader's monopoly revenue for too long, which limits its leader payoff.

\begin{proposition}\label{prop:simeqlPK}
Consider the specification \eqref{PKmodel} and let $x_M^1\geq x_L^1\in[0,\infty]$ denote the threshold solving problem \eqref{sim_i} for firm $1$.\footnote{%
$x_M^1=\frac{\beta_1}{\beta_1-1}\cdot\frac{I^1(r-\mu)}{(D_{11}-D_{00})^+}$, cf.\ fn.\ \ref{fn:threshold}.
}
Suppose $x_M^1\geq x_F^2$. Then there exists a subgame perfect equilibrium of simultaneous investment triggered by the threshold ${x_J}\in[x_F^2,x_M^1]$ iff that yields firm $1$ at least the expected payoff $L^1_0$ for $x_0=x_L^1<x_F^2$, which is iff
\[
x_L^1\geq x_F^2\quad\Leftrightarrow\quad D_{10}\leq D_{00}\quad\text{or}\quad\frac{I^2}{I^1}\leq\frac{(D_{11}-D_{01})^+}{D_{10}-D_{00}}
\]
or if
\begin{equation}\label{kappa}
\biggl(\frac{I^2}{I^1}\biggr)^{\beta_1-1}\biggl[1+\biggl(\frac{x_L^1}{{x_J}}\biggr)^{\beta_1}\biggl(\beta_1-1-\frac{{x_J}}{x_L^1}\beta_1\frac{D_{11}-D_{00}}{D_{10}-D_{00}}\biggr)\biggr]\leq\beta_1\frac{D_{10}-D_{11}}{D_{10}-D_{00}}\biggl(\frac{(D_{11}-D_{01})^+}{D_{10}-D_{00}}\biggr)^{\beta_1-1}
\end{equation}
with $\beta_1>1$ from Proposition \ref{prop:lowconstr}. The LHS of \eqref{kappa} is strictly positive and strictly decreasing in ${x_J}\in[x_L^1,x_M^1]$ if $x_L^1<x_F^2$. 
\end{proposition}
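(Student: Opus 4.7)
The plan is to verify the candidate symmetric strategies $\tau_\vartheta^i = \inf\{t \geq \vartheta \mid x_t \geq x_J\}$, $i=1,2$, by means of Proposition~\ref{prop:devi}. At any $\vartheta$ with $x_\vartheta \geq x_J$ we have $x_\vartheta \geq x_F^2 \geq x_F^1$, so both follower problems are solved by immediate investment and $F^i_\vartheta = L^i_\vartheta = M^i_\vartheta$; simultaneous investment is then an equilibrium by Section~\ref{subsec:follower}. For $\vartheta$ with $x_\vartheta < x_J$ the central check is firm~$1$'s potential deviation to become leader before $\tau_J$. Because $x_J \geq x_F^2$ forces $\tau_J \geq \tau_F^2$ pathwise, deviation times $\tau \in [\vartheta,\tau_F^2]$ fall into the constrained leader problem whose solutions by Lemma~\ref{lem:stoptau^1_F} coincide with those of \eqref{mono1}, whose latest threshold solution is $\min\{x_L^1,x_F^2\}$ by Lemma~\ref{lem:thresconstr}. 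For deviations $\tau \in [\tau_F^2,\tau_J]$, firm~$2$ reacts immediately so $L^1_\tau = M^1_\tau$, and the expected discounted value of joint investment at threshold $x$ is strictly increasing in $x \in [0,x_M^1]$; thus no such deviation beats the equilibrium play $\tau_J$. Firm~$1$'s no-deviation condition therefore reduces to weakly preferring joint investment at $x_J$ to leading at $\min\{x_L^1,x_F^2\}$.

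In Case~A ($x_L^1 \geq x_F^2$) the binding deviation threshold is $x_F^2$, where firm~$2$ invests simultaneously, so the deviation payoff equals joint investment at $x_F^2$, which is dominated by joint investment at $x_J \in [x_F^2,x_M^1]$ by the monotonicity just noted. Substituting the explicit thresholds $x_L^1=\tfrac{\beta_1}{\beta_1-1}\tfrac{I^1(r-\mu)}{(D_{10}-D_{00})^+}$ and $x_F^2=\tfrac{\beta_1}{\beta_1-1}\tfrac{I^2(r-\mu)}{D_{11}-D_{01}}$ (or $\infty$) into $x_L^1 \geq x_F^2$ yields the two stated sub-cases. In Case~B ($x_L^1 < x_F^2$) the binding threshold is $x_L^1$ and the condition $M^1(x_L^1) \geq L^1(x_L^1)$ becomes, via the standard option-pricing decompositions
\[
M^1(x_L^1) = \tfrac{x_L^1 D_{00}}{r-\mu} + \bigl(\tfrac{x_L^1}{x_J}\bigr)^{\beta_1}\bigl[\tfrac{x_J(D_{11}-D_{00})}{r-\mu}-I^1\bigr], \quad L^1(x_L^1) = \tfrac{x_L^1 D_{00}}{r-\mu} + \tfrac{I^1}{\beta_1-1} - \bigl(\tfrac{x_L^1}{x_F^2}\bigr)^{\beta_1}\tfrac{x_F^2(D_{10}-D_{11})}{r-\mu},
\]
together with the identity $\tfrac{x_L^1(D_{10}-D_{00})}{r-\mu}-I^1=\tfrac{I^1}{\beta_1-1}$ rearranging the definition of $x_L^1$, exactly the inequality \eqref{kappa} after substituting the closed forms for $x_L^1$ and $x_F^2$. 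Firm~$2$'s deviation check proceeds analogously through Proposition~\ref{prop:devi}: if $x_L^2 \geq x_F^1$ (which always holds in Case~A since $x_L^2 \geq x_L^1 \geq x_F^2 \geq x_F^1$) firm~$2$'s best leader deviation yields joint investment at $x_F^1 \leq x_F^2 \leq x_J$ and is dominated by monotonicity of $M^2$ on $[0,x_M^2]\supseteq[x_F^1,x_J]$; otherwise the analogous inequality with $(I^2,x_F^1)$ in place of $(I^1,x_F^2)$ is derived by the same algebraic steps and is weaker than firm~$1$'s by the cost ordering.

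For the monotonicity claim, let $\kappa(x_J)$ denote the LHS of \eqref{kappa}. A direct differentiation gives
\[
\kappa'(x_J) = -\bigl(\tfrac{I^2}{I^1}\bigr)^{\beta_1-1}\bigl(\tfrac{x_L^1}{x_J}\bigr)^{\beta_1}\tfrac{\beta_1(\beta_1-1)}{x_J}\bigl[1 - \tfrac{x_J(D_{11}-D_{00})}{x_L^1(D_{10}-D_{00})}\bigr],
\]
which is non-positive on $[x_L^1,x_M^1]$ because $x_M^1 = x_L^1(D_{10}-D_{00})/(D_{11}-D_{00})^+$ makes the bracket non-negative, strictly so on $[x_L^1,x_M^1)$. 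At $x_J=x_M^1$ one computes $\kappa(x_M^1) = (I^2/I^1)^{\beta_1-1}\bigl(1-(x_L^1/x_M^1)^{\beta_1}\bigr) > 0$, which lifts to positivity on the whole interval by the monotonicity just established. The main obstacle will be the algebraic identification of $M^1(x_L^1) \geq L^1(x_L^1)$ with \eqref{kappa} in Case~B; all remaining steps reduce to direct appeals to the cited lemmata or to standard real-options calculus.
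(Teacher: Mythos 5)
Your overall strategy---verifying the candidate joint-investment strategies via Proposition~\ref{prop:devi}, reducing firm~$1$'s best leader deviation to the constrained monopoly problem solved by Lemma~\ref{lem:thresconstr}, and then computing the explicit threshold condition---matches the paper's proof, and your algebra deriving \eqref{kappa} from $L^1_0 - E\bigl[M^1_{\tau_J(0)}\bigr]\leq 0$ at $x_0=x_L^1$ is correct, as is the differentiation establishing monotonicity of $\kappa$.

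The substantive gap is in the firm~$2$ verification. You attempt to check firm~$2$'s deviation separately, and in Case~B you assert that ``the analogous inequality with $(I^2,x_F^1)$ in place of $(I^1,x_F^2)$ is derived by the same algebraic steps and is weaker than firm~$1$'s by the cost ordering.'' This is not proved and is in fact non-obvious: firm~$2$'s constrained deviation threshold is $\min\{x_L^2,x_J\}$, and when $x_L^1<x_F^2$ one can have either $x_L^2\geq x_F^1$ or $x_L^2<x_F^1$, so the resulting condition does not reduce in an elementary way to firm~$1$'s. The paper avoids this entirely by invoking the \emph{last claim} of Proposition~\ref{prop:devi}: since the Pawlina--Kort revenues satisfy $\pi^{L1}_\cdot-\pi^{01}_\cdot\geq\pi^{L2}_\cdot-\pi^{02}_\cdot$ and $\pi^{B1}_\cdot-\pi^{01}_\cdot\geq\pi^{B2}_\cdot-\pi^{02}_\cdot$ (because $I^1\leq I^2$), checking conditions (i) and (ii) for firm~$1$ alone already establishes that $\tau_*^1=\tau_*^2$ are \emph{mutual} best replies. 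You should invoke this rather than attempting a parallel check for firm~$2$.

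Two smaller points. First, your Case~A argument ad-hoc-splits the deviation interval into $[\vartheta,\tau_F^2]$ and $[\tau_F^2,\tau_J]$ and appeals to monotonicity of $M^1$ on the second piece; the paper's Proposition~\ref{prop:devi} already handles this uniformly via its first alternative in condition~(ii) (noting $\tau_D^1(\vartheta')\geq\tau_F^2(\vartheta')$ when $x_L^1\geq x_F^2$), which is cleaner. Second, you do not mention that the family of stopping times $\tau_J(\vartheta)$ satisfies the time-consistency requirement for a strategy---a minor omission, but it is needed before Proposition~\ref{prop:devi} can be applied at every $\vartheta$.
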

%
%
Note that $x_L^1<x_F^2$ implies $D_{10}>D_{00}$. Then the second restriction on $I^2/I^1$ in the proposition is weaker than the first if setting ${x_J}=x_L^1$, and it is further relaxed if ${x_J}$ increases. If ${x_J}=x_M^1<\infty$, then \eqref{kappa} coincides with the bound on $I^2/I^1$ identified by a graphical argument in \cite{PawlinaKort06}, who impose $D_{11}>D_{00}$.\footnote{%
$x_M^1<\infty\Leftrightarrow D_{11}>D_{00}$ and then ${x_J}=x_M^1$ implies $x_J/x_L^1=(D_{10}-D_{00})/(D_{11}-D_{00})$.
} 
Proposition \ref{prop:simeqlPK} also applies for $D_{11}\leq D_{00}$, when the firms,  after both have invested, end up no better than before. Then it can still be optimal to invest at some threshold ${x_J}$ only because the other firm does, although both prefer that neither invests.

Indeed, there may be many equilibria with ``inefficient'' joint investment in states above $x_F^2$ and where the expected joint investment payoff could be improved. If $(D_{11}-D_{00})x_F^2<rI^1$, then the drift of $M^i_t$ is positive for states in the interval $[x_F^2,rI^1/(D_{11}-D_{00})^+)$, and hence it is optimal to wait in any constrained version of problem \eqref{sim_i}. Therefore one can partition the latter interval into arbitrary subintervals of alternating joint investment and waiting.

\subsubsection{Sequential investment equilibria}\label{subsec:seql}

Sequential investment without any preemption may also be an equilibrium if the preemption region is non-empty, which is a Pareto improvement compared to the equilibria of \cite{PawlinaKort06} if delayed joint investment as in Section \ref{subsec:simeql} is not feasible. Such an equilibrium can be verified by Corollary \ref{cor:seqeql} and it exists for the current specification if and only if firm $2$ does not have a strict first-mover advantage at $x_L^1$, where firm $1$ first invests.

\begin{proposition}\label{prop:seqeqlPK}
Consider the specification \eqref{PKmodel} and suppose $x_L^1<x_F^2$ (whence $D_{10}>D_{00}$). Then there exists a subgame perfect equilibrium with firm $1$ investing as soon as $x_t$ exceeds $x_L^1$ and firm $2$ planning to invest when $x_t$ exceeds $x_F^2$ iff $x_L^1\not\in(\ubar x,\bar x)$ from Proposition \ref{prop:PKpreemp}, which is iff
\[
x_L^1\geq x_F^1\quad\Leftrightarrow\quad (D_{10}-D_{00})^+\leq(D_{11}-D_{01})^+
\]
or
\begin{equation}\label{kappaseql}
(\beta_1-1)\frac{I^2}{I^1}+\biggl(\frac{I^2}{I^1}\biggr)^{1-\beta_1}\biggl(\frac{(D_{11}-D_{01})^+}{D_{10}-D_{00}}\biggr)^{\beta_1}\geq\beta_1\biggl[\frac{D_{10}-D_{01}}{D_{10}-D_{00}}-\frac{D_{10}-D_{11}}{D_{10}-D_{00}}\biggl(\frac{(D_{11}-D_{01})^+}{D_{10}-D_{00}}\biggr)^{\beta_1-1}\biggr]
\end{equation}
with $\beta_1>1$ from Proposition \ref{prop:lowconstr}. The LHS of \eqref{kappaseql} is strictly increasing in $I^2/I^1$ and the RHS is strictly positive if $x_L^1<x_F^1$. 
\end{proposition}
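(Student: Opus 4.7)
The plan is to apply Corollary \ref{cor:seqeql} to reduce sufficiency of the proposed equilibrium to a single pointwise inequality at $x_L^1$, establish necessity by a direct deviation argument, and then verify the inequality by explicit real-options computation.

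\textbf{Reduction.} Firm 1's plan $\tau_S(\vartheta)=\inf\{t\geq\vartheta\mid x_t\geq x_L^1\}$ solves the constrained monopoly problem \eqref{mono1} (the unconstrained threshold $x_L^1$ lies below the constraint $x_F^2$ by assumption), hence it is firm 1's best reply against firm 2's threshold $x_F^2$. Corollary \ref{cor:seqeql} then yields that firm 2 has no profitable deviation provided the graph $[\tau_S]$ avoids $\cP$, which by continuity of $(x_t)$ and Proposition \ref{prop:PKpreemp} amounts to $x_L^1\notin(\ubar x,\bar x)$. Necessity: if $x_L^1\in(\ubar x,\bar x)$, then for any $\vartheta$ with $x_\vartheta$ just below $x_L^1$ firm 2 deviates to the lower threshold $x_L^1-\epsilon$ and, by continuity of $L^2-F^2$, earns approximately $L^2(x_L^1)>F^2(x_L^1)$ for small $\epsilon>0$, contradicting equilibrium. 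Hence the equilibrium exists iff $x_L^1\notin(\ubar x,\bar x)$.

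\textbf{Case $x_L^1\geq x_F^1$.} Proposition \ref{prop:PKpreemp} gives $\bar x\leq x_F^1\leq x_L^1$, so the condition holds. The equivalence $x_L^1\geq x_F^1\Leftrightarrow(D_{10}-D_{00})^+\leq(D_{11}-D_{01})^+$ follows directly from fn.\ \ref{fn:threshold} (the assumption $x_L^1<x_F^2<\infty$ forces $D_{10}>D_{00}$, while the sub-case $D_{11}\leq D_{01}$ gives $x_F^1=\infty$ and falls into the second case). In the second case $x_L^1<x_F^1$, standard real-options formulas yield for $x\leq x_F^1\wedge x_F^2$
\begin{equation*}
L^2_\tau-F^2_\tau=\frac{x(D_{10}-D_{01})}{r-\mu}-I^2-\frac{x_F^1(D_{10}-D_{11})}{r-\mu}\Bigl(\frac{x}{x_F^1}\Bigr)^{\beta_1}-\frac{I^2}{\beta_1-1}\Bigl(\frac{x}{x_F^2}\Bigr)^{\beta_1}
\end{equation*}
at a stopping time $\tau$ with $x_\tau=x$, where the last term uses $x_F^2(D_{11}-D_{01})/(r-\mu)-I^2=I^2/(\beta_1-1)$. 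Evaluating at $x=x_L^1$ via $x_L^1(D_{10}-D_{00})/(r-\mu)=\beta_1 I^1/(\beta_1-1)$, $(x_L^1/x_F^1)^{\beta_1}=[(D_{11}-D_{01})/(D_{10}-D_{00})]^{\beta_1}$, and $(x_L^1/x_F^2)^{\beta_1}=[I^1(D_{11}-D_{01})/(I^2(D_{10}-D_{00}))]^{\beta_1}$, then multiplying the inequality $L^2_\tau-F^2_\tau\leq 0$ by $(\beta_1-1)/I^1$ and rearranging produces exactly \eqref{kappaseql} (the convention $(D_{11}-D_{01})^+=0$ absorbs $x_F^1=\infty$, where the corresponding term vanishes).

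\textbf{Monotonicity and sign.} Set $y=I^2/I^1\geq 1$ and $k=(D_{11}-D_{01})^+/(D_{10}-D_{00})\in[0,1)$; Case 2 corresponds precisely to $k<1$. The LHS of \eqref{kappaseql} equals $(\beta_1-1)y+y^{1-\beta_1}k^{\beta_1}$ with derivative $(\beta_1-1)[1-(k/y)^{\beta_1}]>0$. Using $D_{10}-D_{01}=(D_{10}-D_{11})+(D_{11}-D_{01})^+$, the RHS divided by $\beta_1$ equals $k+\frac{D_{10}-D_{11}}{D_{10}-D_{00}}(1-k^{\beta_1-1})$, which is strictly positive because $D_{10}\geq D_{11}$, $k<1$, and at least one of $k$ or $D_{10}-D_{11}$ is strictly positive (otherwise $D_{10}=D_{11}\leq D_{01}\leq D_{00}$ contradicts $D_{10}>D_{00}$). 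The main obstacle is bookkeeping the substitutions in Case 2 so that the expression collapses into the compact form \eqref{kappaseql}; everything else follows routinely from the general framework of Section \ref{subsec:othereql}.
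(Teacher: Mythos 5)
Your proof is essentially correct and follows the same route as the paper's own argument: reduce to Corollary \ref{cor:seqeql}, observe that under the order $\pi^{L1}_\cdot-\pi^{01}_\cdot\geq\pi^{L2}_\cdot-\pi^{02}_\cdot$ the required deviation check collapses to ``$\tau_S$ not in $\cP$'' at the threshold $x_L^1$, compute $L^2-F^2$ explicitly at $x_L^1$, and rearrange into \eqref{kappaseql}. Your algebra (using $x_F^2(D_{11}-D_{01})/(r-\mu)-I^2=I^2/(\beta_1-1)$, the threshold ratios, and the normalization by $(\beta_1-1)/I^1$) matches the paper's computation and reproduces \eqref{kappaseql} exactly, and your monotonicity/sign analysis in terms of $k$ and $y$ is a clean equivalent of the paper's.

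Two small imprecisions worth flagging. First, the claim that Corollary \ref{cor:seqeql} requires ``the graph $[\tau_S]$ avoids $\cP$'' is slightly stronger than what is actually needed: condition (ii) of Proposition \ref{prop:devi} only bites on $\{\vartheta'<\tau_*^1\}$, so when the starting state already exceeds $x_L^1$ (where $\tau_S=\vartheta$ and $x_{\tau_S}$ could lie in $\cP$) the check is vacuous. The reduction genuinely amounts to checking the state $x_L^1$ only, which is what you then do; the paper makes this precise via the inequality $D_S(x_L^1)\geq E[D_S(x)]$ from the middle claim of Proposition \ref{prop:devi}, whereas you implicitly invoke the same remark. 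Second, in the necessity argument, the equilibrium payoff firm 2 forfeits by deviating from state $x<x_L^1$ is $E[F^2_{\tau_S(0)}]$ (a discounted expectation over the hitting time of $x_L^1$), not $F^2(x)$, so the relevant continuity is that of $x\mapsto L^2(x)-E[F^2_{\tau_S(0)}]=D_S(x)$ as $x\uparrow x_L^1$; the paper computes this limit explicitly. Your appeal to ``continuity of $L^2-F^2$'' is loose on this point, though the conclusion is right. Finally, the paper also notes time consistency of the threshold families $\tau_S(\vartheta)$ and $\tau_F^2(\vartheta)$, which you omit; for threshold strategies this is immediate but should be stated for a subgame perfect equilibrium.
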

%
%
Finally, there may be equilibria with sequential investment as in Proposition \ref{prop:seqeqlPK} or preemption as in Proposition \ref{prop:lowconstr} where the joint investment is delayed to some threshold ${x_J}>x_F^2$, such that firm $1$ can optimize over larger intervals to become leader. This may separate the investment regions in the sequential equilibria into one where firm $1$ invests as leader and one where simultaneous investment occurs, with a gap in between. Such equilibria are more difficult to characterize explicitly. If $x_F^2$ is between two investment regions, the non-constant follower reaction prevents the simplifications used in the previous propositions.

\subsubsection{Comparison of leader investment regions}\label{subsec:complead}

To illustrate the potentially strong impact of preemption on states in $(\bar x,x_F^2)$ for varying parameter values in Figure \ref{fig:preemreg}, the model is re-parameterized as follows. First, $r$, $\mu$ and $\sigma$ determine $\beta_{1,2}$ and together with the ratio $I^1/(D_{11}-D_{01})$ also firm $1$'s follower threshold $x_F^1$, which we fix and which is an upper bound for $\bar x$. 

\begin{figure}[ht!]%
  \centering
  \subfloat{\includegraphics{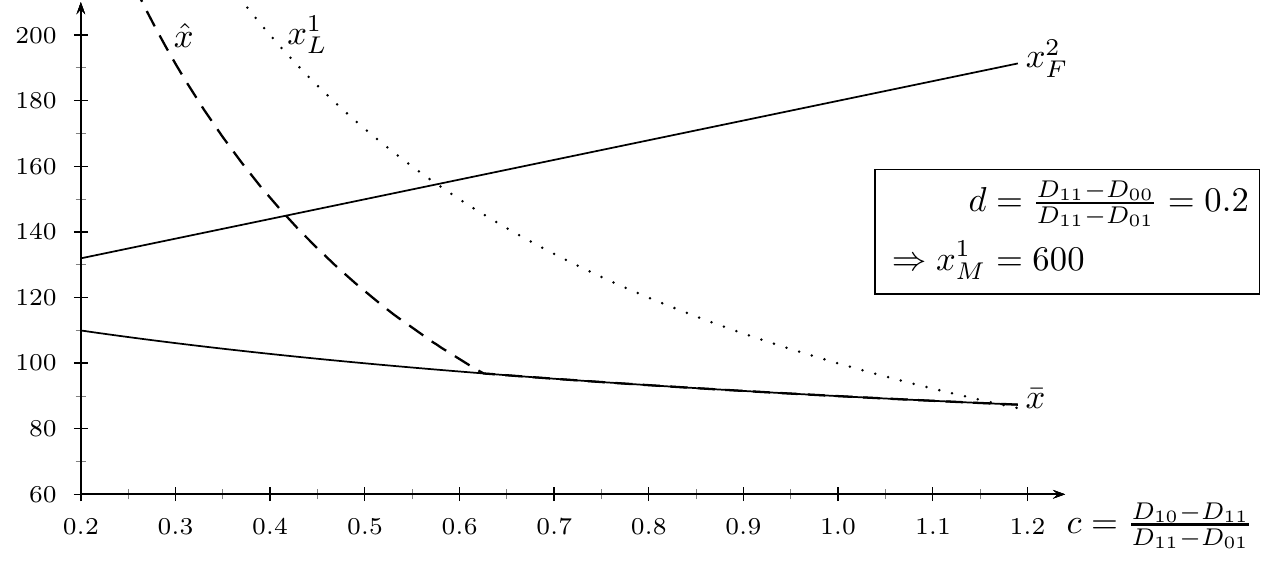}}\\ 
  \subfloat{\includegraphics{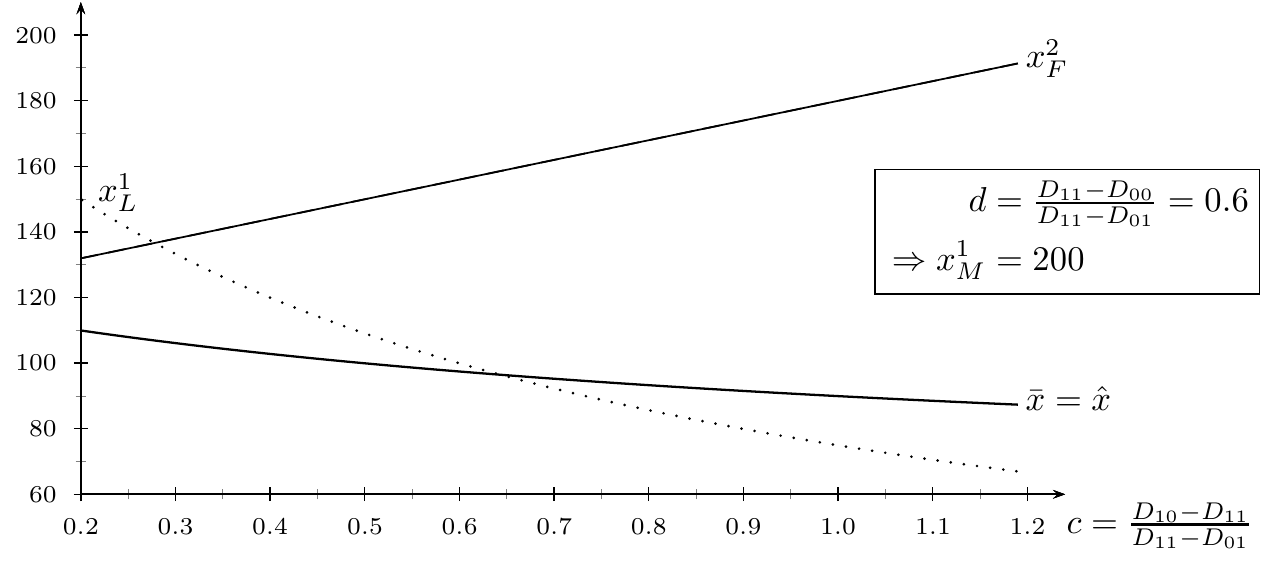}}\\
  \fbox{\small $r=0.08$, $\mu=0.02$, $\sigma=0.2$, $\frac{I^1}{D_{11}-D_{01}}=1000\Rightarrow x_F^1=120$}
  \caption{Constrained leader stopping regions.}
  \label{fig:preemreg}
\end{figure}

The distance between $\bar x$ and $x_F^2$, which is the region where firm $1$ can invest as leader, is growing in $I^2$. Indeed, $x_F^2$ obviously grows with $I^2$, and if the preemption region $(\ubar x,\bar x)$ is non-empty, it is strictly shrinking if $I^2$ grows;\footnote{%
Suppose $x_0<x_F^2$, such that firm $2$'s first-mover advantage $L^2_0-F^2_0$ is non-trivial. If $I^2$ is increased, that has two negative effects on $L^2_0-F^2_0$. First, it increases the investment cost stream $e^{-rt}rI^2$ up to firm $2$'s former follower investment time $\tau_F^2(0)$, which reduces $L^2_0$. Second, it delays $\tau_F^2(0)$. The new revenue stream difference $e^{-rt}(x_t(D_{11}-D_{01})-rI^2)$ (with increased $I^2$) between the former and the new $\tau_F^2(0)$ has non-positive expectation by optimality of the new $\tau_F^2(0)$, and thus reduces $L^2_0-F^2_0$.
}
$(\ubar x,\bar x)$ collapses when $I^2/I^1=x_F^2/x_F^1$ reaches a bound given in fn.\ \ref{fn:Pempty} in terms of $c=(D_{10}-D_{11})/(D_{11}-D_{01})$, the loss of a monopolist relative to the gain of the follower when the latter invests. We pick those limit values for $I^2$ and $x_F^2$ for simplicity, thus making both functions of $c$, although then just $\ubar x=\bar x=x_\Delta^2$, the threshold solving \eqref{maxL2-F2}. Now $c$ also determines $\bar x$ by $x_\Delta^2=x_F^2/(1+c)$. 

Equation \eqref{hatx} for $\hat x$ can be reduced to the parameters $\beta_{1,2}$ and $x_L^1$, the unconstrained monopoly threshold, which is an upper bound on $\hat x$ and itself satisfies $x_L^1=x_F^1/(c+d)$ with $d:=(D_{11}-D_{00})/(D_{11}-D_{01})$. The latter ratio comes close to $1$ if the leader's investment has not much influence on the follower's revenue, like in a market entry situation; it becomes small when the leader steals considerable business from the follower, like by a drastic innovation. $d$ also controls the best simultaneous investment threshold by $x_M^1=x_F^1/d$.

In the equilibria from Theorem \ref{thm:SPE}, firm $1$ can freely decide when to invest in the interval $(\bar x,x_F^2)$. Without the threat of preemption, it would not invest below $\min\{x_L^1,x_F^2\}$. However, given the threat of preemption, firm $1$ already invests when the state exceeds $\hat x$, which may be much earlier as Figure \ref{fig:preemreg} shows. In the upper panel with a low value of $d$, the threat of preemption strongly matters for $c\geq 0.45$. Firm $1$ never chooses to wait at all in the lower panel with a moderate value of $d$. Joint investment at $x_M^1$ is an equilibrium avoiding preemption if $x_L^1\geq x_F^2$; it is not an equilibrium for $d=0.6$ and $c\geq 0.45$.

\subsection{Strategic real estate development with construction time}\label{subsec:Gren96}

Similar reasoning as before shows on the one hand that equilibria discussed in \cite{Grenadier96} only exist under certain parameter restrictions and on the other hand  that there exist additional equilibria that are Pareto improvements.

\cite{Grenadier96} models a real option game between two symmetric real estate owners, who may each invest to redevelop their property in order to earn higher rents. His model needs a slight translation to fit into the current framework, as it includes a delay of construction: if an owner invests, it takes $\delta\geq 0$ time units until the new building yields any revenues. Before investment by any owner, both earn the deterministic rent $R\geq 0$. Investment at cost $I>0$ terminates that rent, reduces the rent of the opponent to $(1-\gamma)R$ with $\gamma\in[0,1]$ and initiates new own rent $D_1x_t$ after the delay $\delta$. $(x_t)$ is a geometric Brownian motion as in \eqref{GBM}. Once both new buildings are completed, each owner earns the rent $D_2x_t$ with $0<D_2\leq D_1$.

Grenadier's model is strategically equivalent to specifying
\begin{equation*}
\begin{aligned}
\pi^{0i}_t&=e^{-rt}R, & \pi^{Li}_t&=e^{-rt}(D_1e^{-(r-\mu)\delta}x_t-rI), \\ 
\pi^{Fi}_t&=e^{-rt}(1-\gamma)R,\quad & \pi^{Bi}_t&=e^{-rt}(D_2e^{-(r-\mu)\delta}x_t-rI)
\end{aligned}
\end{equation*}
in the general framework. The equilibria proposed in \cite{Grenadier96} are justified by the insufficient argument that waiting is optimal if the current follower payoff exceeds the current leader payoff. Nevertheless there exists a subgame perfect equilibrium as in Theorem \ref{thm:SPE} by symmetry; it can be characterized as follows. The follower problems \eqref{F} are again solved by investing once $x_t$ exceeds a threshold $x_F>0$, whence simultaneous investment is an equilibrium for all states $x_\vartheta\geq x_F$.\footnote{%
$x_F=\frac{\beta_1}{\beta_1-1}\cdot e^{(r-\mu)\delta}(I+(1-\gamma)R/r)(r-\mu)/D_2$ with $\beta_1>1$ from fn.\ \ref{fn:threshold}.
} 
Problem \eqref{maxL2-F2} is solved by a threshold $x_\Delta=x_FD_2/D_1$ and the preemption region $\cP$ is in fact non-empty if and only if $D_2<D_1$. $\cP$ can be represented by an interval $(\ubar x,\bar x)$ of the state space by the same arguments as in the proof of Proposition \ref{prop:PKpreemp}, where now $\bar x=x_F$.


\subsubsection{Qualification of further equilibria}

Depending on the parameter values, there may be other equilibria with delayed simultaneous investment and/or no preemption.  Let $x_L$ denote the threshold solving the present instance of the unconstrained monopoly problem \eqref{Mono^i}.\footnote{%
$x_L=\frac{\beta_1}{\beta_1-1}e^{(r-\mu)\delta}(I+R/r)(r-\mu)/D_1$ with $\beta_1>1$ from fn.\ \ref{fn:threshold}. This should not be confused with $X_L$ in \cite{Grenadier96}, which corresponds to the present $\ubar x$.
} 
For states above $\bar x=x_F$, any investment will be simultaneous. Contrarily to the claim made in \cite{Grenadier96}, simultaneous investment cannot be delayed past the threshold $x_M=x_LD_1/D_2\geq x_F$ solving problem \eqref{sim_i}. Indeed, in any equilibrium with preemption in $\cP$, by symmetry both firms get at most the follower payoff at the time of investment. The same holds for any equilibrium with only joint investment. In either case investment must occur as soon as the state exceeds $x_M$, because any delay would be a loss by Lemma \ref{lem:Fmaxstop}. 

With preemption occuring in $\cP$, one can only consider delaying simultaneous investment in the interval $[\bar x,x_M]$, i.e.\ delaying the revenue change $\pi^{Bi}_t-\pi^{0i}_t=e^{-rt}(D_2e^{-(r-\mu)\delta}x_t-rI-R)$. This problem has the same form as the one with two-sided constraint considered in Proposition \ref{prop:lowconstr} (recall also the illustration in Section \ref{subsec:complead}), with $D_2e^{-(r-\mu)\delta}$ replacing $D_{10}-D_{00}$, $I+R/r$ replacing $I^1$ and $x_M$ replacing $x_F^2$. Thus, given now $\bar x=x_F$, if $D_2e^{-(r-\mu)\delta}x_F\geq rI+R$, which means if
\begin{equation}\label{gammaGren}
\gamma\leq\biggl(\frac{rI}{R}+1\biggr)\biggl(1-\frac{\beta_1-1}{\beta_1(r-\mu)}\biggr),
\end{equation}
then investment cannot be delayed at all for states above $x_F$, which is not recognized in \cite{Grenadier96}. In this case the preemption region extends to such high states that any foregone revenue above it is a loss. Note that the RHS of \eqref{gammaGren} is strictly positive.

Only if \eqref{gammaGren} fails, there will exist a solution $\hat x\in[(rI+R)e^{(r-\mu)\delta}/D_2,x_M)$ to the current version of \eqref{hatx}, such that investment can be held back in $(x_F,\hat x)$. Only then the phenomenon discussed extensively in Section V of \cite{Grenadier96} can arise, that preemption occurs when demand \emph{falls} to $x_F$.
 
However, if $\gamma$ is sufficiently large to violate \eqref{gammaGren}, then delayed joint investment may be attractive enough to avoid preemption altogether, which will be a Pareto improvement w.r.t.\ \cite{Grenadier96}. By the same arguments as for Proposition \ref{prop:simeqlPK}, preemption can be avoided in an equilibrium of joint investment with the threshold $x_M\geq x_F$ if and only if that yields firm $1$ at least the expected payoff $L^1_0$ for $x_0=x_L<x_F$, which is if and only if
\begin{equation*}
x_L\geq x_F \quad\Leftrightarrow\quad \gamma\geq\biggl(\frac{rI}{R}+1\biggr)\biggl(1-\frac{D_2}{D_1}\biggr)
\end{equation*}
or if
\[
\gamma\geq\biggl(\frac{rI}{R}+1\biggr)\biggl(1-D_2\biggl(\beta_1\frac{D_1-D_2}{D_1^{\beta_1}-D_2^{\beta_1}}\biggr)^{\frac{1}{\beta_1-1}}\biggr)
\]
with $\beta_1>1$ from fn.\ \ref{fn:threshold}. The last restriction on $\gamma$ is indeed weaker than the previous one.

\section{Conclusion}\label{sec:conc}
 
The equilibrium analysis of the general model in Section \ref{sec:eql} was based directly on its primitives and not on derived analytic properties of value functions, as it frequently happens in the growing literature on real option games. By that more general perspective, there is on the one hand less risk to neglect any verification problems for equilibria and on the other hand a more detailed view of their economic structure. For models that satisfy the general assumptions made here, the number of equilibrium verification problems has been reduced considerably by elementary economic arguments and it remains to solve a single class of optimal stopping problems for one firm. Theorem \ref{thm:SPE} applies to many more examples from the literature than the ones revisited in Section \ref{sec:example} (e.g.\ to those listed in the Introduction). The presented applications, which have quite distinctive economic properties, show how the general results act in typical state-space models. By the more complete approach, some neglected equilibrium behavior has been identified that qualitatively distinguishes stochastic from deterministic models. In particular two-sided constraints induce feedback effects when the state evolves randomly. The arguments developed for the identification of additional equilibria that may be Pareto improvements also generalize to other models, e.g.\ for the source of uncertainty.

Thus the general perspective taken here provides a foundation for a more complete analysis of models of preemptive investment that fit into the framework and a guideline for the analysis of further models that do not satisfy the revenue orders assumed here.

\appendix {

\section{Appendix}\label{app:add}

\subsection{Characterizing the preemption region}\label{app:locpreem}

To see if the preemption region is empty, it suffices to consider stopping times that are optimal for some simple stopping problems. They are the solutions of firm $i$'s permanent monopoly problem \eqref{Mono^i} if $\pi^{0i}_\cdot\equiv\pi^{Fi}_\cdot$ (like in a market entry model).

\begin{lemma}\label{lem:PreemMono}
For any $\vartheta\in\T$, $L^2_\vartheta>F^2_\vartheta$ only if $E\bigl[L^2_{\tau_\Delta^i}-F^2_{\tau_\Delta^i}\bigv\F_\vartheta\bigr]>0$ for all times $\tau_\Delta^i\in\T$ attaining
\begin{equation}\label{maxL2-F2}
\esssup_{\tau\geq\vartheta}E\biggl[\int_0^{\tau}\pi^{Fi}_s\,ds+\int_{\tau}^{\infty}\pi^{Li}_s\,ds\biggv\F_\vartheta\biggr]
\end{equation}
for some $i\in\{1,2\}$. Where $\tau_\Delta^2=\vartheta$ attains \eqref{maxL2-F2} for $i=2$, there $L^2_\vartheta-F^2_\vartheta\geq E\bigl[L^2_\tau-F^2_\tau\bigv\F_\vartheta\bigr]$ for all $\tau\in[\vartheta,\tau_F^1(\vartheta)]$.
\end{lemma}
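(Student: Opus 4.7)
The heart of the argument is the identity
\[
(L^2_\vartheta - F^2_\vartheta) - E\bigl[L^2_\tau - F^2_\tau \bigv \F_\vartheta\bigr] = E\Bigl[\int_\vartheta^\tau (\pi^{L2}_s - \pi^{F2}_s)\,ds \Bigv \F_\vartheta\Bigr],
\]
valid for every stopping time $\tau \in [\vartheta, \tau^1_F(\vartheta)]$. First I would derive it from two observations. Time-consistency of firm~1's follower reaction on $[\vartheta, \tau^1_F(\vartheta)]$, i.e.\ $\tau^1_F(\tau) = \tau^1_F(\vartheta)$ there, makes both $L^2_\vartheta$ and $L^2_\tau$ use the \emph{same} follower reaction time, so subtracting the integral representations from \eqref{L} yields $L^2_\vartheta - E[L^2_\tau \mid \F_\vartheta] = E[\int_\vartheta^\tau (\pi^{L2}_s - \pi^{02}_s)\,ds \mid \F_\vartheta]$. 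Second, because $\tau^2_F(\vartheta) \geq \tau^1_F(\vartheta)$ by Lemma~\ref{lem:tau^1_F<tau^2_F} and is by definition a latest optimal stopping time for firm~2's follower problem, the Snell envelope of that problem is a martingale up to $\tau^2_F(\vartheta)$ (the classical characterization: a stopping time is optimal iff the stopped Snell envelope is a martingale). This gives the analogous identity $F^2_\vartheta - E[F^2_\tau \mid \F_\vartheta] = E[\int_\vartheta^\tau (\pi^{F2}_s - \pi^{02}_s)\,ds \mid \F_\vartheta]$ for $\tau \leq \tau^2_F(\vartheta)$, and subtracting the two yields the displayed formula.

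The second statement then follows immediately: if $\tau_\Delta^2 = \vartheta$ attains \eqref{maxL2-F2} for $i=2$, then delaying the switch from $\pi^{F2}$ to $\pi^{L2}$ to any later time $\tau$ is weakly suboptimal, which after rearrangement is $E[\int_\vartheta^\tau (\pi^{L2}_s - \pi^{F2}_s)\,ds \mid \F_\vartheta] \geq 0$ for every $\tau \geq \vartheta$. Combined with the identity on $[\vartheta, \tau^1_F(\vartheta)]$, this gives $L^2_\vartheta - F^2_\vartheta \geq E[L^2_\tau - F^2_\tau \mid \F_\vartheta]$ as claimed.

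For the first statement, I would apply the identity at $\tau = \tau_\Delta^i$. For $i=1$, the assumption $\pi^{L1}_\cdot \geq \pi^{B1}_\cdot$ makes the ``switch payoff'' in \eqref{maxL2-F2} pointwise at least that in firm~1's follower problem, so by standard comparison of stopping problems $\tau_\Delta^1 \leq \tau^1_F(\vartheta)$ and the identity applies. Optimality of $\tau_\Delta^1$ versus $\tau = \vartheta$ in \eqref{maxL2-F2} for $i=1$ gives $E[\int_\vartheta^{\tau_\Delta^1} (\pi^{L1}_s - \pi^{F1}_s)\,ds \mid \F_\vartheta] \leq 0$, and the firm~2 disadvantage $\pi^{L2}_\cdot - \pi^{F2}_\cdot \leq \pi^{L1}_\cdot - \pi^{F1}_\cdot$ upgrades this to $E[\int_\vartheta^{\tau_\Delta^1} (\pi^{L2}_s - \pi^{F2}_s)\,ds \mid \F_\vartheta] \leq 0$. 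The identity then yields $L^2_\vartheta - F^2_\vartheta \leq E[L^2_{\tau_\Delta^1} - F^2_{\tau_\Delta^1} \mid \F_\vartheta]$, so $L^2_\vartheta > F^2_\vartheta$ transfers to strict positivity of the conditional expectation. For $i=2$ the same scheme works directly when $\tau_\Delta^2 \leq \tau^1_F(\vartheta)$, using optimality of $\tau_\Delta^2$ to conclude $E[\int_\vartheta^{\tau_\Delta^2} (\pi^{L2}_s - \pi^{F2}_s)\,ds \mid \F_\vartheta] \leq 0$ directly; the remaining case $\tau_\Delta^2 > \tau^1_F(\vartheta)$ is handled by combining the intermediate bound $L^2_\vartheta - F^2_\vartheta \leq E[\int_\vartheta^{\tau^1_F(\vartheta)} (\pi^{L2}_s - \pi^{F2}_s)\,ds \mid \F_\vartheta]$ (from the identity at $\tau^1_F(\vartheta)$ together with $L^2_{\tau^1_F(\vartheta)} = M^2_{\tau^1_F(\vartheta)} \leq F^2_{\tau^1_F(\vartheta)}$) with time-consistency of the $\Delta^2$-problem restarted at $\tau^1_F(\vartheta)$. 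The main obstacle I expect is carefully establishing the full martingale property of firm~2's follower Snell envelope up to its \emph{latest} optimal stopping time $\tau^2_F(\vartheta)$, rather than merely up to the earliest optimal one, and cleanly handling the residual case $\tau_\Delta^2 > \tau^1_F(\vartheta)$.
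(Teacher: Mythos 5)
Your central identity
\[
(L^2_\vartheta - F^2_\vartheta) - E\bigl[L^2_\tau - F^2_\tau \bigv \F_\vartheta\bigr] = E\Bigl[\textstyle\int_\vartheta^\tau (\pi^{L2}_s - \pi^{F2}_s)\,ds \Bigv \F_\vartheta\Bigr]
\quad\text{for }\tau\in[\vartheta,\tau^1_F(\vartheta)]
\]
is correct, and it is essentially equivalent to the representation \eqref{L^2-F^2} the paper actually uses (that formula for $L^2_\tau-F^2_\tau$, subtracted at $\tau$ and at $\vartheta$, reduces to your identity once $\tau^1_F(\tau)=\tau^1_F(\vartheta)$ and $\tau^2_F(\tau)=\tau^2_F(\vartheta)$ are observed). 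Incidentally, the worry you flag about the Snell envelope being a martingale up to the \emph{latest} optimal stopping time is not a real obstacle: the stopped Snell envelope is a martingale up to \emph{any} optimal stopping time, latest or not. Your treatment of the second claim and of the $i=1$ case of the first claim (where $\tau_\Delta^1\leq\tau^1_F(\vartheta)$ always holds) is correct and matches the paper.

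The genuine gap is the $i=2$ case. You correctly observe that $\tau_\Delta^2\leq\tau^2_F(\vartheta)$ but not necessarily $\tau_\Delta^2\leq\tau^1_F(\vartheta)$, so your identity does not apply at $\tau=\tau_\Delta^2$ in general. Two problems arise. First, the event $\{\tau_\Delta^2\leq\tau^1_F(\vartheta)\}$ is \emph{not} $\F_\vartheta$-measurable, so you cannot cleanly split the conditional expectation $E[L^2_{\tau_\Delta^2}-F^2_{\tau_\Delta^2}\bigv\F_\vartheta]$ into the two cases and argue separately on each. Second, the sketch for the residual case does not close: after the intermediate bound $L^2_\vartheta-F^2_\vartheta\leq E[\int_\vartheta^{\tau^1_F(\vartheta)}(\pi^{L2}_s-\pi^{F2}_s)\,ds\bigv\F_\vartheta]$ you still need to connect that quantity to $E[L^2_{\tau_\Delta^2}-F^2_{\tau_\Delta^2}\bigv\F_\vartheta]$, and optimality of $\tau_\Delta^2$ only yields $E[\int_\tau^{\tau_\Delta^2}(\pi^{L2}_s-\pi^{F2}_s)\,ds\bigv\F_\tau]\leq 0$ for $\tau\leq\tau_\Delta^2$, which does not by itself produce the needed comparison. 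The paper avoids the case split altogether: it splits the first integral in \eqref{L^2-F^2} at $\tau_\Delta^i$ \emph{inside} a single conditional expectation (the integral $\int_{\tau_\Delta^i}^{\tau^1_F(\vartheta)}$ is simply negative when $\tau_\Delta^i>\tau^1_F(\vartheta)$), uses $E[\int_\vartheta^{\tau_\Delta^i}(\pi^{L2}_s-\pi^{F2}_s)\,ds\bigv\F_\vartheta]\leq 0$ to drop the initial piece, and then compares the remainder to $E[L^2_{\tau_\Delta^i}-F^2_{\tau_\Delta^i}\bigv\F_\vartheta]$ term-by-term via $\tau^1_F(\tau_\Delta^i)\geq\tau^1_F(\vartheta)$, $\tau^2_F(\tau_\Delta^i)=\tau^2_F(\vartheta)$, and $\pi^{L2}_\cdot\geq\pi^{B2}_\cdot$, obtaining the clean bound $L^2_\vartheta-F^2_\vartheta\leq E[L^2_{\tau_\Delta^i}-F^2_{\tau_\Delta^i}\bigv\F_\vartheta]$ uniformly. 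If you rephrase your argument in that form, the residual case disappears.
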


Lemma \ref{lem:PreemMono} rests on the fact that for any $\tau\in[\vartheta,\tau_F^2(\vartheta)]$, the difference between $L^2_\vartheta$ and $F^2_\vartheta$ on $[\vartheta,\tau]$ is that between the monopoly or duopoly revenue and the laggard's revenue, i.e.\ at most $\pi^{L2}_\cdot-\pi^{F2}_\cdot$. That difference is nonpositive in expectation up to any solution of \eqref{maxL2-F2}, where indeed $\tau_\Delta^2\leq\tau_F^2(\vartheta)$ by $\pi^{L2}_\cdot\geq\pi^{B2}_\cdot$. Further, the revenue difference between $L^2_\vartheta$ and $F^2_\vartheta$ on $[\tau_\Delta^2,\infty)$ is at most that between $L^2_{\tau_\Delta^2}$ and $F^2_{\tau_\Delta^2}$, because firm $2$'s follower reaction remains the same and, by becoming leader later, firm $2$ receives the monopoly revenue at least until the same time.

For state-space models like in Section \ref{sec:example}, we get the following characterization. First, as noted in Subsection \ref{subsec:preem}, a follower threshold for either firm $i$, say $x_F^i\in\R$, is never contained in the preemption region,\footnote{%
Here ``the preemption region'' refers to an area in the same state space in which the thresholds are defined, which is of course an abuse of terminology regarding the previous definition of $\cP$.
} 
not even in its closure if investment at $x_F^1$ is not optimal for firm $2$. As $L^2_\cdot\leq F^2_\cdot$ for all states above such $x_F^i$, the latter must lie above any non-empty preemption region. Second, by Lemma \ref{lem:PreemMono}, any non-empty preemption region must intersect the stopping regions from \eqref{maxL2-F2} for both $i=1,2$; a threshold solving that problem, say $x_\Delta^i\in\R$, cannot lie above the preemption region. In particular, if $x_\Delta^2\geq x_F^1$, the preemption region must be empty. Third, if firm $2$ has no first-mover advantage at $x_\Delta^2$, then it has none at any value that the state will attain before crossing $x_F^1$. Thus, if the state, starting from some $x_\Delta^2<x_F^1$, will attain any intermediate value before reaching $x_F^1$, then it suffices to check whether there is a first-mover advantage for firm $2$ at $x_\Delta^2$; otherwise the preemption region is empty, because $x_\Delta^2$ cannot lie above it.

\begin{proof}[{\bf Proof of Lemma \ref{lem:PreemMono}}]
First note that there are solutions $\tau_\Delta^i\leq\tau^i_F(\vartheta)\leq\tau^2_F(\vartheta)$ to \eqref{maxL2-F2} for $i=1,2$ as the respective process to be stopped is continuous and integrable. The estimate follows from the assumption $\pi^{Li}_\cdot-\pi^{Fi}_\cdot\geq\pi^{Bi}_\cdot-\pi^{Fi}_\cdot$, cf.\ the proof of Lemma \ref{lem:tau^1_F<tau^2_F}. 

By the optimality of $\tau_\Delta^i$ in \eqref{maxL2-F2}, $E\bigl[\int_{\vartheta}^{\tau_\Delta^i}(\pi^{Li}_s-\pi^{Fi}_s)\,ds\bigv\F_\vartheta\bigr]\leq 0$. Therefore, as $\pi^{L2}_\cdot-\pi^{F2}_\cdot\leq\pi^{Li}_\cdot-\pi^{Fi}_\cdot$, \eqref{L^2-F^2} can only be strictly positive if 
\[
E\biggl[\int_{\tau_\Delta^i}^{\tau^1_F(\vartheta)}(\pi^{L2}_s-\pi^{F2}_s)\,ds+\int_{\tau^1_F(\vartheta)}^{\tau^2_F(\vartheta)}(\pi^{B2}_s-\pi^{F2}_s)\,ds\biggv\F_{\vartheta}\biggr]>0
\]
(which can in fact only be the case if $P[\tau_\Delta^i<\tau^1_F(\vartheta)]>0$), and which implies
\begin{align*}
&E\Bigl[L^2_{\tau_\Delta^i}-F^2_{\tau_\Delta^i}\Bigv\F_\vartheta\Bigr]=E\biggl[\int_{\tau_\Delta^i}^{\tau^1_F(\tau_\Delta^i)}(\pi^{L2}_s-\pi^{F2}_s)\,ds+\int_{\tau^1_F(\tau_\Delta^i)}^{\tau^2_F(\vartheta)}(\pi^{B2}_s-\pi^{F2}_s)\,ds\biggv\F_{\vartheta}\biggr]>0
\end{align*}
as $\tau^1_F(\tau_\Delta^i)\geq\tau^1_F(\vartheta)$, $\tau^2_F(\tau_\Delta^i)=\tau^2_F(\vartheta)$ and $\pi^{L2}_\cdot\geq\pi^{B2}_\cdot$. 

For all stopping times $\tau\in[\vartheta,\tau^1_F(\vartheta)]$, indeed $\tau^i_F(\tau)=\tau^i_F(\vartheta)$, $i=1,2$, and thus $L^2_\vartheta-F^2_\vartheta-E\bigl[L^2_\tau-F^2_\tau\bigv\F_\vartheta\bigr]=E\bigl[\int_{\vartheta}^{\tau}(\pi^{L2}_s-\pi^{F2}_s)\,ds\bigv\F_\vartheta\bigr]\geq 0$ if $\tau_\Delta^2=\vartheta$ attains \eqref{maxL2-F2}.
%
\end{proof}

\subsection{Verification of equilibria without preemption}\label{app:othereql}

The following proposition helps to reduce the search for times at which firm $i$ may want to preempt firm $j$ and thus to verify a best reply $\tau_*^i\geq\tau_*^j$. It avoids to maximize the leader payoff directly, which is a complex problem due to the follower reaction. Applied to state-space models, it may suffice to consider deviations at a single threshold.

\begin{proposition}\label{prop:devi}
Consider any given $\vartheta\in\T$ and $i,j\in\{1,2\}$, $i\not=j$. If firm $j$ plans to invest at the stopping time $\tau_*^j\geq\vartheta$, then $\tau_*^i\geq\tau_*^j$ is a best reply for firm $i$ if $F^i_{\tau_*^j}=M^i_{\tau_*^j}$ on $\{\tau_*^i=\tau_*^j\}$ and
\begin{enumerate}
\item\label{maxM_D}
$E\bigl[F^i_{\tau_*^j}\bigv\F_{\vartheta}\bigr]\geq\esssup_{\tau\in[\vartheta,\tau_*^j]}E\bigl[M^i_\tau\bigv\F_{\vartheta}\bigr]$ and
\item\label{maxL_D}
for each stopping time $\vartheta'\geq\vartheta$, on $\{\vartheta'<\tau_*^j\}$ one of the solutions $\tau_D^i(\vartheta')\in\T$ of the problem
\begin{equation}\label{monoDi}
\esssup_{\tau\in[\vartheta',\tau_*^j\vee\vartheta']}E\biggl[\int_0^\tau\pi^{0i}_s\,ds+\int_{\tau}^{\infty}\pi^{Li}_s\,ds\biggv\F_{\vartheta'}\biggr]
\end{equation}
satisfies either $\tau_D^i(\vartheta')\geq\tau_F^j(\vartheta')$ or $L^i_{\tau_D^i(\vartheta')}\leq E\bigl[F^i_{\tau_*^j}\bigv\F_{\tau_D^i(\vartheta')}\bigr]$.
\end{enumerate}

Where $\vartheta'$ attains \eqref{monoDi}, it holds that $L^i_{\vartheta'}-E\bigl[F^i_{\tau_*^j}\bigv\F_{\vartheta'}\bigr]\geq E\bigl[L^i_\tau-F^i_{\tau_*^j}\bigv\F_{\vartheta'}\bigr]$ for all stopping times $\tau\in[\vartheta',\tau_F^j(\vartheta')]$.

Further, if $\pi^{L1}_\cdot-\pi^{01}_\cdot\geq\pi^{L2}_\cdot-\pi^{02}_\cdot$, $\pi^{B1}_\cdot-\pi^{01}_\cdot\geq\pi^{B2}_\cdot-\pi^{02}_\cdot$, $F^2_{\tau_*^2}=M^2_{\tau_*^2}$ and \ref{maxM_D}, \ref{maxL_D} hold for $i=1$, then $\tau_*^1=\tau_*^2$ are mutual best replies.
\end{proposition}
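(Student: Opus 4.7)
The plan is to apply the first part of Proposition \ref{prop:devi} twice, once with $(i,j)=(1,2)$ and once with $(i,j)=(2,1)$. The first application is immediate: the assumption $F^2_{\tau_*^2}=M^2_{\tau_*^2}$ combined with Lemma \ref{lem:tau^1_F<tau^2_F} (which gives $F^1_\tau-M^1_\tau\leq F^2_\tau-M^2_\tau$) forces $F^1_{\tau_*^2}=M^1_{\tau_*^2}$, so the hypotheses \ref{maxM_D} and \ref{maxL_D} for $i=1$ already guarantee that $\tau_*^1:=\tau_*^2$ is a best reply for firm 1 against firm 2's plan.

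For the second application I must derive the analogues of \ref{maxM_D} and \ref{maxL_D} for $(i,j)=(2,1)$ from those for $(i,j)=(1,2)$. The engine is a pathwise monotonicity: writing $N^i_t:=\int_0^t\pi^{0i}\,ds+\int_t^\infty\pi^{Bi}\,ds$ and $U^i_t:=\int_0^t\pi^{0i}\,ds+\int_t^\infty\pi^{Li}\,ds$, the extra orderings $\pi^{B1}-\pi^{01}\geq\pi^{B2}-\pi^{02}$ and $\pi^{L1}-\pi^{01}\geq\pi^{L2}-\pi^{02}$ render $N^2-N^1$ and $U^2-U^1$ nondecreasing in $t$. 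Since $E[M^i_\tau\mid\F_\vartheta]=E[N^i_\tau\mid\F_\vartheta]$ by the tower rule, condition \ref{maxM_D} for $i=2$ follows immediately from firm 1's version. For \ref{maxL_D}, the monotonicity of $U^2-U^1$ lets me replace any firm 2 optimizer $\tau^*$ of \eqref{monoDi} by $\tau^*\vee\tau_D^1(\vartheta')$ without losing optimality, yielding an optimizer $\tau_D^2(\vartheta')\geq\tau_D^1(\vartheta')$. In firm 1's Case 1, $\tau_D^1(\vartheta')\geq\tau_F^2(\vartheta')\geq\tau_F^1(\vartheta')$ by Lemma \ref{lem:tau^1_F<tau^2_F}, so firm 2's Case 1 follows automatically. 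In firm 1's Case 2 I would set $\tau_D^2(\vartheta')=\tau_D^1(\vartheta')$ and deduce firm 2's Case 2 inequality from firm 1's via the pointwise bound $(L^1-L^2)_\tau\geq E[(F^1-F^2)_{\tau_*^1}\mid\F_\tau]$, which I would establish by splitting the difference into integrals over $[\tau,\tau_F^1(\tau)]$, $[\tau_F^1(\tau),\tau_F^2(\tau)]$ and $[\tau_F^2(\tau),\tau_*^1]$ with respective nonnegative integrands $(\pi^{L1}-\pi^{01})-(\pi^{L2}-\pi^{02})$, $(\pi^{L1}-\pi^{01})-(\pi^{B2}-\pi^{02})$ and $(\pi^{B1}-\pi^{01})-(\pi^{B2}-\pi^{02})$, the middle bracket being nonnegative because $\pi^{L1}-\pi^{01}\geq\pi^{L2}-\pi^{02}\geq\pi^{B2}-\pi^{02}$ by the main-model order $\pi^{L2}\geq\pi^{B2}$.

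The main obstacle is that this decomposition presupposes $\tau_F^1(\tau)\leq\tau_F^2(\tau)\leq\tau_*^1$; if instead firm 1's follower reaction is optimally delayed past the joint stopping time, an extra bracket $(\pi^{L1}-\pi^{B1})-(\pi^{L2}-\pi^{B2})$ appears whose sign the hypotheses do not control. To handle this I would exploit the indifference $F^i_{\tau_*^1}=M^i_{\tau_*^1}$ in conjunction with the dynamic-programming identity $\tau_F^i(\tau)=\tau_F^i(\tau_*^1)$ for $\tau\leq\tau_*^1$: it yields the zero-expectation relation $E[\int_{\tau_*^1}^{\tau_F^i(\tau_*^1)}(\pi^{Fi}-\pi^{Bi})\mid\F_{\tau_*^1}]=0$, which allows the troublesome bracket to be absorbed and the delayed-reaction case to be reduced to the already-handled configuration.
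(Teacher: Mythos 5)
Your proposal addresses only the final sentence of the proposition (that $\tau_*^1=\tau_*^2$ are mutual best replies) and treats the main best-reply claim as an already-established black box. But the main claim and the intermediate inequality ``$L^i_{\vartheta'}-E[F^i_{\tau_*^j}\mid\F_{\vartheta'}]\geq E[L^i_\tau-F^i_{\tau_*^j}\mid\F_{\vartheta'}]$'' are part of what must be proved. The paper's proof first reduces optimality of $\tau_*^i$ to the pointwise condition $L^i_{\vartheta'}\leq E[F^i_{\tau_*^j}\mid\F_{\vartheta'}]$ on $\{\vartheta'<\tau_*^j\}$, then establishes the chain of inequalities \eqref{devDi} exploiting $\pi^{Li}_\cdot\geq\pi^{Bi}_\cdot$ and the optimality of $\tau_D^i(\vartheta')$, and only then deploys \ref{maxM_D} and \ref{maxL_D}. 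None of that is in your proposal, so the core of the proposition is unproven.

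Even within the ``Further'' part there is a gap. Your verification of \ref{maxM_D} for $i=2$ is fine, and your monotone comparative statics via $U^2-U^1$ nondecreasing correctly gives a firm-2 optimizer $\tau_D^2(\vartheta')\geq\tau_D^1(\vartheta')$; firm 1's Case~1 then passes through. But in firm 1's Case~2 you ``set $\tau_D^2(\vartheta')=\tau_D^1(\vartheta')$,'' which contradicts what you just established: $\tau_D^1(\vartheta')$ need not be a firm-2 optimizer, so it is not a legitimate choice in condition \ref{maxL_D} for $i=2$. You can fix this without re-deriving \ref{maxL_D} for firm 2 at all: once the main claim for $i=1$ is proved it gives $L^1_{\vartheta'}\leq E[F^1_{\tau_*^2}\mid\F_{\vartheta'}]$ for \emph{all} $\vartheta'$ with $\vartheta'<\tau_*^2$, and your pointwise bound $L^1_\tau-L^2_\tau\geq E[F^1_{\tau_*^1}-F^2_{\tau_*^1}\mid\F_\tau]$ then directly yields $L^2_{\vartheta'}\leq E[F^2_{\tau_*^1}\mid\F_{\vartheta'}]$, which is exactly firm 2's best-reply criterion. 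This is the substance of the paper's comparison \eqref{devD2<devD1}, bypassing \ref{maxM_D}/\ref{maxL_D} for $i=2$ entirely. Finally, your treatment of the ``delayed reaction'' case (where $\tau_F^i(\tau)>\tau_*^1$) is described only as ``I would exploit the indifference \dots which allows the troublesome bracket to be absorbed,'' with no computation. The actual absorption requires combining $E[\int_{\tau_*^1}^{\tau_F^1(\vartheta')}(\pi^{B1}-\pi^{F1})\,ds\mid\cdot]=0$ with the further inequality $E[\int_{\tau_*^1}^{\tau_F^1(\vartheta')}(\pi^{B2}-\pi^{F2})\,ds\mid\cdot]\geq 0$ (which does not follow from indifference alone, but from indifference plus the optimality of $\tau_F^2(\vartheta')$ over $[\tau_F^1(\vartheta'),\tau_F^2(\vartheta')]$), and this two-step argument needs to be spelled out.
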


Condition \ref{maxM_D} is obviously also necessary, as the terminal payoff is at most $F^i_{\tau_*^j}$ (without preemption modeled as in Section \ref{subsec:preem}) and $L^i_\cdot\geq M^i_\cdot$. Condition \ref{maxL_D} says that it suffices to check for deviations by firm $i$ at solutions $\tau_D^i(\vartheta')<\tau_F^j(\vartheta')$ of \eqref{monoDi}, so there is nothing to check where $\vartheta'=\tau_F^j(\vartheta')$. The subsequent sentence implies that for threshold-type models, it is typically enough to consider $\vartheta'=\tau_D^i(\vartheta)$: If firm $i$ does not want to become leader there, it does not at any value that the state process will attain before crossing firm $j$'s follower threshold that determines $\tau_F^j(\vartheta)$. For states above that threshold, no deviations need to be considered.

Proposition \ref{prop:devi} can be applied in particular to equilibria of joint investment at some time $\tau_J=\tau_*^1=\tau_*^2\geq\vartheta$. Then on the one hand $F^2_{\tau_J}=M^2_{\tau_J}$ is necessary, which automatically implies $F^1_{\tau_J}=M^1_{\tau_J}$ by Lemma \ref{lem:tau^1_F<tau^2_F}. On the other hand, \ref{maxM_D} is then the clearly necessary condition that $\tau_J$ must be an (at least constrained) optimal time for maximizing the expected joint investment payoff $E\bigl[M^i_{\tau_J}\bigv\F_\vartheta\bigr]$ as considered in Lemma \ref{lem:Fmaxstop}. Given such $\tau_J$, an equilibrium can be verified by condition \ref{maxL_D}, where it suffices to consider firm $1$ if the additional revenue order holds.

\begin{proof}[{\bf Proof of Proposition \ref{prop:devi}}]
Given $\tau_*^j\geq\vartheta$, firm $i$'s expected payoff from any stopping time $\tau^i\geq\vartheta$ is $E\bigl[L^i_{\tau^i}\indi{\tau^i<\tau_*^j}+M^i_{\tau^i}\indi{\tau^i=\tau_*^j}+F^i_{\tau_*^j}\indi{\tau^i>\tau_*^j}\bigv\F_{\vartheta}\bigr]\leq E\bigl[L^i_{\tau^i}\indi{\tau^i<\tau_*^j}+F^i_{\tau_*^j}\indi{\tau_*^i\geq\tau_*^j}\bigv\F_{\vartheta}\bigr]$. The latter is attainable by the stopping time $\tau^i\indi{\tau^i<\tau_*^j}+\infty\indi{\tau_*^i\geq\tau_*^j}$, so $\tau_*^i$ is a best reply to $\tau_*^j$ iff $F^i_{\tau_*^j}=M^i_{\tau_*^j}$ on $\{\tau_*^i=\tau_*^j\}$ and $\tau=\tau_*^j$ attains
\begin{align*}
&\esssup_{\vartheta\leq\tau\leq\tau_*^j}E\Bigl[L^i_\tau\indi{\tau<\tau_*^j}+F^i_{\tau_*^j}\indi{\tau\geq\tau_*^j}\Bigv\F_\vartheta\Bigr].
\end{align*}
By iterated expectations, this is equivalent to $L^i_{\vartheta'}-E\bigl[F^i_{\tau_*^j}\bigv\F_{\vartheta'}\bigr]\leq 0$ on $\{\vartheta'<\tau_*^j\}$ for all stopping times $\vartheta'\geq\vartheta$. To establish the latter under conditions \ref{maxM_D} and \ref{maxL_D}, fix arbitrary $\vartheta'\geq\vartheta$ and let $\tau_D^i(\vartheta')\in\T$ attain \eqref{monoDi} (such $\tau_D^i(\vartheta')$ exists by continuity and integrability of the process to be stopped), whence $E\bigl[\int_{\vartheta'}^{\tau_D^i(\vartheta')}(\pi^{Li}_s-\pi^{0i}_s)\,ds\bigv\F_{\vartheta'}\bigr]\leq 0$. On $\{\vartheta'<\tau_*^j\}$ we then have
\begin{align}\label{devDi}
L^i_{\vartheta'}-E\Bigl[M^i_{\tau_*^j}\Bigv\F_{\vartheta'}\Bigr]&=E\biggl[\int_{\vartheta'}^{\tau_F^j(\vartheta')}(\pi^{Li}_s-\pi^{0i}_s)\,ds+\int_{\tau_F^j(\vartheta')}^{\tau_*^j}(\pi^{Bi}_s-\pi^{0i}_s)\,ds\biggv\F_{\vartheta'}\biggr] \\
&\leq E\biggl[\int_{\vartheta'}^{\tau_F^j(\vartheta')\vee\tau_D^i(\vartheta')}(\pi^{Li}_s-\pi^{0i}_s)\,ds+\int_{\tau_F^j(\vartheta')\vee\tau_D^i(\vartheta')}^{\tau_*^j}(\pi^{Bi}_s-\pi^{0i}_s)\,ds\biggv\F_{\vartheta'}\biggr] \nonumber\\
&\leq E\biggl[\int_{\tau_D^i(\vartheta')}^{\tau_F^j(\vartheta')\vee\tau_D^i(\vartheta')}(\pi^{Li}_s-\pi^{0i}_s)\,ds+\int_{\tau_F^j(\vartheta')\vee\tau_D^i(\vartheta')}^{\tau_*^j}(\pi^{Bi}_s-\pi^{0i}_s)\,ds\biggv\F_{\vartheta'}\biggr] \nonumber\\
&=E\Bigl[\indi{\tau_D^i(\vartheta')<\tau_F^j(\vartheta')}\Bigl(L^i_{\tau_D^i(\vartheta')}-M^i_{\tau_*^j}\Bigr) \nonumber\\
&\quad\;+\indi{\tau_D^i(\vartheta')\geq\tau_F^j(\vartheta')}\Bigl(M^i_{\tau_D^i(\vartheta')}-M^i_{\tau_*^j}\Bigr)\Bigv\F_{\vartheta'}\Bigr]. \nonumber
\end{align}
The first equality uses the convention $\int_b^a\cdot\,ds=-\int_a^b\cdot\,ds$ for $a<b$. The first inequality is due to $\pi^{Li}_\cdot\geq\pi^{Bi}_\cdot$ and the second due to the optimality of $\tau_D^i(\vartheta')$. The last equality is analogous to the first, using iterated expectations and $\tau_D^i(\vartheta')<\tau_F^j(\vartheta')\Rightarrow\tau_F^j(\tau_D^i(\vartheta'))=\tau_F^j(\vartheta')$. After replacing $M^i_{\tau_*^j}$ by $F^i_{\tau_*^j}$ in the first and last terms of \eqref{devDi}, conditions \ref{maxM_D} and \ref{maxL_D} make the last nonpositive (taking iterated expectations at $\tau_D^i(\vartheta')$), and thus also $L^i_{\vartheta'}-E\bigl[F^i_{\tau_*^j}\bigv\F_{\vartheta'}\bigr]\leq 0$.

To prove the next claim, note that for any stopping time $\tau\in[\vartheta',\tau_F^j(\vartheta')]$ we have $\tau_F^j(\tau)=\tau_F^j(\vartheta')$ and thus $L^i_{\vartheta'}-E\bigl[L^i_{\tau}\bigv\F_{\vartheta'}\bigr]=E\bigl[\int_{\vartheta'}^{\tau}(\pi^{Li}_s-\pi^{0i}_s)\,ds\bigv\F_{\vartheta'}\bigr]\geq 0$ where $\vartheta'$ attains \eqref{monoDi}.

For the final claim consider any stopping time $\tau_*^2\geq\vartheta$ such that $F^2_{\tau_*^2}=M^2_{\tau_*^2}$; then also $F^1_{\tau_*^2}=M^1_{\tau_*^2}$ by Lemma \ref{lem:tau^1_F<tau^2_F}. Suppose further that \ref{maxM_D}, \ref{maxL_D} hold for $i=1$, so $\tau_*^1=\tau_*^2$ is a best reply for firm $1$. To prove that $\tau_*^2$ is also a best reply for firm $2$ to $\tau_*^1=\tau_*^2$ if $\pi^{L1}_\cdot-\pi^{01}_\cdot\geq\pi^{L2}_\cdot-\pi^{02}_\cdot$ and $\pi^{B1}_\cdot-\pi^{01}_\cdot\geq\pi^{B2}_\cdot-\pi^{02}_\cdot$, we show that then \eqref{devDi} is not greater for $i=2$ than for $i=1$. Therefore note that for each $i=1,2$, $F^i_{\tau_*^2}=M^i_{\tau_*^2}$ implies $E\bigl[\indi{A}\int_{\tau_*^2}^{\tau_F^i(\vartheta')}(\pi^{Bi}_s-\pi^{Fi}_s)\,ds\bigv\F_{\vartheta'}\bigr]=0$ for any set $A\subset\{\tau_F^i(\vartheta')\geq\tau_*^2\}$ (taking iterated expectations at $\tau_*^2$), in particular for $A=\{\tau_F^1(\vartheta')>\tau_*^2\}$ as $\tau_F^2(\vartheta')\geq\tau_F^1(\vartheta')$. Further, $E\bigl[\indi{\tau_F^1(\vartheta')>\tau_*^2}\int_{\tau_F^1(\vartheta')}^{\tau_F^2(\vartheta')}(\pi^{B2}_s-\pi^{F2}_s)\,ds\bigv\F_{\vartheta'}\bigr]\leq 0$ by optimality of $\tau_F^2(\vartheta')$ (and iterated expectations at $\tau_F^1(\vartheta')$), so $E\bigl[\indi{\tau_F^1(\vartheta')>\tau_*^2}\int_{\tau_*^2}^{\tau_F^1(\vartheta')}(\pi^{B2}_s-\pi^{F2}_s)\,ds\bigv\F_{\vartheta'}\bigr]\geq 0$.

Now, rewriting \eqref{devDi} for $i=2$, we obtain
\begin{align}\label{devD2<devD1}
E\biggl[\int_{\vartheta'}^{\tau_F^1(\vartheta')\wedge\tau_*^2}(\pi^{L2}_s-\pi^{02}_s)\,ds+\indi{\tau_F^1(\vartheta')\leq\tau_*^2}&\int_{\tau_F^1(\vartheta')}^{\tau_*^2}(\pi^{B2}_s-\pi^{02}_s)\,ds \nonumber\\
+\indi{\tau_F^1(\vartheta')>\tau_*^2}&\int_{\tau_*^2}^{\tau_F^1(\vartheta')}(\pi^{L2}_s-\pi^{B2}_s)\,ds\biggv\F_{\vartheta'}\biggr] \nonumber\\
\leq E\biggl[\int_{\vartheta'}^{\tau_F^1(\vartheta')\wedge\tau_*^2}(\pi^{L1}_s-\pi^{01}_s)\,ds+\indi{\tau_F^1(\vartheta')\leq\tau_*^2}&\int_{\tau_F^1(\vartheta')}^{\tau_*^2}(\pi^{B1}_s-\pi^{01}_s)\,ds \nonumber\\
+\indi{\tau_F^1(\vartheta')>\tau_*^2}&\int_{\tau_*^2}^{\tau_F^1(\vartheta')}(\pi^{L2}_s-\pi^{F2}_s)\,ds\biggv\F_{\vartheta'}\biggr] \nonumber\\
\leq E\biggl[\int_{\vartheta'}^{\tau_F^1(\vartheta')\wedge\tau_*^2}(\pi^{L1}_s-\pi^{01}_s)\,ds+\indi{\tau_F^1(\vartheta')\leq\tau_*^2}&\int_{\tau_F^1(\vartheta')}^{\tau_*^2}(\pi^{B1}_s-\pi^{01}_s)\,ds \nonumber\\
+\indi{\tau_F^1(\vartheta')>\tau_*^2}&\int_{\tau_*^2}^{\tau_F^1(\vartheta')}(\pi^{L1}_s-\pi^{F1}_s)\,ds \nonumber\\
+{}&\int_{\tau_F^1(\vartheta')}^{\tau_F^2(\vartheta')}(\pi^{L1}_s-\pi^{B1}_s)\,ds\biggv\F_{\vartheta'}\biggr]
\end{align}
The last inequality uses the assumption $\pi^{L1}_\cdot-\pi^{F1}_\cdot\geq\pi^{L2}_\cdot-\pi^{F2}_\cdot$ as well as $\tau_F^1(\vartheta')\leq\tau_F^2(\vartheta')$ and $\pi^{L1}_\cdot\geq\pi^{B1}_\cdot$. Rearranging \eqref{devD2<devD1} using $E\bigl[\indi{\tau_F^1(\vartheta')>\tau_*^2}\int_{\tau_*^2}^{\tau_F^i(\vartheta')}(\pi^{Bi}_s-\pi^{Fi}_s)\,ds\bigv\F_\vartheta\bigr]=0$ yields \eqref{devDi} for $i=1$.
\end{proof}

Proposition \ref{prop:devi} simplifies as follows for sequential investment.

\begin{corollary}\label{cor:seqeql}
Consider any $\vartheta\in\T$ and let $\tau_S\in\T$ solve \eqref{maxL_S}. Then it is an equilibrium in the subgame beginning at $\vartheta$ that firm $1$ plans to invest at $\tau_*^1=\tau_S$ and firm $2$ at $\tau_*^2=\tau_F^2(\vartheta)$ if condition \ref{maxL_D} of Proposition \ref{prop:devi} is satisfied for firm $i=2$.

Further, if $\pi^{L1}_\cdot-\pi^{01}_\cdot\geq\pi^{L2}_\cdot-\pi^{02}_\cdot$, then $\tau_D^2(\vartheta')=\tau_S$ attains \eqref{monoDi} where $\vartheta'\leq\tau_*^1=\tau_S$.
\end{corollary}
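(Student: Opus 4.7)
The plan is to verify firm 1's best reply directly and then apply Proposition \ref{prop:devi} with $i=2$, $j=1$, $\tau_*^1=\tau_S$, $\tau_*^2=\tau_F^2(\vartheta)$ for firm 2, while showing that the remaining hypotheses of that proposition are automatic in this sequential set-up.

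For firm 1, I would first observe that by Lemma \ref{lem:tau^1_F<tau^2_F}, $\tau_F^1(\vartheta)\leq\tau_F^2(\vartheta)$, so by stopping-time consistency $\tau_F^1(\tau_F^2(\vartheta))=\tau_F^2(\vartheta)$ and hence $F^1_{\tau_F^2(\vartheta)}=M^1_{\tau_F^2(\vartheta)}\leq L^1_{\tau_F^2(\vartheta)}$. As in the opening display of the proof of Proposition \ref{prop:devi}, firm 1's payoff from any plan $\tau^1\geq\vartheta$ against $\tau_F^2(\vartheta)$ is then bounded above by $E\bigl[L^1_{\tau^1\wedge\tau_F^2(\vartheta)}\bigv\F_\vartheta\bigr]$, with equality attained by replacing $\tau^1$ with $\tau^1\wedge\tau_F^2(\vartheta)$; by the definition of $\tau_S$ as a solution of \eqref{maxL_S}, the optimum is $\tau_*^1=\tau_S$. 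Turning to firm 2, the condition $F^2_{\tau_S}=M^2_{\tau_S}$ on the overlap $\{\tau_S=\tau_F^2(\vartheta)\}$ is immediate because $\tau_F^2(\tau_S)=\tau_S$ there. Condition \ref{maxM_D} of Proposition \ref{prop:devi} for $i=2$ follows from the fact that $F^2$ is a submartingale on $[\vartheta,\tau_F^2(\vartheta)]$ (from the representation $F^2_t=\int_0^t\pi^{02}_s\,ds+E[\int_t^{\tau_F^2(\vartheta)}\pi^{F2}_s\,ds+\int_{\tau_F^2(\vartheta)}^\infty\pi^{B2}_s\,ds\mid\F_t]$ on that interval, together with $\pi^{02}_\cdot\geq\pi^{F2}_\cdot$) and from $F^2\geq M^2$: for any $\tau\in[\vartheta,\tau_S]$ one gets $E\bigl[F^2_{\tau_S}\bigv\F_\vartheta\bigr]\geq E\bigl[F^2_\tau\bigv\F_\vartheta\bigr]\geq E\bigl[M^2_\tau\bigv\F_\vartheta\bigr]$ by optional sampling. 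Condition \ref{maxL_D} for $i=2$ is exactly the hypothesis, so Proposition \ref{prop:devi} yields that $\tau_F^2(\vartheta)$ is a best reply for firm 2.

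For the second claim, suppose $\pi^{L1}_\cdot-\pi^{01}_\cdot\geq\pi^{L2}_\cdot-\pi^{02}_\cdot$ and fix $\vartheta'\in[\vartheta,\tau_S]$. By Lemma \ref{lem:stoptau^1_F}, $\tau_S$ also attains \eqref{mono1} at $\vartheta$; by a standard Snell-envelope pasting argument, optimality propagates so that, on $\{\vartheta'\leq\tau_S\}$, for every stopping time $\tau\in[\vartheta',\tau_S]$ one has $E\bigl[\int_\tau^{\tau_S}(\pi^{L1}_s-\pi^{01}_s)\,ds\bigv\F_{\vartheta'}\bigr]\leq 0$ --- otherwise the concatenation that uses $\tau$ on the exceptional $\F_{\vartheta'}$-measurable set and $\tau_S$ elsewhere would strictly improve upon $\tau_S$ in the problem at $\vartheta$. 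The revenue order immediately yields $E\bigl[\int_\tau^{\tau_S}(\pi^{L2}_s-\pi^{02}_s)\,ds\bigv\F_{\vartheta'}\bigr]\leq 0$ as well, which, after subtracting from the expected payoff of $\tau_S$ in \eqref{monoDi} for $i=2$ over $[\vartheta',\tau_S]$, is exactly the statement that $\tau=\tau_S$ attains that constrained problem.

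The main obstacle is the measurable propagation of optimality of $\tau_S$ from $\F_\vartheta$ to arbitrary later $\F_{\vartheta'}$. The pasting argument is routine, but some care is needed because $\tau_S$ is here an arbitrary optimal stopping time rather than a canonical earliest or latest one; nevertheless the conditional inequality above holds for every optimal $\tau_S$ by the same concatenation argument, since the hypothetical improving stopping time would be at most $\tau_S$ pathwise on the bad set and thus produce an admissible competitor in the original problem at $\vartheta$.
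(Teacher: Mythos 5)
Your argument is correct and follows essentially the same route as the paper: apply Proposition \ref{prop:devi} with $i=2$, check condition \ref{maxM_D} via the submartingale property of $F^2$ (from $\pi^{F2}_\cdot\leq\pi^{02}_\cdot$), take condition \ref{maxL_D} as the hypothesis, and for the second claim propagate the $\F_\vartheta$-optimality of $\tau_S$ in \eqref{mono1} to the conditional form and then invoke the extra revenue order $\pi^{L1}_\cdot-\pi^{01}_\cdot\geq\pi^{L2}_\cdot-\pi^{02}_\cdot$. You in fact supply two details the paper leaves tacit, namely the explicit verification that $\tau_S$ is a best reply for firm $1$ against $\tau_F^2(\vartheta)$ and the concatenation argument upgrading optimality at $\F_\vartheta$ to the $\F_{\vartheta'}$-conditional form; the only small slip is attributing $\tau_F^1(\tau_F^2(\vartheta))=\tau_F^2(\vartheta)$ to ``stopping-time consistency,'' when it actually follows from applying Lemma \ref{lem:tau^1_F<tau^2_F} at $\tau=\tau_F^2(\vartheta)$ together with $\tau_F^2(\tau_F^2(\vartheta))=\tau_F^2(\vartheta)$ and feasibility $\tau_F^1(\cdot)\geq\cdot$.
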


Note that in the setting of Corollary \ref{cor:seqeql}, it suffices for condition \ref{maxL_D} of Proposition \ref{prop:devi} to hold that firm $2$ does not have a local first-mover advantage where $\tau_D^2(\vartheta')<\tau_F^1(\vartheta')$ attains \eqref{monoDi}, as $(F^2_t)$ is a submartingale on $[\vartheta',\tau_F^2(\vartheta')]$. Under the additional revenue order in the corollary, this simply amounts to $\tau_S$ not being in the preemption region $\cP$.

\begin{proof}[{\bf Proof of Corollary \ref{cor:seqeql}}]
We only need to verify optimality for firm $i=2$ by applying Proposition \ref{prop:devi} with $\tau_*^1=\tau_S\leq\tau_F^2(\vartheta)=\tau_*^2$. Then indeed $F^2_{\tau_*^2}=M^2_{\tau_*^2}$. Further, condition \ref{maxM_D} is satisfied as $M^2_\cdot\leq F^2_\cdot$ and $(F^2_t)$ is a submartingale on $[\vartheta,\tau_F^2(\vartheta)]$ by $\pi^{F2}_\cdot\leq\pi^{02}_\cdot$. Hence $\tau_*^2$ is optimal if the remaining condition \ref{maxL_D} is satisfied.

For the second claim note that if $\pi^{L1}_\cdot-\pi^{01}_\cdot\geq\pi^{L2}_\cdot-\pi^{02}_\cdot$, then $E\bigl[\int_{\tau}^{\tau_S}(\pi^{L2}_s-\pi^{02}_s)\,ds\bigv\F_{\tau}\bigr]\leq E\bigl[\int_{\tau}^{\tau_S}(\pi^{L1}_s-\pi^{01}_s)\,ds\bigv\F_{\tau}\bigr]\leq 0$ for any stopping time $\tau\in[\vartheta,\tau_S]$ by the optimality of $\tau_S$, cf.\ Lemma \ref{lem:stoptau^1_F}, and thus $\tau_D^2(\vartheta')=\tau_S\vee\vartheta'$ attains the current instance of \eqref{monoDi}.
\end{proof}

\subsection{Technical results}\label{app:tech}

\begin{lemma}\label{lem:LFreg}
In the setting of Section \ref{sec:model}, consider four processes $(\pi^m_t)\in L^1(dt\otimes P)$, $m=0,L,F,B$, such that each process $(\int_0^t\pi^m_s\,ds)$ is adapted, and let $\bigl\{\tau_O(\tau),\tau\in\T\bigr\}$ be a family of stopping times satisfying $\tau\leq\tau_O(\tau)\leq\tau_O(\tau')$ a.s.\ for all $\tau,\tau'\in\T$ with $\tau\leq\tau'$ a.s. Then there exist optional processes $(L_t)$ and $(F_t)$ that are of class {\rm (D)} and which satisfy
\begin{equation*}
L_\tau=L(\tau):=\int_0^\tau \pi^0_s\,ds+E\biggl[\int_\tau^{\tau_O(\tau)}\pi^L_s\,ds+\int_{\tau_O(\tau)}^\infty\pi^B_s\,ds\biggv\F_\tau\biggr]
\end{equation*}
and
\begin{equation*}
F_\tau=F(\tau):=\int_0^\tau \pi^0_s\,ds+\esssup_{\tau'\geq \tau}E\biggl[\int_\tau^{\tau'}\pi^F_s\,ds+\int_{\tau'}^\infty\pi^B_s\,ds\biggv\F_\tau\biggr]
\end{equation*}
a.s.\ for every $\tau\in\T$. In particular, the process $(F_t)$ can be chosen right-continuous. If $\lim\tau_O(\tau^n)=\tau_O(\tau)$ a.s.\ for any $\tau\in\T$ and sequence $(\tau^n)_{n\in\N}\subset\T$ with $\tau^n\searrow\tau$ a.s., then also $(L_t)$ can be chosen right-continuous.

All conditions are met when letting each $\tau_O(\tau)$ be the latest stopping time attaining the value of $F(\tau)$, or when letting each $\tau_O(\tau)=\tau$.
\end{lemma}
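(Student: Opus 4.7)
My plan is to obtain both processes by aggregating the given $\F_\tau$-indexed families into optional processes: for $F$ via the classical Snell envelope, and for $L$ via a direct application of an optional aggregation theorem. All integrability and right-continuity claims will follow from the $L^1(P\otimes dt)$-integrability of the revenues together with the right-continuity of the filtration.

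\emph{Construction of $F$.} Pick a càdlàg version $N_t$ of the martingale $E[\int_0^\infty\pi^B_s\,ds\mid\F_t]$, and set $Z_t:=N_t+\int_0^t(\pi^F_s-\pi^B_s)\,ds$, a right-continuous adapted process of class (D). Optional sampling gives $E[Z_{\tau'}\mid\F_\tau]=E[\int_0^{\tau'}\pi^F_s\,ds+\int_{\tau'}^\infty\pi^B_s\,ds\mid\F_\tau]$ for any $\tau'\geq\tau\in\T$, so that $F(\tau)=\int_0^\tau(\pi^0_s-\pi^F_s)\,ds+\esssup_{\tau'\geq\tau}E[Z_{\tau'}\mid\F_\tau]$. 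The classical theory of Snell envelopes for right-continuous class-(D) rewards (El Karoui, Dellacherie--Meyer) produces a càdlàg optional modification $S$ with $S_\tau=\esssup_{\tau'\geq\tau}E[Z_{\tau'}\mid\F_\tau]$ for every $\tau\in\T$, and $F_t:=\int_0^t(\pi^0_s-\pi^F_s)\,ds+S_t$ is the required right-continuous optional process of class (D).

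\emph{Construction of $L$.} Write $K_t:=\int_0^t(\pi^L_s-\pi^B_s)\,ds$, which is continuous and adapted. A direct rearrangement (adding and subtracting $\int_\tau^\infty\pi^B_s\,ds$ inside the conditional expectation) gives $L(\tau)=N_\tau+\int_0^\tau(\pi^0_s-\pi^L_s)\,ds+\Phi(\tau)$ with $\Phi(\tau):=E[K_{\tau_O(\tau)}\mid\F_\tau]$. Only the aggregation of $\Phi$ is nontrivial. I would verify the consistency $\Phi(\tau_1)=\Phi(\tau_2)$ on $\{\tau_1=\tau_2\}$ from the monotonicity of $\tau_O$ by comparing $\tau_O(\tau_1\wedge\tau_2)$ and $\tau_O(\tau_1\vee\tau_2)$ together with the usual switching-of-stopping-times trick to conclude $\tau_O(\tau_1)=\tau_O(\tau_2)$ on that event. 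Dellacherie--Meyer's optional aggregation theorem (IV.86) then produces an optional process $\Phi_t$ with $\Phi_\tau=\Phi(\tau)$ for all $\tau\in\T$. If moreover $\tau_O(\tau^n)\to\tau_O(\tau)$ for every sequence $\tau^n\searrow\tau$, continuity of $K$ together with dominated convergence and reverse-martingale convergence along $\F_{\tau^n}\downarrow\F_\tau$ yield $\Phi(\tau^n)\to\Phi(\tau)$ in $L^1$, which is the standard hypothesis for extracting a right-continuous version. Then $L_t:=N_t+\int_0^t(\pi^0_s-\pi^L_s)\,ds+\Phi_t$ is the required process.

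\emph{Examples and main obstacle.} The case $\tau_O(\tau)=\tau$ is trivial: monotonicity and the continuity hypothesis hold automatically. For $\tau_O(\tau)=\tau_F^i(\tau)$, the latest optimal time in the Snell-envelope problem defining $F(\tau)$, both monotonicity in the initial time and convergence along $\tau^n\searrow\tau$ follow from the Mertens decomposition of $S$ combined with the right-continuity of $Z$ and $S$. The main technical obstacle is passing from the global monotonicity of $\tau_O$ to the local consistency $\tau_O(\tau_1)=\tau_O(\tau_2)$ on $\{\tau_1=\tau_2\}$ required by the aggregation theorem, which forces the stopping-time switching argument; a secondary delicate point is the right-continuity of the \emph{latest} optimal time $\tau_F^i(\cdot)$, since ``latest'' solutions are defined through a supremum and require the fine structure of the optimal stopping region rather than only the Snell envelope.
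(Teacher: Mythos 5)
Your construction is essentially the paper's: both proofs peel off the adapted additive term and the martingale $E[\int_0^\infty\pi^B_s\,ds\mid\F_\tau]$, reduce $F$ to the Snell envelope of the continuous process $Y_t=\int_0^t(\pi^F_s-\pi^B_s)\,ds$, and reduce $L$ to aggregating the family $X(\tau)=E[\int_0^{\tau_O(\tau)}(\pi^L_s-\pi^B_s)\,ds\mid\F_\tau]$. The paper aggregates $X$ by observing (after splitting $\pi^L-\pi^B$ into positive and negative parts) that it is a difference of submartingale-systems of class (D), citing El~Karoui; you appeal to a general optional-aggregation result. Both routes reach the same place, and your treatment of right-continuity in expectation via reverse-martingale convergence along $\F_{\tau^n}\downarrow\F_\tau$ is a serviceable way to make the paper's one-line claim precise.

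The one concrete problem is the sketched consistency argument. The monotonicity hypothesis $\tau\leq\tau'\Rightarrow\tau_O(\tau)\leq\tau_O(\tau')$ a.s.\ does \emph{not} imply the local compatibility $\tau_O(\tau_1)=\tau_O(\tau_2)$ on $\{\tau_1=\tau_2\}$ that any aggregation theorem needs. Try $\tau_O(\tau):=\tau+E[\tau]$: each $\tau_O(\tau)$ is a stopping time with $\tau_O(\tau)\geq\tau$ and the family is monotone, yet with $\tau_1\equiv 1$ and $\tau_2:=2\indi{A}+\indi{A^c}$ for $A\in\F_1$, $P[A]\in(0,1)$, we have $\tau_1=\tau_2$ on $A^c$ but $\tau_O(\tau_1)=2\neq 2+P[A]=\tau_O(\tau_2)$ there. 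Your ``switching'' plan only reduces the question to showing $\tau_O(\tau_1\wedge\tau_2)=\tau_O(\tau_1\vee\tau_2)$ on $\{\tau_1=\tau_2\}$, which is the same kind of consistency, so the argument is circular. To be fair, the paper's own proof asserts ``$X$ is a submartingale-system'' without explicitly checking compatibility either; the lemma is tacitly being read with the additional (and unstated) consistency property, which is what holds for the two intended cases $\tau_O(\tau)=\tau$ and $\tau_O(\tau)=\tau_F(\tau)$. For the latter, the paper makes the point you correctly flag as delicate: $\tau_F(\tau)$ is realized as the first time past $\tau$ that the \emph{continuous} monotone part of the Snell envelope of $Y$ increases, which simultaneously gives compatibility, monotonicity, and $\lim_n\tau_F(\tau^n)=\tau_F(\tau)$ for $\tau^n\searrow\tau$; a bare appeal to the Mertens decomposition and right-continuity of $S$ would not by itself pin down the latest optimal time nor its right-continuity.
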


\begin{proof}
First rewrite $F(\tau)$ as
\begin{equation}\label{Fopt}
\begin{split}
F(\tau)=&\int_0^\tau\bigl(\pi^0_s-\pi^F_s\bigr)\,ds+E\biggl[\int_0^\infty\pi^B_s\,ds\biggv\F_\tau\biggr]+\esssup_{\tau'\geq \tau}E\biggl[\int_0^{\tau'}\bigl(\pi^F_s-\pi^B_s\bigr)\,ds\biggv\F_\tau\biggr].
\end{split}
\end{equation}
The first term on the RHS is a continuous process evaluated at $\tau$ which is by assumption adapted and bounded by $\int_0^\infty\bigl(\lvert\pi^0_s\rvert+\lvert\pi^F_s\rvert\bigr)\,ds \in L^1(P)$, hence optional and of class {\rm (D)}. The second and third terms are (super-)martingale-systems \citep[cf.][Proposition 2.26]{ElKaroui81} of class {\rm (D)}~-- particularly the latter bounded by the family $\bigl\{E\bigl[\int_0^\infty\bigl(\lvert\pi^F_s\rvert+\lvert\pi^B_s\rvert\bigr)\,ds\bigv\F_\tau\bigr],\tau\in\T\bigr\}$ of class {\rm (D)}. Thus there exist optional processes of class {\rm (D)} that aggregate the two \mbox{(super-)}martingale-systems, respectively. The former, being a martingale, may be chosen right-continuous. The latter is in fact the Snell envelope $U_Y$ of the continuous process $(Y_t):=(\int_0^t(\pi^F_s-\pi^B_s)\,ds)$, whence $U_Y$ is (right-)continuous in expectation and thus may be taken to have right-continuous paths, a.s.

$L(\tau)$ can be written like \eqref{Fopt}, with a third term $X(\tau):=E\bigl[\int_0^{\tau_O(\tau)}\bigl(\pi^L_s-\pi^B_s\bigr)\,ds\bigv\F_\tau\bigr]$. Suppose first that $\pi^L_s-\pi^B_s\geq 0$ for all $s\in\R_+$, a.s. In this case
\begin{align*}
E\bigl[X(\tau')\bigv\F_\tau\bigr]&=X(\tau)+E\biggl[\int_{\tau_O(\tau)}^{\tau_O(\tau')}\bigl(\pi^L_s-\pi^B_s\bigr)\,ds\biggv\F_\tau\biggr]\geq X(\tau)
\end{align*}
for all stopping times $\tau'\geq\tau$ (as $\tau_O(\tau')\geq\tau_O(\tau)$), so $X:=\bigl\{X(\tau),\tau\in\T\bigr\}$ is a submartingale-system. $X$ is bounded by $\bigl\{E\bigl[\int_0^\infty\bigl(\lvert\pi^L_s\rvert+\lvert\pi^B_s\rvert\bigr)\,ds\bigv\F_\tau\bigr],\tau\in\T\bigr\}$, hence of class {\rm (D)}. In general, the last argument applies separately to $\bigl(\pi^L_s-\pi^B_s\bigr)^+$ and $\bigl(\pi^L_s-\pi^B_s\bigr)^-$, showing that $X$ is the difference of two submartingale-systems which can be aggregated by two optional processes of class {\rm (D)}.

If $\lim\tau_O(\tau^n)=\tau_O(\tau)$ a.s.\ for any sequence $(\tau^n)_{n\in\N}\subset\T$ with $\tau^n\searrow\tau$ a.s., then $X$~-- being of class {\rm (D)}~-- is right-continuous in expectation and the aggregating submartingales can be chosen with right-continuous paths.

Finally, as the process $(Y_t)$ defined above is continuous, the latest stopping time after $\tau$ that attains $F(\tau)$~-- $\tau_F(\tau)$~-- is the first time the monotone part of the Snell envelope $U_Y$ increases. That monotone part inherits continuity from $(Y_t)$. Thus chosen, $\tau\leq\tau_F(\tau)\leq\tau_F(\tau')$ on $\{\tau\leq\tau'\}$ for all $\tau,\tau'\in\T$. Now consider a sequence of stopping times $\tau^n\searrow\tau$ a.s., whence also $\tau_F(\tau^n)$ decreases in $n$. By construction we can only have $\lim\tau_F(\tau^n)>\tau_F(\tau)\geq\tau$ where the monotone part of $U_Y$ is constant on $(\tau_F(\tau),\lim\tau_F(\tau^n)]$. By continuity it must then be constant on $[\tau_F(\tau),\lim\tau_F(\tau^n)]$. However, the monotone part of $U_Y$ increases at $\tau_F(\tau)$ by definition, so we must have $\tau_F(\tau)=\lim\tau_F(\tau^n)$ a.s.
\end{proof}

\begin{remark}\label{rem:LFladlag}
As the proof of Lemma \ref{lem:LFreg} relies on the aggregation of supermartingales of class {\rm (D)}, we may further assume that the processes $(L_t)$ and $(F_t)$ have left limits at any time $t$ \citep[see][Proposition 2.27]{ElKaroui81}.
\end{remark}

\begin{remark}\label{rem:optleader}
The solutions~-- and in particular the stopping regions~-- for the monopoly problem \eqref{Mono^i} and the problem \eqref{maxL} of when to become optimally the leader typically differ. Consider a model in which the profit streams are driven by a diffusion $(Y_t)$ such that each firm $i$ has a follower threshold, say $y^i_F$ solving \eqref{F} with $\tau^i_F(\tau)=\inf\{t\geq\tau\mid Y_t\geq y^i_F\}$, and firm $1$ also has a monopoly threshold, say $y^1_L\leq y^1_F$ solving \eqref{Mono^i}, and where $L^1_t$ can be represented as a continuous function of the state $Y_t$. Now one can apply arguments of \cite{Jacka93} relying on the semi-martingale property of $(L^1_t)$, which the proof of Lemma \ref{lem:LFreg} actually establishes. Denote the finite-variation part of $(L^1_t)$ by $(A_t)$. The Snell envelope $(S_t)$ of $(L^1_t)$, i.e.\ the value process of optimally stopping $(L^1_t)$, now is continuous (as a function of the state) as well and its monotone decreasing part $(B_t)$ is given by $dB_t=\indi{S_t=L^1_t}dA_t+\frac12dL^0_t(S_t-L^1_t)$. The last term is the local time of $(S_t-L^1_t)$ spent at $0$ (i.e.\ in the stopping region), which is absolutely continuous w.r.t.\ $\indi{S_t=L^1_t}dA_t\leq 0$.

Now suppose the stopping region $\{S_\cdot=L^1_\cdot\}$ is that of the monopoly problem, $\{Y_\cdot\geq y^1_L\}$, whence $dL^0_t(S_t-L^1_t)$ lives on the boundary $\{Y_\cdot=y^1_L\}$. For $Y_t\in[y^1_L,y^2_F)$, $(L^1_t)$ has a drift given by the foregone monopoly profit stream, $dA_t=-\pi^{L1}_t\,dt$, whence $dL^0_t(S_t-L^1_t)\equiv 0$ if $(Y_t)$ has a transition density, cf.\ Theorem 6 of \cite{Jacka93}.

As $(L^1_t)$ is of class {\rm (D)}, so is $(S_t)$, which thus converges to $S_\infty=L^1_\infty=0$ in $L^1(P)$ as $t\to\infty$. Therefore the martingale part of $(S_t)$ is simply $E[-B_\infty\mid\F_t]$ and $S_t=E[-\int_t^\infty\indi{S_s=L^1_s}\,dA_s\mid\F_t]$. Noting further that for $Y_t>y^2_F$, $(L^1_t)$ has a drift given by the foregone duopoly stream, $dA_t=-\pi^{B1}_t\,dt$, we then obtain
\begin{align}\label{Sleader}
S_t=E\biggl[\int_t^\infty\Bigl(\indi{Y_s\in[y^1_L,y^2_F)}\pi^{L1}_s+\indi{Y_s>y^2_F}\pi^{B1}_s\Bigr)\,ds-\int_t^\infty\indi{Y_s=y^2_F}\,dA_s\biggv\F_t\biggr].
\end{align}
By applying similar reasoning to firm $1$'s monopoly problem \eqref{Mono^i}, which is solved by $\tau^1_L(t)=\inf\{s\geq t\mid Y_s\geq y^1_L\}$, its value is
$E\bigl[\int_{\tau^1_L(t)}^\infty\pi^{L1}_s\,ds\bigv\F_t\bigr]=E\bigl[\int_t^\infty\indi{Y_s\geq y^1_L}\pi^{L1}_s\,ds\bigv\F_t\bigr]$, i.e.\ $E\bigl[\int_{\tau^1_L(t)}^\infty\indi{Y_s<y^1_L}\pi^{L1}_s\,ds\bigv\F_t\bigr]=0$. Thus, if $Y_t\geq y^1_L$, \eqref{Sleader} can be rewritten as
\begin{align*}
S_t=E\biggl[\int_t^\infty\Bigl(\indi{Y_s<y^2_F}\pi^{L1}_s+\indi{Y_s>y^2_F}\pi^{B1}_s\Bigr)\,ds-\int_t^\infty\indi{Y_s=y^2_F}\,dA_s\biggv\F_t\biggr].
\end{align*}
In this hypothesized stopping region for $(L^1_t)$, also $S_t=L^1_t$, in particular for $Y_t\geq y^2_F\geq y^1_L$,
\begin{align*}
S_t=E\biggl[\int_t^\infty\pi^{B1}_s\,ds\biggv\F_t\biggr].
\end{align*}
With $y^2_F$ in the stopping region, $-\indi{Y_s=y^2_F}\,dA_s\geq 0$, and by assumption $\pi^{L1}_\cdot\geq\pi^{B1}_\cdot$. Further, $\indi{Y_s=y^2_F}$ is a $P\otimes dt$ nullset if $Y$ has a transition density, such that equating the two last expressions for $S_t$ implies indeed
\[
E\biggl[\int_t^\infty\indi{Y_s<y^2_F}\Bigl(\pi^{L1}_s-\pi^{B1}_s\Bigr)\,ds\biggv\F_t\biggr]=0
\]
(and $E\bigl[-\int_t^\infty\indi{Y_s=y^2_F}\,dA_s\bigv\F_t\bigr]=0$). This contradicts the typical strict ordering $\pi^{L1}_\cdot>\pi^{B1}_\cdot$.
\end{remark}

\begin{lemma}\label{lem:thresconstr}
Let $(x_t)$ be a geometric Brownian motion on $\bigl(\Omega,\F,P\bigr)$, satisfying
\begin{equation*}
dx_t=\mu x_t\,dt+\sigma x_t\,dB_t
\end{equation*}
for a Brownian motion $(B_t)$ adapted to $\mathbb{F}$. Further let $\tau_{\tilde x}:=\inf\{t\geq 0\mid x_t\geq \tilde x\}$ for any given constant $\tilde x\in\R_+$. Then the problem
\begin{equation}\label{constrGBM}
\sup_{\tau\in\T,\,\tau\leq\tau_{\tilde x}}E\biggl[\int_\tau^\infty e^{-rt}(Dx_t-rI)\,dt\biggr]
\end{equation}
with $r>\max\{\mu,0\}$, $D\in\R$ and $I>0$ is solved by $\tau^*:=\inf\{t\geq 0\mid x_t\geq\tilde x\wedge x^*\}$, where
\[
x^*=\frac{\beta_1}{\beta_1-1}\cdot\frac{I(r-\mu)}{D^+}
\]
and $\beta_1>1$ is the positive root of $\frac12\sigma^2\beta(\beta-1)+\mu\beta-r=0$.
\end{lemma}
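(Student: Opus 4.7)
My plan is to reduce the constrained stopping problem to a standard perpetual option problem with an upper barrier and then verify the candidate solution by constructing an explicit value function. First, by the strong Markov property applied at $\tau$ and the fact that $r>\mu$ ensures integrability, the objective can be rewritten as
\[
E\bigl[e^{-r\tau}g(x_\tau)\bigr]\qquad\text{with}\qquad g(x)=\tfrac{Dx}{r-\mu}-I,
\]
since $E_x\bigl[\int_0^\infty e^{-rs}(Dx_s-rI)\,ds\bigr]=g(x)$. The case $D\leq 0$ is trivial: the integrand in the original objective is then $P\otimes dt$-a.e.\ nonpositive, hence for any feasible $\tau\leq\tau_{\tilde x}$ one has $E\bigl[\int_\tau^\infty\cdot\,dt\bigr]\leq E\bigl[\int_{\tau_{\tilde x}}^\infty\cdot\,dt\bigr]$, and the claim follows as $x^*=\infty$ yields $\tau^*=\tau_{\tilde x}$.

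For the main case $D>0$, set $\hat x:=\tilde x\wedge x^*$ and define the candidate value function
\[
V(x)=\begin{cases} A\,x^{\beta_1}, & x<\hat x,\\ g(x), & x\geq\hat x, \end{cases}\qquad A:=\frac{g(\hat x)}{\hat x^{\beta_1}}.
\]
When $\hat x=x^*$ this coincides with the classical unconstrained solution (and $A$ also satisfies the smooth-pasting identity $A\beta_1\hat x^{\beta_1-1}=D/(r-\mu)$); when $\hat x=\tilde x<x^*$ only value matching holds, leaving a convex kink at $\hat x$. I then verify the two standard conditions. For $V\geq g$ on $[0,\hat x]$, the function $h(x):=Ax^{\beta_1}-g(x)$ is convex with $h(0)=I>0$ and $h(\hat x)=0$, and a short computation gives $h'(\hat x)=(\beta_1-1)\tfrac{D}{r-\mu}-\beta_1\tfrac{I}{\hat x}$, which is $\leq 0$ precisely when $\hat x\leq x^*$; convexity then forces $h$ to be non-increasing on $[0,\hat x]$, so $h\geq 0$ there.

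To show $(e^{-rt}V(x_t))_{t\leq\tau_{\tilde x}}$ is a supermartingale, I apply Itô (or the Meyer-It\^o formula at the kink). On $\{x<\hat x\}$ the ODE $\tfrac12\sigma^2 x^2V''+\mu xV'-rV=0$ holds by the defining quadratic for $\beta_1$, making $e^{-rt}V(x_t)$ a local martingale. On $\{\hat x\leq x\leq\tilde x\}$ (only non-empty when $\hat x=x^*<\tilde x$) one computes $\mathcal{L}g-rg=-(Dx-rI)$, and the characteristic equation yields $\mu\beta_1<r$, hence $x^*\geq rI/D$ and the drift is nonpositive. The convex kink at $\hat x=\tilde x$ contributes a nonnegative local time term with the \emph{correct} sign (again using $h'(\hat x)\leq 0$), so supermartingality is preserved; boundedness of $V$ on $[0,\tilde x]$ upgrades the local martingale to a true one. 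Optional sampling then yields $V(x_0)\geq E\bigl[e^{-r\tau}V(x_\tau)\bigr]\geq E\bigl[e^{-r\tau}g(x_\tau)\bigr]$ for every feasible $\tau$, with equality at $\tau^*=\tau_{\hat x}$ because $x_{\tau^*}=\hat x$ on $\{\tau^*<\infty\}$ by path continuity and $V(\hat x)=g(\hat x)$. The only delicate step is the second one--the potential kink at $\hat x=\tilde x$ in the non-smooth-pasting regime--but since the kink is convex and GBM spends zero Lebesgue time at any point, Meyer-It\^o gives a non-negative local-time contribution entering the supermartingale inequality in the right direction.
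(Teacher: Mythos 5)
Your argument takes a genuinely different route from the paper's: for $D>0$ the paper simply cites Proposition~4.6 of Steg--Thijssen (2015), while you give a self-contained verification by constructing the candidate value function and checking the two standard properties; the $D\leq 0$ case is handled identically in both. The structure of your verification is sound, but there is one concrete gap: you assert that $h(x)=Ax^{\beta_1}-g(x)$ is \emph{convex}, which requires $A\geq 0$, i.e.\ $g(\hat x)\geq 0$, i.e.\ $\hat x\geq I(r-\mu)/D$. The lemma permits arbitrary $\tilde x\in\R_+$, and since $I(r-\mu)/D<x^*$ one can have $\hat x=\tilde x<I(r-\mu)/D$, hence $A<0$, in which case $h$ is strictly concave and the deduction ``convexity plus $h'(\hat x)\leq 0$ forces $h$ nonincreasing'' fails. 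The conclusion $h\geq 0$ on $[0,\hat x]$ is still true in that regime, but for a different reason: when $A<0$, $h'(x)=A\beta_1 x^{\beta_1-1}-D/(r-\mu)<0$ for all $x>0$, so $h$ decreases from $h(0)=I>0$ to $h(\hat x)=0$ directly. You should split into the two cases (or argue via the chord, which bounds a concave $h$ from below). As a secondary remark, the Meyer--It\^o / local-time paragraph is unnecessary: when $\hat x=\tilde x$ the only kink sits exactly at the stopping boundary $\tilde x$, and a GBM started strictly below $\tilde x$ accrues zero local time at $\tilde x$ up to $\tau_{\tilde x}$, so the stopped process never sees the kink; when $\hat x=x^*<\tilde x$ smooth pasting removes the kink. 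Ordinary It\^o on $[0,\tilde x)$ together with value matching at $\tilde x$ is all that is needed.
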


\begin{proof}
If $D\leq 0$, then the integrand in \eqref{constrGBM} is always negative and the latest feasible stopping time is optimal, which indeed satisfies $\tau_{\tilde x}=\tau^*$ as now $x^*=\infty$. For $D>0$, Lemma \ref{lem:thresconstr} is a special case of Proposition 4.6 in \cite{StegThijssen15}, setting their $Y_0=Dx_0$, $\mu_Y=\mu$, $\sigma_Y=\sigma$, $X_0=c_0=c_B=0$ and $y_\cP=(r-\mu_Y)(I-c_A/r)=\tilde x$.
\end{proof}

\section{Proofs}\label{app:proof}

\begin{proof}[{\bf Proof of Lemma \ref{lem:tau^1_F<tau^2_F}}]
The stopping problem in \eqref{F} is~-- up to a constant~-- equivalent to $\essinf_{\tau'\geq\tau}E\bigl[\int_\tau^{\tau'}(\pi^{Bi}_s-\pi^{Fi}_s)\,ds\bigv\F_\tau\bigr]$. Optimality of $\tau_F^i(\tau)$ and iterated expectations thus imply $E\bigl[\int_{\tau'}^{\tau_F^i(\tau)}(\pi^{Bi}_s-\pi^{Fi}_s)\,ds\bigv\F_{\tau'}\bigr]\leq 0$ for all $\tau'\in[\tau,\tau_F^i(\tau)]$ and $E\bigl[\int_{\tau_F^i(\tau)}^{\tau'}(\pi^{Bi}_s-\pi^{Fi}_s)\,ds\bigv\F_{\tau_F^i(\tau)}\bigr]\geq 0$ for all $\tau'\geq\tau_F^i(\tau)$, strictly on $\{\tau'>\tau_F^i(\tau)\}$ as $\tau^i_F(\tau)$ is the latest time attaining \eqref{F}. Thus, with $\tau'=\min\{\tau_F^1(\tau),\tau_F^2(\tau)\}$ and $\pi^{B2}_\cdot-\pi^{F2}_\cdot\leq\pi^{B1}_\cdot-\pi^{F1}_\cdot$ we have 
\begin{align*}
0\leq E\biggl[\int_{\tau'}^{\tau_F^1(\tau)}(\pi^{B2}_s-\pi^{F2}_s)\,ds\biggv\F_{\tau'}\biggr]\leq E\biggl[\int_{\tau'}^{\tau_F^1(\tau)}(\pi^{B1}_s-\pi^{F1}_s)\,ds\biggv\F_{\tau'}\biggr]\leq 0.
\end{align*}
The first inequality is strict on $\{\tau_F^2(\tau)<\tau_F^1(\tau)\}$ (up to a $P$-nullset), so $\tau^1_F(\tau)\leq\tau^2_F(\tau)$ ($P$-a.s.).

Finally, $F^i_{\tau}-M^i_{\tau}=\esssup_{\tau'\geq\tau}E[\int_\tau^{\tau'}(\pi^{Fi}_s-\pi^{Bi}_s)\,ds\mid\F_\tau]$ is not greater for $i=1$ than for $i=2$.
\end{proof}

\begin{proof}[{\bf Proof of Lemma \ref{lem:L-F}}]
We have
\begin{flalign}
&& L^2_\tau-F^2_\tau&=E\biggl[\int_{\tau}^{\tau^1_F(\tau)}(\pi^{L2}_s-\pi^{F2}_s)\,ds+\int_{\tau^1_F(\tau)}^{\tau^2_F(\tau)}(\pi^{B2}_s-\pi^{F2}_s)\,ds\biggv\F_\tau\biggr] && \label{L^2-F^2}\\
&\text{and} \nonumber\\
&& L^1_\tau-F^1_\tau&=E\biggl[\int_{\tau}^{\tau^1_F(\tau)}(\pi^{L1}_s-\pi^{F1}_s)\,ds+\int_{\tau^1_F(\tau)}^{\tau^2_F(\tau)}(\pi^{L1}_s-\pi^{B1}_s)\,ds\biggv\F_\tau\biggr], && \nonumber
\end{flalign}
where $\tau^1_F(\tau)\leq\tau^2_F(\tau)$ by Lemma \ref{lem:tau^1_F<tau^2_F}. By the optimality of $\tau^2_F(\tau)$ for stopping the stream $(\pi^{B2}_s-\pi^{F2}_s)$, the second integral on the RHS of \eqref{L^2-F^2} has non-positive conditional expectation, cf.\ the proof of Lemma \ref{lem:tau^1_F<tau^2_F}. The claim now follows from the assumptions $\pi^{L1}_\cdot-\pi^{F1}_\cdot\geq\pi^{L2}_\cdot-\pi^{F2}_\cdot$ and $\pi^{L1}_\cdot\geq\pi^{B1}_\cdot$.
\end{proof}

\begin{proof}[{\bf Proof of Lemma \ref{lem:L<Fwait}}]
We only use the assumptions $\pi^{Li}_\cdot\geq\pi^{Bi}_\cdot$ and $\pi^{0i}_\cdot\geq\pi^{Fi}_\cdot$ (except for the representation with $\tau_\cP(\vartheta)$). Let $\tau^i_{1\text{st}}(\vartheta)=\inf\{t\geq\vartheta\mid L^i_t>F^i_t\}$ ($=\tau_\cP(\vartheta)$ for $i=2$), such that $M^i_\cdot\leq L^i_\cdot\leq F^i_\cdot$ on $[\vartheta,\tau^i_{1\text{st}}(\vartheta))$, so investing is nowhere better than becoming follower, but indeed inferior if the last inequality is strict. Next, by the optimality of $\tau^i_F(\vartheta)$ in $F^i_\vartheta$ and $\pi^{0i}_\cdot\geq\pi^{Fi}_\cdot$, $F^i_\cdot$ is nondecreasing in expectation on $[\vartheta,\tau^i_F(\vartheta)]$, so firm $i$ prefers to become follower as late as possible on that interval. Finally, $L^i_\tau\geq F^i_\tau$ at $\tau=\min\bigl\{\tau^i_{1\text{st}}(\vartheta),\tau^i_F(\vartheta)\bigr\}$~-- at $\tau^i_{1\text{st}}(\vartheta)$ due to right-continuity of $L^i_\cdot-F^i_\cdot$ and at $\tau^i_F(\vartheta)$ due to $\pi^{Li}_\cdot\geq\pi^{Bi}_\cdot$. Thus, in case the opponent does not invest before $\tau=\min\bigl\{\tau^i_{1\text{st}}(\vartheta),\tau^i_F(\vartheta)\bigr\}$ (with some probability), firm $i$ can reach at least its follower value there by the limit from planning to invest at $\tau+1/n$ and $n\to\infty$ (in the limit, firm $i$ obtains $F^i_\tau$ with the probability that the opponent invests at $\tau$ and $L^i_\tau$ else as $L^i_\cdot$ is right-continuous).
\end{proof}

\begin{proof}[{\bf Proof of Lemma \ref{lem:Lmaxstop}}]
Where $L^i_\vartheta>E[L^i_\tau\mid\F_\vartheta]$ for all stopping times $\tau>\vartheta$, there we must also have $L^i_\vartheta\geq E\bigl[F^i_\tau\bigv\F_\vartheta\bigr]$ for any $\tau\geq\vartheta$, strictly on $\{\tau>\vartheta\}$, as follows. First note that $F^i_\tau-E\bigl[F^i_{\tau_F^i(\tau)}\bigv\F_\tau\bigr]=E\bigl[\int_\tau^{\tau_F^i(\tau)}(\pi^{Fi}_s-\pi^{0i}_s)\bigv\F_\tau\bigr]\leq 0$ because $\tau_F^i(\tau_F^i(\tau))=\tau_F^i(\tau)$. Furthermore note that $L^i_{\tau_F^i(\tau)}\geq F^i_{\tau_F^i(\tau)}$ by $\pi^{Li}_\cdot\geq\pi^{Bi}_\cdot$. Together with the hypothesis it must thus hold that $L^i_\vartheta>E\bigl[F^i_\tau\bigv\F_\vartheta\bigr]\geq E\bigl[M^i_\tau\bigv\F_\vartheta\bigr]$ on $\{\tau>\vartheta\}$ for any $\tau\in\T$, and  $L^i_\vartheta\geq F^i_\vartheta\geq M^i_\vartheta$ using $\tau=\vartheta$. 

Then, in case that the opponent's plan does not imply immediate investment with probability $1$ (else there is nothing to prove), firm $i$ cannot achieve a higher payoff than $L^i_\vartheta$ with the probability that firm $j$ does not invest immediately and $F^i_\vartheta$ with the probability that firm $j$ invests immediately. 
Thanks to right-continuity of $L^i_\cdot$, that upper bound is the limit of firm $i$ planning to invest at $\vartheta+1/n$ and $n\to\infty$, but it is not attainable by any plan that does not induce immediate investment with probability $1$.

For the second claim suppose by way of contradiction that $\tau=\vartheta$ attains \eqref{maxL}, but that there exists a stopping time $\tau'\geq\vartheta$ such that $E\big[\int_{\vartheta}^{\tau'}(\pi^{Li}_s-\pi^{0i}_s)\,ds\bigv\F_{\vartheta}\bigr]<0$ with positive probability. On that event,
\begin{align*}
L^i_\vartheta&=\int_0^\vartheta\pi^{0i}_s\,ds+E\biggl[\int_\vartheta^{\tau^j_F(\vartheta)}\pi^{Li}_s\,ds+\int_{\tau^j_F(\vartheta)}^\infty\pi^{Bi}_s\,ds\biggv\F_\vartheta\biggr]\\
&<\int_0^\vartheta\pi^{0i}_s\,ds+E\biggl[\int_\vartheta^{\tau'}\pi^{0i}_s\,ds+\int_{\tau'}^{\tau^j_F(\vartheta)}\pi^{Li}_s\,ds+\int_{\tau^j_F(\vartheta)}^\infty\pi^{Bi}_s\,ds\biggv\F_\vartheta\biggr]\leq E\Bigl[L^i_{\tau'}\Bigv\F_\vartheta\Bigr]
\end{align*}
as $\tau^j_F(\tau')\geq\tau^j_F(\vartheta)$ and $\pi^{Li}_\cdot\geq\pi^{Bi}_\cdot$, which contradicts the optimality of $\tau=\vartheta$ in \eqref{maxL}.
\end{proof}

\begin{remark}
The $\F$-events on which $\tau>\vartheta\Rightarrow L^i_\vartheta>E[L^i_\tau\mid\F_\vartheta]$ a.s.\ for all stopping times $\tau\geq\vartheta$ can be aggregated into an $\F_\vartheta$-event as follows: With $A(\tau):=\{\tau>\vartheta\}\in\F_\vartheta$ and $B(\tau):=\{L^i_\vartheta>E[L^i_\tau\mid\F_\vartheta]\}\in\F_\vartheta$ for any stopping time $\tau\geq\vartheta$, the given property can be written as $\indi{B(\tau)}-\indi{A(\tau)}=0$ a.s.\ for all $\tau\geq\vartheta$ (as $B(\tau)\subset A(\tau)$). The latter holds for any $\F$-event if and only if it is a subset of $C_0:=\{\essinf_{\tau\geq\vartheta}(\indi{B(\tau)}-\indi{A(\tau)})=0\}$ (up to a nullset). As all $\indi{B(\tau)}-\indi{A(\tau)}$ are $\F_\vartheta$-measurable random variables, so is $\essinf_{\tau\geq\vartheta}(\indi{B(\tau)}-\indi{A(\tau)})$. Indeed, as $\indi{B(\tau)}-\indi{A(\tau)}\geq\essinf_{\tau\geq\vartheta}(\cdot)$, also $\indi{B(\tau)}-\indi{A(\tau)}\geq E[\essinf_{\tau\geq\vartheta}(\cdot)\mid\F_\vartheta]$ a.s.\ for all $\tau\geq\vartheta$ and thus $\essinf_{\tau\geq\vartheta}(\cdot)\geq E[\essinf_{\tau\geq\vartheta}(\cdot)\mid\F_\vartheta]$ a.s.\ by the definition of $\essinf(\cdot)$. However, as the left and right-hand sides have the same expectation, equality holds a.s.

Further, there exists a sequence of mutually disjoint sets $(C_n)$ and a sequence of stopping times $(\tau_n)$ such that $\bigcup C_n=\Omega\setminus C_0$ (up to a nullset), $\inf\tau_n\geq\vartheta$ and, on each $C_n$, $\tau_n>\vartheta$ and $L^i_\vartheta=E[L^i_{\tau_n}\mid\F_\vartheta]$ a.s. This follows from the fact that the family $\{\indi{B(\tau)}-\indi{A(\tau)}\mid\tau\geq\vartheta\}$ is directed downwards, as by all $\indi{B(\tau)}-\indi{A(\tau)}$ being $\{-1,0\}$-valued, for any $\tau_1,\tau_2\geq\vartheta$ also $\tau_3:=\tau_1+(\indi{A(\tau_2)}-\indi{B(\tau_2)})(\tau_2-\tau_1)\geq\vartheta$ is a stopping time that satisfies $\indi{A(\tau_3)}-\indi{B(\tau_3)}=\min(\indi{A(\tau_1)}-\indi{B(\tau_1)},\indi{A(\tau_2)}-\indi{B(\tau_2)})$. Thus there exists a sequence $(\tau_n)\subset\T$ with $\inf\tau_n\geq\vartheta$ and $\indi{B(\tau_n)}-\indi{A(\tau_n)}\searrow\essinf_{\tau\geq\vartheta}(\indi{B(\tau)}-\indi{A(\tau)})$ a.s., so $P[\{\indi{B(\tau_n)}=\indi{A(\tau_n)}\}\setminus C_0]\searrow 0$. Now one can recursively set $C_n=A(\tau_n)\setminus(B(\tau_n)\cup C_{n-1})$.
\end{remark}

\begin{proof}[{\bf Proof of Lemma \ref{lem:stoptau^1_F}}]
First note that there exists an optimal stopping time for \eqref{mono1} (and also a latest one), because the process to be stopped is continuous and integrable. For any stopping time $\tau\in[\vartheta,\tau^2_F(\vartheta)]$, $\tau^2_F(\tau)=\tau^2_F(\vartheta)$ and thus $L^1_\vartheta-E\bigl[L^1_\tau\bigv\F_\vartheta\bigr]=E\bigl[\int_\vartheta^\tau(\pi^{L1}_s-\pi^{01}_s)\,ds\bigv\F_\vartheta\bigr]$ is the same payoff difference as that between $\vartheta$ and $\tau$ in \eqref{mono1}. Thus, where $\vartheta$ is uniquely optimal in \eqref{mono1}, there also $L^1_\vartheta>E\bigl[L^1_\tau\bigv\F_\vartheta\bigr]$ on $\{\tau>\vartheta\}$. Regarding the other possible payoffs, as argued in the proof of Lemma \ref{lem:Lmaxstop}, $M^1_\tau\leq F^1_\tau\leq E\bigl[F^1_{\tau_F^1(\tau)}\bigv\F_\tau\bigr]\leq E\bigl[L^1_{\tau_F^1(\tau)}\bigv\F_\tau\bigr]$, where now $\tau_F^1(\tau)\leq\tau_F^2(\tau)=\tau_F^2(\vartheta)$ for $\tau\in[\vartheta,\tau^2_F(\vartheta)]$. Hence $L^1_\vartheta$ is strictly superior to any future payoff on $(\vartheta,\tau^2_F(\vartheta)]$ and the game has to end by the same arguments as in the proof of Lemma \ref{lem:Lmaxstop}.
\end{proof}

\begin{proof}[{\bf Proof of Lemma \ref{lem:Fmaxstop}}]
First note that there exists an optimal stopping time $\tau_M^i\geq\vartheta$ for \eqref{sim_i} and also a latest one, because the process to be stopped is continuous and integrable. An optimal $\tau_M^i$ satisfies the necessary and sufficient conditions $E\bigl[\int_\tau^{\tau_M^i}(\pi^{0i}_s-\pi^{Bi}_s)\,ds\bigv\F_\tau\bigr]\geq 0$ on $\{\tau\leq\tau_M^i\}$ and $E\bigl[\int_{\tau_M^i}^\tau(\pi^{0i}_s-\pi^{Bi}_s)\,ds\bigv\F_{\tau_M^i}\bigr]\leq 0$ on $\{\tau\geq\tau_M^i\}$ for all stopping times $\tau\geq\vartheta$, the last inequality being strict on $\{\tau>\tau_M^i\}$ if $\tau_M^i$ is the latest solution. We will derive the analogous properties for the process $(F^i_t)$; thus consider an arbitrary stopping time $\tau\geq\vartheta$.

For the first property, note that on $\{\tau\leq\tau_M^i\}$ we have 
\[
E\bigl[F^i_{\tau_M^i\wedge\tau_F^i(\tau)}\bigv\F_\tau\bigr]-F^i_\tau=E\biggl[\int_\tau^{\tau_M^i\wedge\tau_F^i(\tau)}(\pi^{0i}_s-\pi^{Fi}_s)\,ds\biggv\F_\tau\biggr]\geq 0
\] 
by $\pi^{0i}_\cdot\geq\pi^{Fi}_\cdot$ and $\tau^i_F(\tau_M^i\wedge\tau_F^i(\tau))=\tau^i_F(\tau)$. Further, on the subset $\{\tau_M^i>\tau_F^i(\tau)\}$ we have
\[
E\bigl[F^i_{\tau_M^i}\bigv\F_{\tau_F^i(\tau)}\bigr]-F^i_{\tau_F^i(\tau)}=E\biggl[\int_{\tau_F^i(\tau)}^{\tau_M^i}(\pi^{0i}_s-\pi^{Bi}_s)\,ds+\int_{\tau_M^i}^{\tau_F^i(\tau_M^i)}(\pi^{Fi}_s-\pi^{Bi}_s)\,ds\biggv\F_{\tau_F^i(\tau)}\biggr]\geq 0
\]
by the optimality of $\tau_M^i$ and the definition of $\tau^i_F(\tau_M^i)$, cf.\ the proof of Lemma \ref{lem:tau^1_F<tau^2_F}. Together, $E\bigl[F^i_{\tau_M^i}\bigv\F_\tau\bigr]-F^i_\tau=E\bigl[F^i_{\tau_M^i}-F^i_{\tau_M^i\wedge\tau_F^i(\tau)}\bigv\F_\tau\bigr]+E\bigl[F^i_{\tau_M^i\wedge\tau_F^i(\tau)}\bigv\F_\tau\bigr]-F^i_\tau\geq 0$.

For the second property, note that $E\bigl[F^i_{\tau_F^i(\tau)}\bigv\F_\tau\bigr]-F^i_\tau=E\bigl[\int_\tau^{\tau_F^i(\tau)}(\pi^{0i}_s-\pi^{Fi}_s)\,ds\bigv\F_\tau\bigr]\geq 0$, again by $\pi^{0i}_\cdot\geq\pi^{Fi}_\cdot$ and $\tau^i_F(\tau_F^i(\tau))=\tau^i_F(\tau)$, hence it is sufficient to show $E\bigl[F^i_{\tau_F^i(\tau)}\bigv\F_{\tau_M^i}\bigr]\leq F^i_{\tau_M^i}$ on $\{\tau\geq\tau_M^i\}$. There, where $\tau_F^i(\tau)\geq\tau_F^i(\tau_M^i)$, it holds that
\begin{align*}
E\bigl[F^i_{\tau_F^i(\tau)}\bigv\F_{\tau_M^i}\bigr]-F^i_{\tau_M^i}&=E\biggl[\int_{\tau_M^i}^{\tau_F^i(\tau_M^i)}(\pi^{0i}_s-\pi^{Fi}_s)\,ds+\int_{\tau_F^i(\tau_M^i)}^{\tau_F^i(\tau)}(\pi^{0i}_s-\pi^{Bi}_s)\,ds\biggv\F_{\tau_M^i}\biggr]\\
&\leq E\biggl[\int_{\tau_M^i}^{\tau_F^i(\tau_M^i)}(\pi^{0i}_s-\pi^{Bi}_s)\,ds+\int_{\tau_F^i(\tau_M^i)}^{\tau_F^i(\tau)}(\pi^{0i}_s-\pi^{Bi}_s)\,ds\biggv\F_{\tau_M^i}\biggr]\leq 0,
\end{align*}
where we have used the definition of $\tau^i_F(\tau_M^i)$ in the first estimate, and the optimality of $\tau_M^i$ in the last. The last inequality is strict on $\{\tau>\tau_M^i\}$ if $\tau_M^i$ is the latest solution of \eqref{sim_i}.

Now suppose that the stopping time $\tau_M^i\geq\vartheta$ optimally stops $(F^i_t)$ from $\vartheta\in\T$, i.e.\ it satisfies $E\bigl[F^i_{\tau_M^i}\bigv\F_\tau\bigr]\geq F^i_\tau$ on $\{\tau\leq\tau_M^i\}$ and $E\bigl[F^i_\tau\bigv\F_{\tau_M^i}\bigr]\leq F^i_{\tau_M^i}$ on $\{\tau\geq\tau_M^i\}$ for all stopping times $\tau\geq\vartheta$. As $E\bigl[F^i_{\tau_F^i(\tau_M^i)}\bigv\F_{\tau_M^i}\bigr]\geq F^i_{\tau_M^i}$ as noted above, we must then have equality, i.e.\ $\tau_F^i(\tau_M^i)$ is optimal, too, and we may set $\tau_M^i=\tau_F^i(\tau_M^i)$ for simplicity to show optimality of $\tau_F^i(\tau_M^i)$ in \eqref{sim_i}. Therefore, consider again an arbitrary stopping time $\tau\geq\vartheta$.

On $\{\tau\leq\tau_M^i\}$, where $\tau_F^i(\tau)\leq\tau_F^i(\tau_M^i)=\tau_M^i$, it then holds that
\begin{align*}
0\leq E\bigl[F^i_{\tau_M^i}\bigv\F_{\tau}\bigr]-F^i_{\tau}&=E\biggl[\int_{\tau}^{\tau_F^i(\tau)}(\pi^{0i}_s-\pi^{Fi}_s)\,ds+\int_{\tau_F^i(\tau)}^{\tau_M^i}(\pi^{0i}_s-\pi^{Bi}_s)\,ds\biggv\F_{\tau}\biggr]\\
&\leq E\biggl[\int_{\tau}^{\tau_F^i(\tau)}(\pi^{0i}_s-\pi^{Bi}_s)\,ds+\int_{\tau_F^i(\tau)}^{\tau_M^i}(\pi^{0i}_s-\pi^{Bi}_s)\,ds\biggv\F_{\tau}\biggr]
\end{align*}
by the definition of $\tau_F^i(\tau)$, which yields the first optimality property for $\tau_M^i$ in \eqref{sim_i}.

On $\{\tau\geq\tau_M^i\}$, where $\tau_F^i(\tau)\geq\tau_M^i$, we have
\begin{align*}
0\geq E\bigl[F^i_{\tau}\bigv\F_{\tau_M^i}\bigr]-F^i_{\tau_M^i}&=E\biggl[\int_{\tau_M^i}^{\tau}(\pi^{0i}_s-\pi^{Bi}_s)\,ds+\int_{\tau}^{\tau_F^i(\tau)}(\pi^{Fi}_s-\pi^{Bi}_s)\,ds\biggv\F_{\tau_M^i}\biggr]\\
&\geq E\biggl[\int_{\tau_M^i}^{\tau}(\pi^{0i}_s-\pi^{Bi}_s)\,ds\biggv\F_{\tau_M^i}\biggr]
\end{align*}
again by the definition of $\tau_F^i(\tau)$, yielding the second optimality property for $\tau_M^i$ in \eqref{sim_i}.
\end{proof}

\begin{proof}[{\bf Proof of Proposition \ref{prop:PKpreemp}}]
By the strong Markov property it suffices to consider $t=0$. If the preemption region is empty, one can set $\ubar x=\bar x$ and pick any number in $(0,x_F^1]$. The upper and lower bounds for a non-empty preemption region are obtained as follows. First note that $L^2_0=M^2_0\leq F^2_0$ for all $x_0\geq x_F^1$. Second, for all $x_0>0$, $L^2_0\leq E\bigl[\int_0^\infty e^{-rs}\bigl(x_sD_{10}-rI^2\bigr)\,ds\bigr]=x_0D_{10}/({r-\mu})-I^2$ by $D_{10}\geq D_{11}$ and $F^2_0\geq E\bigl[\int_0^\infty e^{-rs}x_sD_{01}\,ds\bigr]=x_0D_{01}/({r-\mu})$, the value of never investing as follower. Thus, $L^2_0-F^2_0\leq x_0(D_{10}-D_{01})/({r-\mu})-I^2\leq 0$ on the non-empty interval $(0,({r-\mu})I^2/(D_{10}-D_{01})^+)$.

Now suppose $L^2_0>F^2_0$ for some $x_0=\hat x\in(0,x_F^1)$ and also for some $x_0=\check x<\hat x$, and assume by way of contradiction that $L^2_0\leq F^2_0$ for $x_0=x'\in(\check x,\hat x)$. Then we must have $x'>rI^2/(D_{10}-D_{01})^+$, because otherwise $L^2_0-F^2_0=E\bigl[\int_0^{\tau'}e^{-rs}\bigl(x_s(D_{10}-D_{01})-rI^2\bigr)\,ds\bigr]+E\bigl[L^2_{\tau'}-F^2_{\tau'}\bigr]\leq 0$ if $x_0=\check x$ and $x'\in(\check x,rI^2/(D_{10}-D_{01})^+\wedge x_F^1]$, where $\tau':=\inf\{s\geq 0\mid x_s\geq x'\}\leq\tau_F^1(0)$. By the same argument, we must also have $L^2_0>F^2_0$ for $x_0=\check x\vee rI^2/(D_{10}-D_{01})<x'$. But then, if we set $x_0=x'$ and $\hat\tau:=\inf\{s\geq 0\mid x_s\not\in(\check x\vee rI^2/(D_{10}-D_{01}),\hat x)\}\leq\tau_F^1(0)$, we obtain $L^2_0-F^2_0=E\bigl[\int_0^{\hat\tau}e^{-rs}\bigl(x_s(D_{10}-D_{01})-rI^2\bigr)\,ds\bigr]+E\bigl[L^2_{\hat\tau}-F^2_{\hat\tau}\bigr]>0$, whence the set $\{x>0\mid L^2_0>F^2_0\text{ given }x_0=x\}$ is convex. Further, that set is open as $L^2_0-F^2_0$ is continuous in $x_0$.

Suppose finally that $I^2=I^1$ and that the preemption region is non-empty, i.e., by Lemma \ref{lem:PreemMono} and the discussion thereafter, that the threshold solving \eqref{maxL2-F2} satisfies  $x_\Delta^2<x_F^1=x_F^2$. Then, for any $x_0\in[x_\Delta^2,x_F^2)$, $L^2_0-F^2_0=E\bigl[\int_0^{\tau_F^2(0)}\bigl(x_s(D_{10}-D_{01})-rI^2\bigr)\,ds\bigr]>0$ as $x_\Delta^2$ solves \eqref{maxL2-F2} uniquely.
\end{proof}

\begin{proof}[{\bf Proof of Proposition \ref{prop:lowconstr}}]
$\bar x<x_F^2$ can be any two numbers from $(0,\infty]$ in this proof, i.e., we only assume $\bar x$ finite. For initial states $x_0\in(\bar x,x_F^2)$, the constraint $\tau_\cP(0)\wedge\tau_F^2(0)$ in problem \eqref{MonoPreemPK} is the exit time from the given interval and \eqref{MonoPreemPK} is equivalent to
\begin{equation}\label{MarkovConstr}
\sup_{\tau\leq\inf\{s\geq 0\mid x_s\not\in(\bar x,x_F^2)\}}E\biggl[\int_\tau^\infty e^{-rs}\bigl(x_s(D_{10}-D_{00})-rI^1\bigr)\,ds\biggr].
\end{equation}
If $\bar x(D_{10}-D_{00})\geq rI^1$, the expected payoff difference between stopping at time $0$ and any feasible $\tau\geq 0$ is $E\bigl[\int_0^{\tau}e^{-rs}(x_s(D_{10}-D_{00})-rI^1)\,ds\bigr]\geq 0$, such that immediate stopping is optimal. If $D_{10}-D_{00}\leq 0$, also $E\bigl[\int_\tau^{\tau_\cP(0)\wedge\tau_F^2(0)}e^{-rs}(x_s(D_{10}-D_{00})-rI^1)\,ds\bigr]\leq 0$ for any $\tau\leq\tau_\cP(0)\wedge\tau_F^2(0)$, such that waiting until the constraint is optimal.


Now suppose $0<\bar x(D_{10}-D_{00})<rI^1$, whence $D_{10}>D_{00}$ and $x_L^1<\infty$. Note that
\[
E\biggl[\int_0^\infty e^{-rs}\bigl(x_s(D_{10}-D_{00})-rI^1\bigr)\,ds\biggr]=x_0\frac{D_{10}-D_{00}}{r-\mu}-I^1
\]
is the value of stopping immediately in \eqref{MarkovConstr}. Letting $x_0=x$, we will first verify that the value function of problem \eqref{MarkovConstr} is
\begin{equation}\label{valfct}
V(x):=\begin{cases} A(\hat x)x^{\beta_1}+B(\hat x)x^{\beta_2} & \text{ if }x\in(\bar x,\hat x), \\ x\frac{D_{10}-D_{00}}{r-\mu}-I^1 & \text{ else, } \end{cases}
\end{equation}
and thus $(\bar x,\hat x)^c$ the sought stopping region, under the hypothesis that either $\hat x\in[rI^1/(D_{10}-D_{00}),x_F^2)$ solves \eqref{hatx} or ``$\leq$'' holds for $\hat x=x_F^2$. Afterwards we will establish existence of a unique such $\hat x$. 

$V(x)$ as defined in \eqref{valfct} is continuous because $A(\hat x)$ and $B(\hat x)$ given by \eqref{AB} are the solution to the continuity conditions
\begin{align}\label{valmatch}
A\bar x^{\beta_1}+B\bar x^{\beta_2}&=\bar x\frac{D_{10}-D_{00}}{r-\mu}-I^1, \nonumber\\ 
A\hat x^{\beta_1}+B\hat x^{\beta_2}&=\hat x\frac{D_{10}-D_{00}}{r-\mu}-I^1.
\end{align}
$V(x)$ is also twice continuously differentiable on $(\bar x,x_F^2)$, except possibly at $\hat x$. At $\hat x<x_F^2$, the first derivative of $V$ is continuous, however, because \eqref{hatx} is the differentiability condition $\beta_1A\hat x^{\beta_1-1}+\beta_2B\hat x^{\beta_2-1}=(D_{10}-D_{00})/(r-\mu)$ multiplied by $\hat x$, minus the second continuity condition in \eqref{valmatch}. Therefore one can apply It\=o's lemma to see that $(e^{-rt}V(x_t))$ is a continuous, bounded supermartingale until $\tau=\inf\{t\geq 0\mid x_t\not\in(\bar x,x_F^2)\}$, with zero drift for $x_t\in(\bar x,\hat x)$ and drift $e^{-rt}(rI^1-x_t(D_{10}-D_{00}))\,dt<0$ for $x_t\in(\hat x,x_F^2)$. As that supermartingale coincides with the payoff process at $\tau=\inf\{t\geq 0\mid x_t\not\in(\bar x,x_F^2)\}$, it remains to show that $V(x)$ dominates the payoff process for $x\in(\bar x,x_F^2)$, which it does by construction for $x\in[\hat x,x_F^2]$.

For $x\in(\bar x,\hat x)$, $V''(x)=x^{\beta_2-2}\bigl[\beta_1(\beta_1-1)A(\hat x)x^{\beta_1-\beta_2}+\beta_2(\beta_2-1)B(\hat x)\bigr]$. As $\beta_k(\beta_k-1)>0$, $k=1,2$, the difference $V(x)-x(D_{10}-D_{00})/(r-\mu)+I^1$ would be convex if $A(\hat x),B(\hat x)\geq 0$, and it vanishes at both ends $\bar x,\hat x$. By \eqref{hatx}, that difference's derivative is non-positive at $\hat x$, where the difference would thus take its minimum. Hence it would vanish on all of $[\bar x,\hat x]$, but $V(x)$ cannot be affine on non-empty $(\bar x,\hat x)$. So we must have $A(\hat x)\wedge B(\hat x)<0$. If we had $B(\hat x)\geq 0$, then $A(\hat x)<0$ and $V(x)$ would be strictly decreasing on $(\bar x,\hat x)$, contradicting $V(\hat x)\geq V(\bar x)$; thus $B(\hat x)<0$. Going back to $V''(x)$, which can switch sign at most once, it must start strictly negative at $\bar x$. If it stays non-positive, the difference $V(x)-x(D_{10}-D_{00})/(r-\mu)+I^1$ is concave and thus non-negative on $(\bar x,\hat x)$. If $V''(x)$ eventually becomes positive, then the convex part of $V(x)-x(D_{10}-D_{00})/(r-\mu)+I^1$ takes its minimum $0$ at $\hat x$ as argued before, such that the difference is non-negative at the transition, and thus non-negative for the first, concave part. In summary, $(e^{-rt}V(x_t))$ is a supermartingale until $x_t$ leaves $(\bar x,x_F^2)$, dominating the payoff $e^{-rt}(x_t(D_{10}-D_{00})/(r-\mu)-I^1)$, which it coincides with for $x_t\in\{\bar x\}\cup[\hat x,x_F^2]$, so the latter is the stopping set in $[\bar x,x_F^2]$.

Next, we show that there is a unique threshold $\hat x\in[rI^1/(D_{10}-D_{00}),x_L^1)$ solving \eqref{hatx}, and then finally consider the constraint $x_F^2$. 

As the first step, note that $B(x)<0$ in \eqref{AB} for all $x\in(\bar x,x_L^1]$. Indeed, as the first term $\bigl[\bar x^{\beta_1}x^{\beta_2}-x^{\beta_1}\bar x^{\beta_2}\bigr]^{-1}<0$ for $x>\bar x$ by $\beta_1>1$ and $\beta_2<0$, we have $B(x)<0\Leftrightarrow x^{-\beta_1}\bigl[x(D_{10}-D_{00})/(r-\mu)-I^1\bigr]>\bar x^{-\beta_1}\bigl[\bar x(D_{10}-D_{00})/(r-\mu)-I^1\bigr]$. The derivative of the latter function of $x$ can be written as $x^{-\beta_1-1}\bigl[\beta_1I^1-(\beta_1-1)x(D_{10}-D_{00})/(r-\mu)\bigr]>0$ for all $x<x_L^1=\beta_1(r-\mu)I^1/((\beta_1-1)(D_{10}-D_{00}))$.

As the second step, note that with $A=A(x_L^1)$ and $B=B(x_L^1)$, we have $A\cdot(x_L^1)^{\beta_1}+B\cdot(x_L^1)^{\beta_2}=I^1/(\beta_1-1)$ by using the definition of $x_L^1$ in \eqref{valmatch}, and thus $(\beta_1-1)A\cdot(x_L^1)^{\beta_1}+(\beta_2-1)B\cdot(x_L^1)^{\beta_2}=I^1+(\beta_2-\beta_1)B\cdot(x_L^1)^{\beta_2}>I^1$ compared to ``$=$'' in \eqref{hatx}.

The third step is to show that ``$\leq$'' holds in \eqref{hatx} for the candidate $\hat x=rI^1/(D_{10}-D_{00})\in(\bar x,x_F^2)$, where the inclusion is exactly the current considered case. By similar arguments as above, using the continuity condition \eqref{valmatch}, $V(x)$ then satisfies
\[
V(x)=E\biggl[\int_{\hat\tau}^\infty e^{-rs}\bigl(x_s(D_{10}-D_{00})-rI^1\bigr)\,ds\biggr], \quad x_0=x\in[\bar x,\hat x],
\]
where we let $\hat\tau:=\inf\{s\geq 0\mid x_s\not\in(\bar x,\hat x)\}$. For $\hat x=rI^1/(D_{10}-D_{00})$, the integrand would be strictly negative until $\hat\tau$, so $V(x)>x(D_{10}-D_{00})/(r-\mu)-I^1$ for all $x\in(\bar x,\hat x)$. At $x=\hat x$, however, equality holds by \eqref{valmatch} and thus $V'(\hat x-)=\beta_1A(\hat x)\hat x^{\beta_1-1}+\beta_2B(\hat x)\hat x^{\beta_2-1}\leq (D_{10}-D_{00})/(r-\mu)$. Together with \eqref{valmatch}, the latter inequality implies also ``$\leq$'' in \eqref{hatx}. 

As the last step, as the function $(\beta_1-1)A(x)x^{\beta_1}+(\beta_2-1)B(x)x^{\beta_2}$ is continuous, it must attain $I^1$ at some $\hat x\in[rI^1/(D_{10}-D_{00}),x_L^1)$ by the second and third steps. The latter interval is non-empty by the estimate for $x_L^1$ at the beginning of the proof.

Concerning uniqueness, suppose $\hat x_1,\hat x_2\in[rI^1/(D_{10}-D_{00}),x_L^1)$ solve \eqref{hatx}. With either solution, as we have proved above, $V(x)$ is the value function of problem \eqref{MarkovConstr} for any $x_F^2\geq x_L^1$, and \eqref{MarkovConstr} is solved by both $\hat\tau_k:=\inf\{s\geq 0\mid x_s\not\in(\bar x,\hat x_k)\}$, $k=1,2$. In particular, for any $x_0\in[x_1,x_2]$,
\begin{align*}
&& V(x_0)=x_0\frac{D_{10}-D_{00}}{r-\mu}-I^1&=E\biggl[\int_{\hat\tau_2}^\infty e^{-rs}\bigl(x_s(D_{10}-D_{00})-rI^1\bigr)\,ds\biggr] && \displaybreak[0]\\
&\Rightarrow & 0&=E\biggl[\int_0^{\hat\tau_2}e^{-rs}\bigl(x_s(D_{10}-D_{00})-rI^1\bigr)\,ds\biggr]. &&
\end{align*}
Thus, letting $\check\tau_1:=\inf\{s\geq 0\mid x_s\leq\hat x_1\}\leq\hat\tau_2$ and still $x_0\in[x_1,x_2]$, 
\begin{align*}
0&=E\biggl[\int_0^{\hat\tau_2} e^{-rs}\bigl(x_s(D_{10}-D_{00})-rI^1\bigr)\,ds\biggr] \\
&=E\biggl[\int_0^{\check\tau_1\wedge\hat\tau_2}e^{-rs}\bigl(x_s(D_{10}-D_{00})-rI^1\bigr)\,ds+\int_{\check\tau_1\wedge\hat\tau_2}^{\hat\tau_2}e^{-rs}\bigl(x_s(D_{10}-D_{00})-rI^1\bigr)\,ds\biggr].
\end{align*}
The second integral vanishes itself in expectation, whereas the first integrand is strictly positive for $x_s\in(\hat x_1,\hat x_2)$. Therefore the latter interval must be empty.

The proof is complete for $\hat x\leq x_F^2$. Finally, if $rI^1/(D_{10}-D_{00})<x_F^2<\hat x$, then the ``$\leq$'' in \eqref{hatx} that we derived above for the candidate $x=rI^1/(D_{10}-D_{00})$ must be strict, and thus also ``$<$'' must hold in \eqref{hatx} for $x_F^2$, because otherwise $\hat x\leq x_F^2$ by continuity of $(\beta_1-1)A(x)x^{\beta_1}+(\beta_2-1)B(x)x^{\beta_2}$. Now the verification argument above applies if we consider instead $\hat x:=x_F^2$ with ``$\leq$'' in \eqref{hatx}.
\end{proof}

\begin{proof}[{\bf Proof of Proposition \ref{prop:simeqlPK}}]
The stopping times $\tau_J(\vartheta):=\inf\{t\geq\vartheta\mid x_t\geq x_J\}$, $\vartheta\in\T$, satisfy time consistency $\vartheta'\leq\tau_J(\vartheta)\Rightarrow\tau_J(\vartheta')=\tau_J(\vartheta)$ for any two $\vartheta\leq\vartheta'\in\T$ by construction. $\tau_J(\vartheta)$ is a mutual best reply at $\vartheta$ if the conditions from Proposition \ref{prop:devi} hold. By $x_J\geq x_F^2$, $F^2_{\tau_J(\vartheta)}=M^2_{\tau_J(\vartheta)}$. Under the current specification it suffices to verify conditions \ref{maxM_D} and \ref{maxL_D} for firm $1$.

Condition \ref{maxM_D} holds as waiting until the threshold ${x_J}\leq x_M^1$ is optimal for the constrained problem of stopping $M^1_t$ up to it by Lemma \ref{lem:thresconstr}; cf.\ the unconstrained problem \eqref{sim_i}. Analogously, the threshold $\min\{x_J,x_L^1\}$ solves problem \eqref{monoDi}. Thus condition \ref{maxL_D} holds if $x_L^1\geq x_F^2$ or, using the strong Markov property, if $0\geq D_J(x):=L^1_0-E\bigl[M^1_{\tau_J(0)}\bigr]$ given $x_0=x\in[x_L^1,x_F^2)$.

By Proposition \ref{prop:devi}, if $x_L^1<x_F^2$ solves \eqref{monoDi} and we let $\tau(x)=\inf\{t\geq 0\mid x_t\geq x\}\leq\tau_F^2(0)$ for any $x\in[x_L^1,x_F^2)$, then $D_J(x_L^1)\geq E\bigl[L^1_{\tau(x)}-M^1_{\tau_J(0)}\bigr]=E[D_J(x)]$, where the last identity is due to $x_{\tau(x)}=x$. Therefore it remains to verify $D_J(x_L^1)\leq 0$ for $x_L^1<x_F^2$.

If $x_L^1<x_F^2$, the former is finite and we can write $\lambda:={x_J}/x_L^1\in[1,\infty]$. Then also $x_L^1<x_J$ and thus (cf.\ equations (9), (10) in \cite{PawlinaKort06}, accounting for possibly $x_F^2=\infty$)
\begin{align*}
0\overset{!}{\geq}D_J(x_L^1)&=\frac{x_L^1D_{10}}{r-\mu}-I^1-\frac{x_F^2(D_{10}-D_{11})}{r-\mu}\biggl(\frac{x_L^1}{x_F^2}\biggr)^{\beta_1}\\[4pt]
&-\frac{x_L^1D_{00}}{r-\mu}-\biggl(\frac{{x_J}(D_{11}-D_{00})}{r-\mu}-I^1\biggr)\biggl(\frac{x_L^1}{{x_J}}\biggr)^{\beta_1}\displaybreak[0]\\[4pt]
&=\frac{\beta_1}{\beta_1-1}I^1-I^1-\frac{\beta_1}{\beta_1-1}I^1\frac{D_{10}-D_{11}}{D_{10}-D_{00}}\biggl(\frac{I^1}{I^2}\frac{(D_{11}-D_{01})^+}{D_{10}-D_{00}}\biggr)^{\beta_1-1}\\[4pt]
&-\biggl(\lambda\frac{\beta_1}{\beta_1-1}I^1\frac{D_{11}-D_{00}}{D_{10}-D_{00}}-I^1\biggr)\lambda^{-\beta_1}.
\end{align*}
Rearranging yields condition \eqref{kappa}. The derivative of the square bracket in \eqref{kappa} w.r.t.\ $\lambda$ is strictly negative for $\lambda\in(0,x_M^1/x_L^1)$ given $\beta_1>1$, where it is important to note that $\lambda(D_{11}-D_{00})<D_{10}-D_{00}$, because $D_{10}>D_{00}$ for $x_L^1<x_F^2$ and $(D_{10}-D_{00})/(D_{11}-D_{00})=x_M^1/x_L^1>\lambda$ if $D_{11}>D_{00}$. Using the latter fact also shows that for $\lambda=x_M^1/x_L^1$, the square bracket is either $1-(x_L^1/x_M^1)^{\beta_1}\geq 0$ or $1$, if $x_M^1$ is finite or not, respectively.

Finally, necessity of $D_J(x_L^1)\leq 0$ for $x_L^1<x_F^2\leq x_J$ is obvious.
\end{proof}

\begin{proof}[{\bf Proof of Proposition \ref{prop:seqeqlPK}}]
By the hypothesis $x_L^1<x_F^2$ and Lemmas \ref{lem:stoptau^1_F} and \ref{lem:thresconstr}, problem \eqref{maxL_S} is solved by $\tau_S(\vartheta):=\tau_L^1(\vartheta)=\inf\{t\geq\vartheta\mid x_t\geq x_L^1\}\in\T$ for any $\vartheta\in\T$. These stopping times for firm $1$ satisfy time consistency $\vartheta'\leq\tau_S(\vartheta)\Rightarrow\tau_S(\vartheta')=\tau_S(\vartheta)$ for any two $\vartheta\leq\vartheta'\in\T$ by construction, as do firm $2$'s stopping times $\tau_F^2(\vartheta)=\inf\{t\geq\vartheta\mid x_t\geq x_F^2\}$.

To verify the equilibrium at $\vartheta\in\T$ by Corollary \ref{cor:seqeql}, note that now $\pi^{L1}_\cdot-\pi^{01}_\cdot\geq\pi^{L2}_\cdot-\pi^{02}_\cdot$, whence problem \eqref{monoDi} is solved by $\tau_D^2(\vartheta')=\tau_S(\vartheta)\vee\vartheta'$. Thus we have an equilibrium if $x_L^1\geq x_F^1$ ($\geq\bar x$) or, using the strong Markov property, if $0\geq D_S(x):=L^2_0-E\bigl[F^2_{\tau_S(0)}\bigr]$ given $x_0=x\in[x_L^1,x_F^1)$.

By Proposition \ref{prop:devi}, if $x_L^1<x_F^1$ and we let $\tau(x)=\inf\{t\geq 0\mid x_t\geq x\}\leq\tau_F^1(0)$ for any $x\in[x_L^1,x_F^1)$, then $D_S(x_L^1)\geq E\bigl[L^2_{\tau(x)}-F^2_{\tau_S(0)}\bigr]=E[D_S(x)]$, where the last identity is due to $x_{\tau(x)}=x$. Therefore it remains to verify $D_S(x_L^1)\leq 0$ for $x_L^1<x_F^1$, i.e., $x_L^1\not\in(\ubar x,\bar x)$. The latter condition is (cf.\ equations (8), (9) in \cite{PawlinaKort06}, accounting for possibly $x_F^1=x_F^2=\infty$)
\begin{align*}
0\overset{!}{\geq}D_S(x_L^1)&=\frac{x_L^1D_{10}}{r-\mu}-I^2-\frac{x_F^1(D_{10}-D_{11})}{r-\mu}\biggl(\frac{x_L^1}{x_F^1}\biggr)^{\beta_1}\\[4pt]
&-\frac{x_L^1D_{01}}{r-\mu}-\biggl(\frac{x_F^2(D_{11}-D_{01})}{r-\mu}-I^2\biggr)\biggl(\frac{x_L^1}{x_F^2}\biggr)^{\beta_1}\\[4pt]
&=\frac{\beta_1}{\beta_1-1}I^1\frac{D_{10}-D_{01}}{D_{10}-D_{00}}-I^2-\frac{\beta_1}{\beta_1-1}I^1\frac{D_{10}-D_{11}}{D_{10}-D_{00}}\biggl(\frac{(D_{11}-D_{01})^+}{D_{10}-D_{00}}\biggr)^{\beta_1-1}\\[4pt]
&-\frac{1}{\beta_1-1}I^2\biggl(\frac{I^1}{I^2}\frac{(D_{11}-D_{01})^+}{D_{10}-D_{00}}\biggr)^{\beta_1}.
\end{align*}
Rearranging yields condition \eqref{kappaseql}. The derivative of its LHS w.r.t.\ $I^2/I^1$ is strictly positive for $x_L^1<x_F^1$ given $\beta_1>1$, because then $(D_{11}-D_{00})^+/(D_{10}-D_{00})<1$. By the same fact the RHS of \eqref{kappaseql} is strictly positive.

To show necessity of $x_L^1\not\in(\ubar x,\bar x)$, suppose the contrary, whence $x_L^1<x_F^1$ and $D_S(x_L^1)>0$ by definition. For any $x\leq x_L^1$,
\begin{align*}
D_S(x)&=E\Bigr[D_S(x_L^1)\Bigr]+L^2_0-E\Bigr[L^2_{\tau_S(0)}\Bigr]=D_S(x_L^1)+E\biggl[\int_0^{\tau_S(0)}(\pi^{L2}_s-\pi^{02}_s)\,ds\biggr] \\
&=D_S(x_L^1)+\frac{x(D_{10}-D_{00})}{r-\mu}-I^2-\frac{x_L^1(D_{10}-D_{00})}{r-\mu}\biggl(\frac{x}{x_L^1}\biggr)^{\beta_1},
\end{align*}
which converges continuously to $D_S(x_L^1)>0$ as $x\to x_L^1$. Thus $D_S(x)>0$ for some $x<x_L^1$.
\end{proof}

\singlespacing
\bibliography{jhs}

\begin{thebibliography}{}

\bibitem[\protect\citeauthoryear{Al{\'o}s-Ferrer and
  Ritzberger}{Al{\'o}s-Ferrer and Ritzberger}{2008}]{Alos-FerrerRitzberger08}
Al{\'o}s-Ferrer, C. and K.~Ritzberger (2008).
\newblock Trees and extensive forms.
\newblock {\em J.\ Econ.\ Theory\/}~{\em 143}, 216--250.

\bibitem[\protect\citeauthoryear{Boyarchenko and Levendorski{\u\i}}{Boyarchenko
  and Levendorski{\u\i}}{2014}]{BoyarchenkoLevendorskii14}
Boyarchenko, S. and S.~Levendorski{\u\i} (2014).
\newblock Preemption games under {L}{\'e}vy uncertainty.
\newblock {\em Games Econ.\ Behav.\/}~{\em 88}, 354--380.

\bibitem[\protect\citeauthoryear{El~Karoui}{El~Karoui}{1981}]{ElKaroui81}
El~Karoui, N. (1981).
\newblock Les aspects probabilistes du contr{\^o}le stochastique.
\newblock In P.-L. Hennequin (Ed.), {\em Ecole d'Et{\'e} de Probabilit{\'e}s de
  Saint-Flour IX-1979}, Volume 876 of {\em Lecture Notes in Math.}, pp.\
  73--238. Berlin Heidelberg New York: Springer.

\bibitem[\protect\citeauthoryear{Fudenberg and Tirole}{Fudenberg and
  Tirole}{1985}]{FudenbergTirole85}
Fudenberg, D. and J.~Tirole (1985).
\newblock Preemption and rent equalization in the adoption of new technology.
\newblock {\em Rev.\ Econ.\ Stud.\/}~{\em 52\/}(3), 383--401.

\bibitem[\protect\citeauthoryear{Grenadier}{Grenadier}{1996}]{Grenadier96}
Grenadier, S.~R. (1996).
\newblock The strategic exercise of options: {D}evelopment cascades and
  overbuilding in real estate markets.
\newblock {\em J.\ Finance\/}~{\em 51\/}(5), 1653--1679.

\bibitem[\protect\citeauthoryear{Jacka}{Jacka}{1993}]{Jacka93}
Jacka, S.~D. (1993).
\newblock Local times, optimal stopping and semimartingales.
\newblock {\em Ann.\ Probab.\/}~{\em 21\/}(1), 329--339.

\bibitem[\protect\citeauthoryear{Laraki, Solan, and Vieille}{Laraki
  et~al.}{2005}]{Larakietal05}
Laraki, R., E.~Solan, and N.~Vieille (2005).
\newblock Continuous-time games of timing.
\newblock {\em J.\ Econ.\ Theory\/}~{\em 120}, 206--238.

\bibitem[\protect\citeauthoryear{Mason and Weeds}{Mason and
  Weeds}{2010}]{MasonWeeds10}
Mason, R. and H.~Weeds (2010).
\newblock Investment, uncertainty and pre-emption.
\newblock {\em Int.\ J.\ Ind.\ Organ.\/}~{\em 28\/}(3), 278--287.

\bibitem[\protect\citeauthoryear{Pawlina and Kort}{Pawlina and
  Kort}{2006}]{PawlinaKort06}
Pawlina, G. and P.~M. Kort (2006).
\newblock Real options in an asymmetric duopoly: Who benefits from your
  competitive disadvantage?
\newblock {\em J.\ Econ.\ Manage.\ Strategy\/}~{\em 15\/}(1), 1--35.

\bibitem[\protect\citeauthoryear{Reinganum}{Reinganum}{1981}]{Reinganum81}
Reinganum, J.~F. (1981).
\newblock On the diffusion of new technology: A game theoretic approach.
\newblock {\em Rev.\ Econ.\ Stud.\/}~{\em 48\/}(3), 395--405.

\bibitem[\protect\citeauthoryear{Riedel and Steg}{Riedel and
  Steg}{2014}]{RiedelSteg14}
Riedel, F. and J.-H. Steg (2014).
\newblock Subgame-perfect equilibria in stochastic timing games.
\newblock Working Paper 524, Center for Mathematical Economics, Bielefeld
  University.

\bibitem[\protect\citeauthoryear{Simon and Stinchcombe}{Simon and
  Stinchcombe}{1989}]{SimonStinchcombe89}
Simon, L.~K. and M.~B. Stinchcombe (1989).
\newblock Extensive form games in continuous time: Pure strategies.
\newblock {\em Econometrica\/}~{\em 57\/}(5), 1171--1214.

\bibitem[\protect\citeauthoryear{Steg and Thijssen}{Steg and
  Thijssen}{2015}]{StegThijssen15}
Steg, J.-H. and J.~J.~J. Thijssen (2015).
\newblock Quick or persistent? {S}trategic investment demanding versatility.
\newblock Working Paper 541, Center for Mathematical Economics, Bielefeld
  University.

\bibitem[\protect\citeauthoryear{Weeds}{Weeds}{2002}]{Weeds02}
Weeds, H. (2002).
\newblock Strategic delay in a real options model of {R}\&{D} competition.
\newblock {\em Rev.\ Econ.\ Stud.\/}~{\em 69}, 729--747.

\end{thebibliography}
\end{document}